\newcommand*\rel@kern[1]{\kern#1\dimexpr\macc@kerna}
\newcommand*\widebar[1]{%
  \begingroup
  \def\mathaccent##1##2{%
    \rel@kern{0.8}%
    \overline{\rel@kern{-0.8}\macc@nucleus\rel@kern{0.2}}%
    \rel@kern{-0.2}%
  }%
  \macc@depth\@ne
  \let\math@bgroup\@empty \let\math@egroup\macc@set@skewchar
  \mathsurround\z@ \frozen@everymath{\mathgroup\macc@group\relax}%
  \macc@set@skewchar\relax
  \let\mathaccentV\macc@nested@a
  \macc@nested@a\relax111{#1}%
  \endgroup
}
\renewcommand{\bar}{\widebar}
\newtheorem{theorem}{Theorem}
\newtheorem{lemma}[theorem]{Lemma}
\newtheorem{proposition}[theorem]{Proposition}
\newtheorem{corollary}[theorem]{Corollary}
\newtheorem{definition}[theorem]{Definition}
\newtheorem{conjecture}[theorem]{Conjecture}
\newtheorem{fact}[theorem]{Fact}
\newtheorem{claim}[theorem]{Claim}
\newtheorem{open}[theorem]{Open Problem}
\newcommand{\be}{\begin{equation}}
\newcommand{\ee}{\end{equation}}
\newcommand{\B}{\{0,1\}}
\newcommand{\OR}{\textsc{OR}}
\newcommand{\NAND}{\textsc{Nand}}
\DeclareMathOperator{\R}{R}
\DeclareMathOperator{\Q}{Q}
\DeclareMathOperator{\Dom}{Dom}
\DeclareMathOperator{\poly}{poly}
\DeclareMathOperator{\bR}{\mathbb{R}}
\DeclareMathOperator{\cost}{cost}
\DeclareMathOperator{\Stab}{Stab}
\DeclareMathOperator{\err}{err}
\newcommand{\triv}{\textsc{Triv}}
\newcommand{\kFaultDirectTrees}{\textsc{1-Fault Direct Trees}}
\DeclareMathOperator{\argmax}{argmax}
\newcommand{\bN}{\mathbb{N}}
\newcommand{\bE}{\mathbb{E}}
\newcommand{\eq}[1]{\hyperref[eq:#1]{(\ref*{eq:#1})}}
\renewcommand{\sec}[1]{\hyperref[sec:#1]{Section~\ref*{sec:#1}}}
\newcommand{\thm}[1]{\hyperref[thm:#1]{Theorem~\ref*{thm:#1}}}
\newcommand{\lem}[1]{\hyperref[lem:#1]{Lemma~\ref*{lem:#1}}}
\newcommand{\defn}[1]{\hyperref[def:#1]{Definition~\ref*{def:#1}}}
\newcommand{\prop}[1]{\hyperref[prop:#1]{Proposition~\ref*{prop:#1}}}
\newcommand{\cor}[1]{\hyperref[cor:#1]{Corollary~\ref*{cor:#1}}}
\newcommand{\fig}[1]{\hyperref[fig:#1]{Figure~\ref*{fig:#1}}}
\newcommand{\tab}[1]{\hyperref[tab:#1]{Table~\ref*{tab:#1}}}
\newcommand{\alg}[1]{\hyperref[alg:#1]{Algorithm~\ref*{alg:#1}}}
\newcommand{\app}[1]{\hyperref[app:#1]{Appendix~\ref*{app:#1}}}
\newcommand{\conj}[1]{\hyperref[conj:#1]{Conjecture~\ref*{conj:#1}}}
\newcommand{\chap}[1]{\hyperref[chap:#1]{Chapter~\ref*{chap:#1}}}
\newcommand{\clm}[1]{\hyperref[clm:#1]{Claim~\ref*{clm:#1}}}
\newcommand{\fct}[1]{\hyperref[fct:#1]{Fact~\ref*{fct:#1}}}
\newcommand{\opn}[1]{\hyperref[opn:#1]{Open Problem~\ref*{opn:#1}}}
\newcommand{\entr}{\textsc{entrance}}
\newcommand{\exit}{\textsc{exit}}
\newcommand{\rooot}{\textsc{root}}
\DeclareMathOperator{\Ex}{E}
\newcommand{\eps}{\varepsilon}
\def\candy#1#2{
\begin{scope}[shift={#1},scale={#2}]
    \draw
    (0,-1) -- (0,1) -- (1,0) -- (2,1) -- (3,0) -- (4,1) -- (4,-1) -- (3,0) -- (2,-1) -- (1,0) -- (0,-1);
\end{scope}
}
\def\doublebowtie#1#2{
\begin{scope}[shift={#1},scale={#2}]
    \draw
    (0,-1) -- (0,1) -- (1,0) -- (1.9,1) -- (1.9,-1) -- (1,0) -- (0,-1);
    \draw
    (2.1,-1) -- (2.1,1) -- (3,0) -- (4,1) -- (4,-1) -- (3,0) -- (2.1,-1);
\end{scope}
}
\begin{document}

\title{Symmetries, graph properties, and quantum speedups}

\author{
Shalev Ben{-}David$^{1}$ \quad
Andrew M. Childs$^{2,3}$ \quad
Andr\'as Gily\'en$^{4,5}$ \\[2pt]
William Kretschmer$^6$ \quad
Supartha Podder$^7$ \quad
Daochen Wang$^{3,8}$ \\[10pt]
\small$^1$ Cheriton School of Computer Science, University of Waterloo \\
\small$^2$ Department of Computer Science and Institute for Advanced Computer Studies, University of Maryland \\
\small$^3$ Joint Center for Quantum Information and Computer Science, University of Maryland \\
\small$^4$ Institute for Quantum Information and Matter, California Institute of Technology \\
\small$^5$ Simons Institute for the Theory of Computing, University of California, Berkeley \\
\small$^6$ Department of Computer Science, University of Texas at Austin \\
\small$^7$ Department of Mathematics and Statistics, University of Ottawa \\
\small$^8$ Department of Mathematics, University of Maryland
}

\date{}
\maketitle

\begin{abstract}
Aaronson and Ambainis (2009) and Chailloux (2018) showed that fully
symmetric (partial) functions do not admit exponential quantum query speedups. This raises a natural question:
how symmetric must a function be before it cannot
exhibit a large quantum speedup?

In this work, we prove that hypergraph symmetries in the adjacency matrix model allow at most a polynomial separation between randomized and quantum query complexities. We also show that, remarkably, permutation groups constructed out of these symmetries are essentially the \emph{only} permutation groups that prevent super-polynomial quantum speedups.
We prove this by fully characterizing the primitive
permutation groups that allow super-polynomial quantum
speedups.

In contrast, in the adjacency list model for bounded-degree graphs---where graph symmetry is manifested differently---we exhibit a property testing problem that shows an exponential quantum speedup. These results resolve open questions posed by Ambainis, Childs, and Liu (2010) and Montanaro and de Wolf (2013).
\end{abstract}

\clearpage
{\small\tableofcontents}
\clearpage

\section{Introduction}

One of the most fundamental problems in the field of quantum
computing is the question of when quantum algorithms
substantially outperform classical ones.
While polynomial quantum speedups are known in
many settings, super-polynomial quantum speedups are known
(or even merely conjectured) for only a few select problems. Crucially, exponential
quantum speedups only occur for certain ``structured'' problems such as period-finding (used in Shor's factoring algorithm
\cite{Shor94}) and Simon's problem \cite{Sim97}, in which the input
is known in advance to have a highly restricted form. In contrast,
for ``unstructured'' problems such as black-box search or
$\mathsf{NP}$-complete problems,
only polynomial speedups are known (and in some models,
it can be formally shown that only polynomial speedups are possible).

In this work, we are interested in formalizing and characterizing
the structure necessary for fast quantum algorithms. In particular,
we study the \emph{types of symmetries} a function can have
while still exhibiting super-polynomial quantum speedups.

\subsection{Prior work}

Despite the strong intuition in the field that structure is necessary
for super-polynomial quantum speedups, only a handful of works have attempted
to formalize this and characterize the required structure.
All of them study the problem in the query complexity (black-box) model
of quantum computation, which is a natural framework in which
both period-finding and Simon's problem can be formally shown to
give exponential quantum speedups (see \cite{BdW02} for a survey
of query complexity,
or \cite{Cle04} for a formalization of period-finding specifically).

In the query complexity model, the goal is to compute
a Boolean function $f\colon \Sigma^n\to\B$ using as few queries to
the bits of the input $x\in\Sigma^n$ as possible, where $\Sigma$ is some
finite alphabet. Each query specifies an index $i\in[n] := \{1,2,\ldots,n\}$
and receives the response $x_i\in\Sigma$. A query algorithm,
which may depend on $f$ but not on $x$,
must output $f(x)$ (with bounded error in the worst-case) after as few
queries as possible. Quantum query algorithms are allowed
to make queries in superposition, and we are interested in
how much advantage this gives them over randomized classical
algorithms (for formal definitions of these notions, see \cite{BdW02}).

Beals, Buhrman, Cleve, Mosca, and de Wolf \cite{BBC+01} showed
that all \emph{total Boolean functions} $f\colon \Sigma^n\to\B$
have a polynomial relationship
between their classical and quantum query complexities
(which we denote $\R(f)$ and $\Q(f)$, respectively).
This means that super-polynomial speedups are not possible in
query complexity unless we impose a promise on the input: that is,
unless we define $f\colon P\to\B$ with $P\subset\Sigma^n$, and allow
an algorithm computing $f$ to behave arbitrarily on inputs
outside of the promise set $P$. For such promise problems
(also called partial functions), provable exponential quantum speedups
are known. This is the setting in which Simon's problem and
period-finding reside.

The question, then, is what we can say about the structure
necessary for a partial Boolean function $f$ to exhibit
a super-polynomial quantum speedup. Towards this end,
Aaronson and Ambainis \cite{AA14} showed
that \emph{symmetric} functions do not allow super-polynomial quantum
speedups, even with a promise. Chailloux \cite{Cha18} improved
this result by reducing the degree of the polynomial
relationship between randomized and quantum algorithms for symmetric
functions, and removing a technical requirement on the symmetry
of those functions.\footnote{
Aaronson and Ambainis required the function
to be symmetric both under permuting the $n$ bits of the input,
and under permuting the alphabet symbols in $\Sigma$;
Chailloux showed that the latter is not necessary.}

Other work attempted to characterize the structure necessary for
quantum speedups in alternative ways. Ben-David \cite{Ben16} showed that certain types of symmetric promises do not admit any function with a super-polynomial quantum speedup, a generalization of \cite{BBC+01}
(who showed this when the promise set is $\Sigma^n$).
Aaronson and Ben-David \cite{AB16} showed that small promise sets, which contain
only $\poly(n)$ inputs out of $|\Sigma|^n$, also do not admit
functions separating quantum and classical algorithms by
more than a polynomial factor.

A class of problems with significant symmetry, though much less than full permutation symmetry, is the class of graph properties. For such problems, the input describes a graph, and the output depends only on the isomorphism class of that graph. Thus the vertices can be permuted arbitrarily, but such a permutation induces a structured permutation on the edges, about which queries provide information.

The setting of graph property \emph{testing} provides
a natural class of partial graph properties. Here we are promised that the input graph either has a property, or is $\epsilon$-far from having the property, meaning that we must change at least an $\epsilon$ fraction of the edges to make the property hold.
Graph property testing has been extensively studied since its introduction by Goldreich, Goldwasser, and Ron \cite{GGR98}.

The behavior of classical graph property testers can differ substantially depending on the model in which the input graph is specified. For example, in the adjacency matrix model, Alon and Krivelevich \cite{AK02} proved that bipartiteness can be tested in $\tilde{O}(1/\epsilon^2)$ queries, which is surprisingly independent of the  size of the input graph. In contrast, in the adjacency list model for bounded-degree graphs, Goldreich and Ron \cite{GR97} proved that $\Omega(\sqrt{n})$ queries are needed to test bipartiteness of $n$-vertex graphs.

Quantum algorithms for testing properties of bounded-degree graphs in the adjacency list model were studied by Ambainis, Childs, and Liu \cite{ACL11}. They gave upper bounds of $\tilde{O}(n^{1/3})$ for testing bipartiteness and expansion, demonstrating polynomial quantum speedups. Furthermore, they showed that at least $\Omega(n^{1/4})$ quantum queries are required to test expansion, ruling out the possibility of an exponential quantum speedup. This work naturally raises the question (also highlighted by Montanaro and de Wolf \cite{MdW13}) of whether there can ever be exponential quantum speedup for graph property testing, or for graph properties more generally.

\subsection{Our contributions}

In this work, we extend the results of Aaronson-Ambainis and
Chailloux to other symmetry groups. We characterize general symmetries of functions as follows.

\begin{definition}\label{def:symmetric_function}
Let $f\colon P\to\B$ be a function with $P\subseteq\Sigma^n$,
where $\Sigma$ is a finite alphabet and $n\in\bN$. We say
that $f$ is \emph{symmetric} with respect to a permutation group $G$
acting on domain $[n]$ if for all $x\in P$ and all $\pi\in G$,
the string $x\circ\pi$ defined by $(x\circ\pi)_i\coloneqq x_{\pi(i)}$
satisfies $x\circ\pi\in P$ and $f(x\circ\pi)=f(x)$.
\end{definition}

The case where $G=S_n$, the fully-symmetric
permutation group, is the scenario handled in Chailloux's work \cite{Cha18}:
he showed that $\R(f)=O(\Q(f)^3)$ if $f$ is symmetric under $S_n$.
Aaronson and Ambainis \cite{AA14}
required an even stronger symmetry property.
We note that when $\Sigma$ is large, say $|\Sigma|=n$ or larger,
the class of functions symmetric under $S_n$ is already highly
nontrivial: among others,
it includes functions such as $\textsc{Collision}$,
an important function whose quantum query complexity
was established in \cite{AS04}; $k$-$\textsc{Sum}$,
whose quantum query complexity required the negative-weight
adversary to establish \cite{BS13}; and
$k$-$\textsc{Distinctness}$, whose quantum query complexity is
still open \cite{BKT18}. Additionally, computational
geometry functions such as $\textsc{ClosestPair}$
(a function studied in recent work by Aaronson,
Chia, Lin, Wang, and Zhang \cite{ACL+19}) are typically symmetric
under $S_n$, as the points are usually represented as alphabet
symbols. If we round the alphabet to be finite, the $G=S_n$
case already shows that no computational geometry problem
of this form can have super-cubic quantum speedups.

In this work, we examine what happens when we
relax the full symmetry $S_n$ to smaller symmetry groups $G$.
We introduce some tools for showing that particular classes
of permutation groups $G$ do not allow super-polynomial quantum speedups;
that is, we provide tools for showing that every $f$ symmetric
with respect to $G$ satisfies $\Q(f)=\R(f)^{\Omega(1)}$.
Our first main result is the following theorem,
in which $G$ is the \emph{graph symmetry}:
the permutation group acting on strings of length $\binom{n}{2}$
(which represent the possible edges of a graph), which includes
all permutations of the edges which are induced by one of the $n!$
relabelings of the $n$ vertices. Functions that take
the adjacency matrix of a graph as input, and whose output
depends only on the isomorphism class of the graph (not on the labeling of its vertices),
are symmetric with respect to the graph symmetry $G$.

\begin{theorem}[{Informal version of \cor{formal_graph}}]\label{thm:main_graph}
Any Boolean function $f$ defined on the adjacency matrix of
a graph (and symmetric with respect to renaming the vertices of the
graph) has
$\R(f) = O(\Q(f)^6)$. This holds even if $f$ is a partial function.
\end{theorem}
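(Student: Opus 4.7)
The plan is to classically simulate any quantum algorithm for $f$ by extending the Aaronson--Ambainis / Chailloux approach for fully symmetric functions to the graph symmetry group. The key steps are the polynomial method, symmetrization over $S_m$ acting on edges, reduction to subgraph-density estimation, and random induced-subgraph sampling.

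Let $m$ be the number of vertices and set $Q := \Q(f)$. By the polynomial method of Beals et al.\ \cite{BBC+01}, $f$ is $\tfrac{1}{3}$-approximated by a multilinear polynomial $p$ on the $\binom{m}{2}$ edge variables of degree $d \leq 2Q$. Averaging $p(x \circ \pi)$ over $\pi \in S_m$ (acting on $\binom{[m]}{2}$ by its natural action) produces a polynomial $\bar p$ of the same degree that is invariant under the graph symmetry and still approximates $f$ pointwise on the promise set.

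The central structural fact I would use is that any graph-invariant multilinear polynomial of degree $\leq d$ is a linear combination of subgraph density statistics $t_H$, where $H$ ranges over graphs on at most $d$ edges (hence at most $2d = O(Q)$ vertices). This follows because the orbits of degree-$\leq d$ monomials under the induced $S_m$-action are in bijection with isomorphism classes of such $H$, and each orbit sum equals $t_H$ up to a combinatorial normalization. It therefore suffices for the classical algorithm to approximate each $t_H$ appearing in $\bar p$ to sufficient precision. I would estimate each such $t_H$ by repeatedly sampling a uniformly random vertex subset $S \subseteq [m]$ of size $|V(H)| = O(Q)$, querying all $\binom{|S|}{2} = O(Q^2)$ induced edges, and returning the indicator that the labeled induced subgraph $G[S]$ is isomorphic to $H$. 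Each sample is an unbiased estimator of $t_H$ using $O(Q^2)$ queries, and by a Chernoff bound, $O(1/\eps^2)$ independent samples suffice to bring the empirical mean within additive error $\eps$ of $t_H$ with high probability. Plugging the estimates into $\bar p$ and thresholding then recovers $f$, provided every $t_H$ is accurate to some $\eps = 1/\poly(Q)$.

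The main obstacle is controlling how tightly each $t_H$ must actually be estimated, i.e.\ bounding the $\ell_1$ weight of the coefficients of $\bar p$ in the subgraph-density basis. This requires a Markov-type inequality for graph-symmetric polynomials, playing the same role that the univariate Markov inequality plays in Chailloux's argument for the fully symmetric case. A careful accounting of the number of relevant orbits and the magnitudes of their coefficients is needed to conclude that the required precision is only $\eps = \Omega(1/\poly(Q))$. Balancing this precision against the $O(Q^2)$ per-sample query cost then yields $\R(f) = O(Q^6)$; heuristically, the extra factor of $Q^3$ beyond Chailloux's $O(Q^3)$ bound for fully symmetric functions is exactly the overhead of moving from a single Hamming-weight invariant to the family of subgraph invariants arising from the smaller graph-symmetry group.
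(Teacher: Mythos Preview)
Your approach is genuinely different from the paper's, and it has a real gap that you yourself flag but do not close.

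The paper does not use the polynomial method or subgraph-density estimation at all. Instead it introduces \emph{well-shuffling} permutation groups: $G$ is well-shuffling with power $a$ if distinguishing a random element of $G$ from a function $[n]\to[n]$ of range at most $r$ requires $\Omega(r^{1/a})$ quantum queries. A general theorem (extending Chailloux's observation for $S_n$) shows that if $G$ is well-shuffling with power $a$ then every $G$-symmetric $f$ satisfies $\R(f)=O(\Q(f)^a)$. The graph symmetry is then shown to be well-shuffling with power $6$ by a short chain of reductions to the collision lower bound: $S_m$ has power $3$; passing to the induced action $S_m^{\langle 2\rangle}$ on ordered pairs doubles the exponent; and collapsing ordered to unordered pairs via a block system does not decrease the cost. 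No Markov-type inequality enters anywhere.

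Your gap is precisely the step you call ``the main obstacle'': bounding $\sum_H |c_H|$, the $\ell_1$ weight of $\bar p$ in the subgraph-density basis, by $\poly(Q)$. You do not prove this, and it is far from clear. The number of isomorphism types $H$ with at most $2Q$ edges is already super-polynomial in $Q$, and there is no known analogue of the univariate Markov inequality that controls the coefficients of a bounded graph-invariant multilinear polynomial in this basis. (Incidentally, Chailloux's argument for $S_n$ is \emph{not} the polynomial-method argument you attribute to it; it is exactly the small-range-function indistinguishability trick that this paper generalizes, so there is no pre-existing ``Chailloux-style'' Markov bound to extend.) Without the coefficient bound your sampling estimator has no error guarantee, and the final $O(Q^6)$ is a heuristic, not a derivation.
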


This theorem holds even when the alphabet of $f$ is non-Boolean.
We note that this is a strict generalization of the result
of Aaronson and Ambainis \cite{AA14}, since any fully-symmetric
function is necessarily symmetric under the graph
symmetry as well. It is also a generalization of \cite{Cha18},
except that our polynomial degree (power $6$) is larger than
the power $3$ of Chailloux. Our results also extend to other types of graph symmetries, including hypergraph symmetries and bipartite graph symmetries.

Next, we extend \thm{main_graph} to give a more general characterization of which classes of symmetries are inconsistent with super-polynomial quantum speedups.
Our main result in this regard is a dichotomy theorem for \emph{primitive} permutation groups. Primitive permutation groups are sometimes described as the ``building blocks'' of permutation groups, because arbitrary permutation groups can always be decomposed into primitive groups (in a certain formal sense). We give a complete classification of which primitive groups $G$ allow super-polynomial quantum speedups and which do not, in terms of the \emph{minimal base size} $b(G)$. The minimal base size is a key quantity in computational group theory that roughly captures the size of a permutation group $G$ relative to the number of points it acts on. Thus, the following theorem amounts to showing that symmetries of ``large'' primitive permutation groups do not allow large quantum speedups, while symmetries of ``small'' primitive permutation groups do.

\begin{theorem}[{Informal version of \cor{primitive_dichotomy}}]\label{thm:primitive_dichotomy_informal}
Let $G$ be a primitive permutation group acting on $[n]$, and let $b(G)$ denote the size of a minimal base for $G$. If $b(G) = n^{\Omega(1)}$, then $R(f) = Q(f)^{O(1)}$ for every $f$ that is symmetric under $G$. Otherwise, if $b(G) = n^{o(1)}$, then there exists a partial function $f$ that is symmetric under $G$ such that $R(f) = Q(f)^{\omega(1)}$.
\end{theorem}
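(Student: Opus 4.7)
The plan is to prove this dichotomy by pairing a classification theorem for primitive permutation groups with an explicit embedding construction: on the hard side, I exploit that primitive groups with large base are highly constrained; on the easy side, I turn the smallness of the group into quantum advantage.

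For the first direction (large base implies polynomial relation), I would invoke classical structural results on primitive permutation groups due to Bochert, Kantor, Babai, and Liebeck. These state that, up to a permutation isomorphism, any primitive permutation group $G$ on $[n]$ with $b(G) > C\log n$ must contain $A_m$ in its action on the $k$-subsets of $[m]$, for some $m,k$ with $n=\binom{m}{k}$. In particular, if $b(G)=n^{\Omega(1)}$, then every function symmetric under $G$ is (after the isomorphism) a $k$-uniform hypergraph property, and the hypergraph generalization of \thm{main_graph} (the full statement of \cor{formal_graph}) immediately yields $\R(f)=\Q(f)^{O(1)}$. The main technical care is to verify that the polynomial exponent in that corollary is uniform in $m$ and $k$, and to absorb the short list of exceptional primitive groups of moderate base size that the classification leaves as outliers; these turn out to satisfy $b(G)=n^{o(1)}$ and so fall cleanly into the other case.

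For the second direction (small base implies super-polynomial speedup), I would construct an explicit $G$-symmetric partial function $f$ by embedding a known super-polynomial quantum speedup such as Simon's problem or Forrelation. The strategy is to pick a minimal base $B$ of size $b(G)=n^{o(1)}$, use the coordinates in $B$ together with a constant number of $G$-images as ``address bits'' that select a Simon-like function on the remaining coordinates, and then close the resulting promise set under the full $G$-action to enforce $G$-symmetry. Because $|G|\le n^{b(G)}=2^{o((\log n)^2)}$, this $G$-symmetrization inflates the instance distribution by only a quasi-polynomial factor, so the classical lower bound from Simon survives as $n^{\omega(1)}$; meanwhile, a quantum algorithm recovers the hidden period in $\polylog(n)$ queries after using $O(b(G))$ additional queries to identify the ``address''.

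The main obstacle I expect is the first direction: pinning down the right quantitative form of the Kantor--Liebeck classification and checking that every primitive group on $n$ points with $b(G)=n^{\Omega(1)}$ really is (up to a query-complexity-preserving relabeling) a $k$-subset action of $A_m$ or $S_m$ for some $m,k$. A secondary subtlety on the easier direction is to verify that $G$-symmetrization of the Simon promise neither collapses the promise set nor opens a classical shortcut that exploits the group action, which requires an information-theoretic argument bounding how much information about the hidden period a classical distinguisher can extract from the $G$-orbit structure.
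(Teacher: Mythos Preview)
Your first direction is on the right track---Liebeck's theorem is indeed the key structural input---but you have understated it and misidentified the remaining work. Liebeck's classification does not say a large-base primitive group is (up to isomorphism) a $k$-subset action of $A_m$; it says $G$ is sandwiched between $A_m^{\times\ell}$ and $S_\ell\wr S_m$ in product action on $\binom{m}{p}^\ell$ points, for some $\ell,p,m$. The paper handles this extra wreath layer via its ``well-shuffling'' reductions (\lem{extension}, \thm{exponentiation}, \thm{classes}), ultimately reducing to the $(m-2)$-transitivity of $A_m$. More importantly, the polynomial exponent you hope is ``uniform in $m$ and $k$'' is \emph{not} uniform: it is $3\ell p$. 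The substantive step (\thm{primitive_large_base}) is a calculation showing that $b(G)\ge n^{\epsilon}$ forces $\ell p$ to be bounded by a constant depending only on $\epsilon$; without this, the argument does not close.

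Your second direction is workable in spirit but overcomplicated, and contains an arithmetic slip: $|G|\le n^{b(G)}$ with $b(G)=n^{o(1)}$ gives $|G|\le 2^{n^{o(1)}\log n}$, not $2^{o((\log n)^2)}$, so your ``quasi-polynomial inflation'' bound is wrong. The paper avoids any symmetrization-of-distributions argument entirely. Its construction (\prop{small_base}) enlarges the alphabet to $\Sigma\times[n]$ and promises that the input is a $G$-permutation of a string of the form $((y_1,1),(y_2,2),\dots,(y_n,n))$ with $y\in\Dom(f)$. A quantum algorithm queries the positions of a minimal base; since a permutation in $G$ is determined by its values on a base, this pins down $\pi$ exactly in $b(G)$ queries, after which it simply runs the $\Q(f)$-query algorithm for $f$. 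Thus $\Q(g)\le \Q(f)+b(G)$, and $\R(g)\ge\R(f)$ is immediate because restricting to $\pi=\mathrm{id}$ recovers $f$. No information-theoretic argument about $G$-orbit structure is needed.
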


By decomposing an arbitrary permutation group into primitive groups, we can extend the above theorem to show that if a permutation group $G$ does not allow super-polynomial speedups, then it must be constructed out of a constant number of primitive groups $H$ that satisfy $b(H) = n^{\Omega(1)}$, where $H$ acts on $n$ points. In the other direction, if any of the primitive factors $H$ of $G$ satisfy $b(H) = n^{o(1)}$, or if $G$ contains more than a constant number of primitive factors, then we exhibit a function that is symmetric under $G$ with a super-polynomial quantum speedup.

Remarkably, the proof of \thm{primitive_dichotomy_informal} also includes a strong characterization of what the primitive permutation groups $G$ that satisfy $b(G) = n^{\Omega(1)}$ look like. Indeed, we show that as a consequence of the classification of finite simple groups, all such groups essentially look like hypergraph symmetries or minor extensions thereof. Thus, in some sense, permutation groups built out of hypergraph symmetries are the \emph{only} permutation groups that are inconsistent with super-polynomial speedups. We consider this one of the most surprising consequences of our work; a priori, one could expect there to be many different kinds of symmetries that disallow super-polynomial speedups.

Finally, we return to the topic of graph properties.
While the aforementioned results show that no exponential speedup is possible for graph properties in the adjacency \emph{matrix} model, they do not resolve the original open question of Ambainis, Childs, and Liu \cite{ACL11}, which specifically addressed graph property testing in the adjacency \emph{list} model. Graph symmetries in this model manifest themselves in a different way that is not captured by \defn{symmetric_function}, so \thm{main_graph} does not apply.\footnote{Given a graph $x\colon [n]\times[d]\rightarrow[n]\cup\{*\}$ in the adjacency list model with $n$ vertices and bounded degree $d$, its isomorphism class essentially consists of graphs of the form $\pi^{-1}\circ x\circ (\pi\times \text{id}_{[d]})$ where $\pi\in S_n$ is a relabeling of vertices and $\pi^{-1}$ is defined as the inverse of $\pi$ but also maps $*$ to $*$. A function $f$ is a graph property in this model if and only if its domain $P$ is a union of such isomorphism classes and $f$ is constant on each isomorphism class.} In this model, we show the following,

\begin{theorem}[{Informal version of \thm{adjacency_list}}]\label{thm:main_graph_adjacency}
There exists a graph property testing problem in the adjacency list model for which there is an exponential quantum speedup.
\end{theorem}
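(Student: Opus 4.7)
The plan is to construct an explicit graph property in the adjacency list model that embeds Simon's problem, thereby lifting Simon's exponential function-query separation (namely $O(\log N)$ quantum versus $\Omega(\sqrt{N})$ classical queries) into this model. The key observation is that a bounded-degree graph can be designed so that adjacency-list queries simulate quantum function queries with only polylogarithmic overhead; any exponential function-query separation then lifts to an exponential adjacency-list separation.

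Concretely, I would build a graph on $n = \Theta(2^k \cdot k)$ vertices consisting of $2^k$ input gadgets. The $x$-th gadget is rooted at a vertex $v_x^{\text{in}}$ and forms a short path (or bounded-degree tree) of depth $\Theta(k)$ whose local adjacency ordering encodes the bits of $f(x) \in \{0,1\}^k$, terminating at the corresponding output vertex $v_{f(x)}^{\text{out}}$; to keep the degree bounded on the output side, each output vertex is accessed through an auxiliary binary routing tree. YES-instances correspond to $f$ being 2-to-1 with a hidden shift $s \neq 0$ satisfying $f(x) = f(x \oplus s)$, while NO-instances correspond to graphs $\epsilon$-far (in adjacency-list edit distance) from every YES-encoding.

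The quantum tester runs Simon's algorithm on the encoded $f$, simulating each oracle query by $O(k)$ adjacency-list queries starting at $v_x^{\text{in}}$, for total cost $O(k^2) = O(\polylog n)$. For the classical lower bound, any classical tester using $T$ adjacency-list queries converts into a classical Simon-decision algorithm using $O(T)$ function queries: each query to a gadget vertex can be answered from at most one value of $f$, so the graph can be simulated on the fly from a function oracle. The standard birthday-style lower bound for distinguishing 2-to-1 Simon functions from 1-to-1 functions then forces $T = \Omega(2^{k/2}/\sqrt{k}) = n^{\Omega(1)}$, yielding the claimed exponential separation.

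The main obstacle is ensuring the property-testing far-condition holds robustly: a NO-instance must be $\epsilon$-far in adjacency-list entries from every YES-graph, so that a classical tester cannot reject via a few locally-visible irregularities. To handle this, I would take NO-instances to encode random 1-to-1 functions $g$, and argue that every 2-to-1 Simon function differs from $g$ in $\Omega(2^k)$ values, which forces $\Omega(n)$ adjacency-list edits in the associated graph since each value of $f$ is encoded by $\Theta(k)$ gadget entries. A secondary subtlety is ensuring that the quantum verify-step after Simon's sampling accepts YES-instances and rejects far-from-Simon instances with high probability in $\polylog n$ queries, which I would address by standard amplification. Combining this far-ness argument with the indistinguishability of random 1-to-1 functions from Simon-structured 2-to-1 functions under $o(\sqrt{N})$ classical queries then yields a genuine property-testing separation with exponential quantum speedup in the adjacency list model.
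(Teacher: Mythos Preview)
Your proposal has a fundamental gap that breaks the quantum upper bound. A graph \emph{property} must be invariant under relabeling of vertices (the paper makes this explicit: the adjacency lists and vertex names are randomly permuted). Your quantum tester ``simulates each oracle query by $O(k)$ adjacency-list queries starting at $v_x^{\mathrm{in}}$,'' but under a random labeling there is no way to locate $v_x^{\mathrm{in}}$ for a \emph{specified} $x$ without search. Simon's algorithm requires preparing the coherent state $\sum_x |x\rangle|f(x)\rangle$ and then Fourier-transforming the $|x\rangle$ register; in your graph the register you can prepare is $\sum_i |i\rangle$ over vertex \emph{labels}, and after traversing a gadget you hold something like $\sum_i |i\rangle|x_i\rangle|f(x_i)\rangle$ where the correspondence $i\mapsto x_i$ is the unknown random labeling. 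You cannot uncompute $|i\rangle$ to isolate $|x_i\rangle$, so you never obtain the state Simon's algorithm needs. Adding an addressing tree so one can navigate from a root to $v_x^{\mathrm{in}}$ does not obviously help either: you must first find that root among $n$ anonymous vertices, and even if every vertex can reach the root, the neighbor orderings at each branching point are also randomly permuted in this model, so the ``bit-$0$ child vs.\ bit-$1$ child'' distinction your encoding relies on is not isomorphism-invariant.

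This is exactly why the paper does \emph{not} embed Simon's problem. Instead it builds the property out of the welded-trees problem of Childs et al., whose quantum speedup comes from a quantum walk that needs only local adjacency information and no global indexing. The construction is considerably more elaborate than a direct embedding: many copies of a ``candy'' graph (two welded binary trees with antenna trees attached at the roots), with double ``advice'' edges connecting body vertices to antenna vertices and self-loops on roots encoding, via a parity condition on root pairs, which body vertices are weld vertices. The quantum algorithm reads this advice by running the welded-trees quantum walk to find the paired root of any given root; once weld vertices are marked, a classical-style local tester verifies the candy structure. The classical lower bound comes not from Simon but from the welded-trees lower bound: random YES instances (welded candies) are indistinguishable from random NO instances (self-welded ``double-bow-tie'' graphs) to any sub-exponential classical walker, and the NO instances are shown to be $\Omega(1)$-far from the property because self-welded trees are far from bipartite while welded trees are bipartite.
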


This is in stark contrast to the situation in the adjacency matrix case.
Together, \thm{main_graph} and \thm{main_graph_adjacency} fully settle the open question about the possibility of exponential quantum speedup for graph properties, showing that its answer is highly model-dependent: exponential speedup is impossible in the adjacency matrix model, but is possible in the adjacency list model, even in the restrictive setting of graph property \emph{testing}.

\subsection{Techniques}

Here we briefly discuss the techniques used to prove the main results.
\subsubsection{Well-shuffling symmetries do not allow speedups}

Our analysis begins with a basic observation of Chailloux \cite{Cha18}.
Suppose that $f$ is symmetric under the full symmetric group $S_n$ acting on $[n]$.
Zhandry \cite{Zha13} showed that distinguishing
a random element of $S_n$ from a random small-range
function $\alpha\colon [n]\to[n]$ with $|\alpha([n])|=r$ requires
$\Omega(r^{1/3})$ quantum queries.\footnote{Actually,
we show that a version of Zhandry's result that is sufficient
for our purposes follows easily from the collision lower bound,
so his techniques are not necessary for our results.}
Now, if $Q$ is a quantum
algorithm solving $f$ using $T$ queries, then $Q$ also outputs $f(x)$
on input $x\circ \pi$ (the input $x$ with bits shuffled according to
$\pi$) for any $\pi\in S_n$, since $f$ is symmetric under $S_n$.
In particular, $Q(x\circ\pi)$ for a random $\pi\in S_n$ still
outputs $f(x)$. However, the $T$-query algorithm $Q$ cannot
distinguish a random $\pi\in S_n$ from a random function
$\alpha\colon [n]\to[n]$ with range $r\approx T^3$. Therefore $Q$
must output $f(x)$ with at most constant error even when run on
$x\circ\alpha$ for a random small-range function $\alpha$.
This property can be used to simulate $Q$ classically: a classical
algorithm $R$ can simply sample a small-range function $\alpha$,
explicitly query the entire string $x\circ \alpha$ (using only $O(T^3)$ queries since $\alpha$ has range $O(T^3)$),
and then simulate $Q$ on the string $x\circ\alpha$. This is
an $O(T^3)$-query classical algorithm for computing $f$,
created from a $T$-query quantum algorithm for $f$.

The above trick can be generalized from the fully-symmetric group $S_n$ to other permutation groups $G$, provided we can show
that it is hard for a $T$-query quantum algorithm to distinguish $G$ from a function with range $\poly(T)$.
The question of whether there exists an arbitrary symmetric function
$f$ with a quantum speedup is therefore reduced to the question
of whether the concrete task of distinguishing $G$ from
the set of all small-range functions can be done quickly
using a quantum algorithm. That is, if $D_{n,r}$ is the set
of all strings in $[n]^n$ that use only $r$ unique symbols,
then we care about the quantum query cost of distinguishing
$D_{n,r}$ from $G$; if this cost is $r^{\Omega(1)}$, then
no function that is symmetric under $G$ can exhibit a super-polynomial
quantum speedup. In this case, we call $G$ \emph{well-shuffling}.

We show that the well-shuffling property is preserved under various
operations one might perform on a permutation group, which allows us to prove
that many permutation groups are well-shuffling
simply by reduction to $S_n$. As an example, for undirected graph properties, the relevant transformation is from a permutation group $G$ acting on a set $[n]$ to the induced action $H$ on the $\binom{n}{2}$ unordered pairs of $[n]$. We then show that if $G$ is well-shuffling, then $H$ is also well-shuffling, because the problem of distinguishing $G$ from a small range function has a reduction to the problem of distinguishing $H$ from a small range function. More generally, this reduction works for ordered pairs or for tuples of any constant length.

\subsubsection{Classification of symmetries with and without speedups}

To characterize which permutation groups allow super-polynomial quantum speedups, we first show that large quantum speedups exist for any sufficiently ``small'' permutation group. In particular, if the minimal base size $b(G)$ of a permutation group $G$ acting on $n$ points satisfies $b(G) = n^{o(1)}$, then we exhibit a partial function that is symmetric under $G$ with a super-polynomial quantum speedup. We construct such a function by starting with an arbitrary promise problem that has a sufficiently large separation between randomized and quantum query complexities (e.g., Simon's problem \cite{Sim97} or Forrelation \cite{AA15}). We then modify it to be symmetric under $G$ in a way that preserves the super-polynomial separation between randomized and quantum query complexities.

One might wonder whether a sort of converse holds, i.e., whether $b(G) = n^{\Omega(1)}$ implies that $G$ is well-shuffling, and therefore that all partial functions symmetric under $G$ admit at most a polynomial quantum speedup. We show this for primitive permutation groups $G$, essentially because the structure of ``large'' primitive groups is very well-understood. We base our proof on a theorem due to Liebeck \cite{Lie84} that characterizes minimal base sizes of primitive groups via the classification of finite simple groups. Liebeck's theorem allows us to show that primitive groups $G$ with $b(G) = n^{\Omega(1)}$ can all be constructed out of the symmetric group via transformations that preserve the well-shuffling property. As a result, the quantity $b(G)$ completely characterizes whether a primitive permutation group admits a super-polynomial speedup. Moreover, the transformations involved are the same transformations we use to show that hypergraph symmetries are well-shuffling, so primitive groups that satisfy $b(G) = n^{\Omega(1)}$ all essentially look like hypergraph symmetries.

To go beyond primitive groups, we use the fact that an arbitrary permutation group can be decomposed into transitive groups (by looking at its orbits), which can then be decomposed into primitive groups (by looking at its nontrivial block systems). We show that for a group $G$, if any of the primitive factors in such a decomposition of $G$ are consistent with super-polynomial speedups, or if the number of primitive factors is $\omega(1)$, then there exists a partial function that is symmetric under $G$ with a super-polynomial quantum speedup.
These functions are generally easy to construct, though in one case that involves tree-like symmetries, we make essential use of the $\kFaultDirectTrees$ problem of Zhan, Kimmel, and Hassidim \cite{ZKH12}: a partial function that has a super-polynomial quantum speedup, constructed by adding a promise to a Boolean evaluation tree in a way that preserves the symmetries of the tree. Interestingly, these results also show that the quantity $b(G)$ does \emph{not} determine the well-shuffling property in general, because there exist permutation groups $G$ acting on $n$ points with $b(G) = \Omega(n)$ that are consistent with exponential quantum speedups. Nevertheless, we can at least say that if a permutation group $G$ is inconsistent with super-polynomial quantum speedups, then it must be decomposable into $O(1)$ well-shuffling primitive groups.

\subsubsection{Graph property testing in the adjacency list model}

Finally, we show that that the possibility of exponential quantum speedups for graph properties depends strongly on the input model: while the adjacency matrix model allows at most polynomial speedups, we give an example of an exponential speedup in the adjacency list model for bounded-degree graphs. In particular, we show that such a separation holds even in the more restrictive setting of property testing, where ``no'' instances are promised to be far from ``yes'' instances.

The main idea for this separation comes from work by Childs et al.~\cite{CCD+03}, demonstrating an exponential algorithmic speedup by quantum walk. More specifically, they design a ``welded trees'' graph that consists of two depth-$k$ binary trees welded together by an alternating cycle on the leaves of the two trees, with the two degree-$2$ vertices (roots) referred to as ``\entr'' and ``\exit''. The main result of \cite{CCD+03} is that a quantum computer, given the \entr, can find the \exit\ in time $\poly(k)$, whereas a classical computer needs exponential time $2^{\Omega(k)}$ to find the \exit. However, it is not immediately clear how to lift this separation to the realm of graph property testing: it is not even a graph property (since the \entr\ is given and the \exit\ can be labeled differently in isomorphic graphs), and it is still more challenging to find a related classically-hard property \emph{testing} problem.

Our approach is to leverage the quantum advantage for finding \entr-\exit\ pairs to test whether the given graph has a certain form derived from welded trees. The difficulty is that it is unclear how finding \entr-\exit\ pairs helps a quantum computer test the structure of a welded trees. One of our main observations is that once the vertices in the weld of the binary trees are marked, it becomes easy to test the entire structure: we can separately test the two binary trees, and then test the weld itself. However, marking the weld vertices also enables a classical computer to find \entr-\exit\ pairs, so it seems that this approach has no quantum advantage.

Our resolution is to use many copies of the welded trees structure and randomly assign a bit to every root (degree-$2$) vertex. Then we mark vertices by connecting every non-root vertex $v$ to a root $r$ via an ``advice edge'', such that
\begin{enumerate}[noitemsep]
    \item if $v$ is in the weld, then the sum of the bits assigned to $r$ and its pair is odd, and
    \item if $v$ is not in the weld, then the sum of the bits assigned to $r$ and its pair is even.
\end{enumerate}
To keep the degree bounded, instead of directly connecting advice edges to the roots, we attach a binary tree ``antenna'' to the roots and connect the advice edges there, as shown in \cref{fig:modified_glued_tree,fig:schematic_full_graph}.

Still, it might seem that this construction has a fatal chicken-or-egg problem: we need to mark the weld vertices to test the structure, but then marking the weld vertices enables also a classical computer to find the root pairs, leaving no quantum advantage. However, this is a feature, not a bug. The chicken-or-egg problem has two fixed points, with neither a chicken nor an egg (corresponding to classical strategies for our problem) or with both a chicken and an egg (corresponding to the quantum case). The key point is that the marking of the weld vertices can only be read efficiently by a quantum computer.

To show that this property is hard to test classically, we show that it is exponentially difficult to distinguish a randomly selected graph of the form described above, from a graph that looks like the above, except the binary trees are self-welded, i.e., the leaves of the binary trees are connected with a random cycle only containing vertices from the same binary tree (this then disconnects the root pairs, so the advice edges no longer have the original role and are chosen arbitrarily).
The intuition behind our classical lower bound proof is the same as that behind the lower bound proof of~\cite{CCD+03}: given a root of a (self)-welded trees graph, the graph appears to be an exponentially deep binary tree for a (random) classical walker who can only explore the graph locally. Finally, we show that a random welded trees graph is $\Omega(1)$-far from a random self-welded trees graph by observing that the welded trees graph is bipartite, whereas the self-welded trees graph is $\Omega(1)$-far from being bipartite with very high probability.

\subsection{Organization}

The remainder of the paper is organized as follows.

In \sec{prelim} we introduce some basic notions from query complexity, the theory of permutation groups, and symmetric functions.
\sec{wellshuffle} introduces the notion of well-shuffling permutation groups and shows that they do not allow super-polynomial quantum speedups.
In \sec{wsgroups} we present techniques for showing that permutation groups are well-shuffling, establishing in particular that hypergraph symmetries cannot have super-polynomial quantum speedups.
In \sec{wsclassify} we classify well-shuffling permutation groups, showing that hypergraph symmetries are essentially the only groups that are inconsistent with super-polynomial speedups.
\sec{testing} presents an exponential separation between classical and quantum property testing in the adjacency list model for bounded-degree graphs.
Finally, in \sec{open} we briefly discuss some open problems.
The appendices present two technical results: \app{minimax} gives a proof of the quantum minimax lemma, and \app{wreath_tree} establishes a super-polynomial quantum speedup for deep wreath product symmetries.

\section{Preliminaries}
\label{sec:prelim}
\subsection{Query complexity}

We begin by introducing some standard notation from query complexity.
A \emph{Boolean function} is a $\B$-valued function $f$ on
strings of length $n\in\bN$. We use $\Dom(f)$
to denote the domain of $f$, and we always have
$\Dom(f)\subseteq\Sigma^n$ where $\Sigma$ is a finite alphabet.
The function $f$ is called \emph{total} if $\Dom(f)=\Sigma^n$;
otherwise it is called \emph{partial}.

For a (possibly partial) Boolean function $f$, we use
$\R_\epsilon(f)$ to denote its \emph{randomized query complexity
with error $\epsilon$},
as defined in \cite{BdW02}. This is the minimum number of queries
required in the worst case by a randomized algorithm that computes
$f$ with worst-case error $\epsilon$. We use $\Q_\epsilon(f)$ to denote the
\emph{quantum query complexity with error $\epsilon$} of $f$,
also defined in \cite{BdW02}.
This is the minimum number of queries required in the worst case
by a randomized algorithm that computes $f$ with worst-case error
$\epsilon$. When $\epsilon=1/3$, we omit it and simply write
$\R(f)$ and $\Q(f)$.

An important tool for lower bounding query complexity
is the minimax theorem, the original version of which was given
by Yao for zero-error (Las Vegas) randomized algorithms \cite{Yao77}.
Here we use a bounded-error, quantum version
of the minimax theorem. Bounded-error versions of the minimax
theorem can be shown using linear programming duality
(see also \cite{Ver98}
who proved a minimax theorem in the setting where both the error
and the expected query complexity are measured against the same hard
distribution).
A similar technique works for quantum query complexity;
this result is folklore, and we prove it in
\app{minimax}.

\begin{restatable}[Minimax for bounded-error quantum algorithms]{lemma}{minimax}
\label{lem:minimax}
Let $f$ be a (possibly partial) Boolean function with
$\Dom(f)\subseteq\Sigma^n$, and let $\epsilon\in(0,1/2)$.
Then there is a distribution $\mu$ supported on $\Dom(f)$
which is \emph{hard} for $f$ in the following sense:
any quantum algorithm using fewer than $\Q_\epsilon(f)$
quantum queries for computing $f$ must have average error more than
$\epsilon$ on inputs sampled from $\mu$.
\end{restatable}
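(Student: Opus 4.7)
The plan is to prove this quantum minimax lemma in direct analogy with Yao's classical minimax principle: view $\Q_\epsilon(f)$ as the value of a zero-sum game between an algorithm (minimizer) and an input distribution (maximizer), and apply a minimax theorem on compact convex sets to swap the order of min and max.

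First, I would fix the query budget $T<\Q_\epsilon(f)$ and a workspace dimension $d$, and parametrize a pure $T$-query workspace-$d$ quantum algorithm as a tuple of unitaries $(U_0,\dots,U_T)\in U(d)^{T+1}$, with the final measurement absorbed into $U_T$ and the answer read off a designated output qubit. For each input $x\in\Dom(f)$, the error $\err(A,x)=1-\lVert P_{f(x)}\,U_T\,O_x\,U_{T-1}\cdots O_x\,U_0\,|0\rangle\rVert^{2}$ is continuous in the $U_i$'s and extends affinely to the set $\mathcal{A}_T^{(d)}$ of probability mixtures of pure algorithms. Since $U(d)^{T+1}$ is compact and the Borel probability measures on a compact metric space form a weak-$*$ compact set, $\mathcal{A}_T^{(d)}$ is convex and compact, and $\err(\cdot,x)$ is continuous and affine on it.

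Second, I would apply the minimax theorem (Sion's, or equivalently the bilinear version on compact convex sets) to the continuous bilinear payoff $(A,\mu)\mapsto\E_{x\sim\mu}[\err(A,x)]$ on the product $\mathcal{A}_T^{(d)}\times\Delta(\Dom(f))$, where $\Delta(\Dom(f))$ is the finite-dimensional simplex of distributions on $\Dom(f)$, yielding
\[
\min_{A\in\mathcal{A}_T^{(d)}}\max_{\mu}\E_{x\sim\mu}[\err(A,x)] \;=\; \max_{\mu}\min_{A}\E_{x\sim\mu}[\err(A,x)].
\]
The inner max on the left is attained at a point mass, so the left-hand side equals the best worst-case error $\alpha^{(d)}:=\min_A\max_x\err(A,x)$ of a $T$-query workspace-$d$ algorithm, and compactness ensures the minimum is attained. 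Writing $\Q_\epsilon^{(d)}(f)$ for the analogue of $\Q_\epsilon(f)$ restricted to workspace $d$, the hypothesis $T<\Q_\epsilon(f)\leq\Q_\epsilon^{(d)}(f)$ forces $\alpha^{(d)}>\epsilon$ strictly (otherwise the minimizing algorithm would witness $\Q_\epsilon^{(d)}(f)\leq T$), so the minimax identity yields a distribution $\mu_d$ on $\Dom(f)$ such that every $T$-query workspace-$d$ algorithm has average error at least $\alpha^{(d)}>\epsilon$ on $\mu_d$. To remove the workspace restriction, I would note that for inputs in $\Dom(f)$ each $U_i$ matters only through its action on the finite-dimensional span of reachable states, so any $T$-query quantum algorithm is input-output equivalent to one with workspace bounded by some $d^*=d^*(T,n,|\Sigma|)$; taking $d\geq d^*$ then yields a single $\mu$ hard against every $T$-query quantum algorithm.

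The main obstacle is justifying the workspace reduction cleanly, since isolating the ``reachable'' subspace from an arbitrary $T$-query algorithm requires some care. A robust alternative is a diagonal compactness argument: the hard distributions $\mu_d$ form a sequence in the compact simplex $\Delta(\Dom(f))$, so any convergent subsequence $\mu_{d_k}\to\mu^*$ gives a limit that is hard against every fixed algorithm (whose workspace is eventually at most $d_k$), and strictness is preserved because $\Q_\epsilon^{(d)}(f)$ is integer-valued and non-increasing in $d$, hence stabilizes at $\Q_\epsilon(f)$ for all sufficiently large $d$, keeping $\alpha^{(d)}$ uniformly bounded away from $\epsilon$.
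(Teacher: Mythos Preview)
Your primary approach is essentially the paper's: apply Sion's minimax to the bilinear payoff between $T$-query algorithms and input distributions, after first bounding the workspace so that the algorithm side is compact. The paper handles the workspace issue up front by citing \cite{BSS03} for a uniform bound $B=B(n,|\Sigma|)$, which is exactly your $d^*$; it then works in the compact convex set $V\subseteq\bR^{|\Dom(f)|}$ of achievable error vectors rather than in the space of mixed algorithms, but this is only a cosmetic simplification.

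Your ``robust alternative'' via limits of $\mu_d$, however, has a gap. The stabilization of the integer sequence $\Q_\epsilon^{(d)}(f)$ at $\Q_\epsilon(f)$ tells you only that $\alpha^{(d)}>\epsilon$ for each $d$; it does \emph{not} imply that $\alpha^{(d)}$ is uniformly bounded away from $\epsilon$. Absent a workspace bound, it is a priori consistent that $\alpha^{(d)}\searrow\epsilon$ (the infimum of worst-case errors over all $T$-query algorithms equals $\epsilon$ but is not attained), and then your limit $\mu^*$ would guarantee only average error $\ge\epsilon$, not strictly $>\epsilon$ as the lemma demands. So the alternative does not actually sidestep the workspace reduction: you still need something like \cite{BSS03} (or your reachable-subspace argument) to conclude the infimum is attained and hence strictly exceeds $\epsilon$.
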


Note that achieving average error $\epsilon$ against a known distribution
$\mu$ is always easier than achieving worst-case error $\epsilon$; the minimax
theorem says that there is a hard distribution against which achieving average
error $\epsilon$ is just as hard as achieving worst-case error $\epsilon$.

\subsection{Permutation groups}

We review some basic definitions about permutation groups, following Hulpke~\cite{Hul10}.

\begin{definition}[Permutation group]
A \emph{permutation group} is a pair $(D,G)$ where $D$ is a set
and $G$ is a set of bijections $\pi\colon D\to D$, such that
$G$ forms a group under composition (i.e., $G$ contains the identity
function and is closed under composition and inverse of the bijections).
In this case, $G$ is said to \emph{act} on $D$. We will often denote a permutation group simply by $G$, with the domain $D$ being implicit.
\end{definition}

In other words, a permutation group is simply a set of permutations
of a domain $D$ which is closed under composition and inverse.
In this work we will generally take $D=[n]$, where $[n]$ denotes
the set $\{1,2,\dots,n\}$ for $n\in\bN$. The set $[n]$ will
represent the indices of an input string, or equivalently,
the queries an algorithm is allowed to make.

We define orbits, transitivity, and stabilizers of permutation groups, all of which are standard.

\begin{definition}[Orbit]
Let $G$ be a permutation group on domain $D$, and let $i\in D$.
Then the \emph{orbit} of $i$ is the set $\{\pi(i):\pi\in G\}$.
A subset of $D$ is an orbit of $G$ if it is the orbit
of some $i\in D$ with respect to $G$.
\end{definition}

\begin{definition}[Transitivity]
We say that a permutation group $G$ on domain $D$ is \emph{$k$-transitive} if
for all distinct $i_1,i_2,\dots,i_k\in D$ and distinct
$j_1,j_2,\dots,j_k\in D$, there exists some $\pi\in G$ such that
$\pi(i_t)=j_t$ for all $t=1,2,\dots,k$. A group that is $1$-transitive is called simply \emph{transitive}.
\end{definition}

\begin{definition}[Stabilizer]
Let $G$ be a permutation group on domain $D$. The \emph{pointwise stabilizer} of a set $S \subseteq D$ is the subgroup $\Stab_G(S) := \{\pi \in G : \pi(i) = i \  \forall i \in S\}$. The \emph{setwise stabilizer} of a set $S \subseteq D$ is the subgroup $\Stab_G(\{S\}) := \{\pi \in G : \pi(i) \in S \  \forall i \in S\}$.
\end{definition}

We use some facts about the structure of permutation groups. For this, we need some additional definitions, starting with the notions of primitive groups and wreath products.

\begin{definition}[Block system]
Let $G$ be a permutation group on domain $D$. A partition $\mathcal{B} = \{B_1,B_2,\ldots,B_k\}$ of $D$ is called a \emph{block system} for $G$ if $\mathcal{B}$ is $G$-invariant. Formally, this means that $\pi(B_i) \in \mathcal{B}$ for every $\pi \in G$ and $i \in [k]$, where $\pi(B_i) := \{\pi(x) : x \in B_i\}$. The sets $B_i$ are called \emph{blocks} for $G$.
\end{definition}

As an example, consider the action of the group of linear transformations on the nonzero elements of a vector space. The set of lines through the origin (i.e., nonzero scalar multiples of a nonzero vector) form a block system.

It follows from this definition that if $\mathcal{B}$ is a block system for $G$, then for every block $B_i$ and permutation $\pi \in G$, either $\pi(B_i) = B_i$ or $\pi(B_i) \cap B_i = \emptyset$. Moreover, if $G$ acts transitively, then all blocks have the same size.

\begin{definition}[Primitive group]
Let $G$ be a transitive permutation group on domain $D$. $G$ is called \emph{primitive} if the only block systems for $G$ are $\{D\}$ and the partition of $D$ into singletons. A permutation group that is not primitive is called \emph{imprimitive}.
\end{definition}

Primitive groups are sometimes described as the ``building blocks'' of permutation groups, for reasons we will see later.

\begin{definition}
Let $G$ be an abstract group. The \emph{direct power} of $G$ with exponent $m$, denoted $G^{\times m}$, is the direct product of $m$ copies of $G$:
$$G^{\times m} := \underbrace{G \times \cdots \times G}_{m\ \mathrm{times}}.$$
\end{definition}

If $G$ is a permutation group acting on $[n]$, there are two natural ways to construct $G^{\times m}$ as a permutation group. Let $\pi = (\pi_1,\pi_2,\ldots,\pi_m) \in G^{\times m}$. The first action is on $[m] \times [n]$ where we have $\pi(i, j) = \pi_i(j)$. The second action is on $[n]^m$ where we have $\pi(i_1,i_2,\ldots,i_m) = (\pi_1(i_1), \pi_2(i_2),\ldots,\pi_m(i_m))$. These two actions of $G^{\times m}$ give rise to two actions of the \emph{wreath product} of two groups:

\begin{definition}[Wreath product]
\label{def:wreath_product}
Let $G$ and $H$ be permutation groups acting on $[m]$ and $[n]$, respectively. The \emph{wreath product} of $G$ and $H$, denoted $G \wr H$, is one of two permutation groups:
\begin{enumerate}[label=(\roman*)]
\item The \emph{imprimitive action} is the action on $[m] \times [n]$ generated by the natural action of $H^{\times m}$, along with permutations of the form $(i, j) \to (\sigma(i), j)$ for every $\sigma \in G$.
\item The \emph{product action} is the action on $[n]^m$ generated by the natural action of $H^{\times m}$, along with permutations of the form $(i_1, i_2,\ldots,i_m) \to (i_{\sigma(1)},i_{\sigma(2)},\ldots,i_{\sigma(m)})$ for every $\sigma \in G$.
\end{enumerate}
\end{definition}

\begin{fact}
\label{fct:wreath_product_order}
Let $G$ and $H$ be permutation groups acting on $[m]$ and $[n]$, respectively. If $n > 1$, then the imprimitive action and the primitive action of $G \wr H$ are isomorphic as groups, and both have order $|G| \cdot |H|^m$.
\end{fact}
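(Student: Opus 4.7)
The plan is to identify both actions as faithful representations of the same abstract group, namely the semidirect product $\Gamma := H^{\times m} \rtimes G$, where $G$ acts on $H^{\times m}$ by permuting coordinates via $\sigma \cdot (h_1,\ldots,h_m) := (h_{\sigma^{-1}(1)},\ldots,h_{\sigma^{-1}(m)})$. This abstract group has order $|H|^m \cdot |G|$ by the standard semidirect product cardinality. Once both permutation actions are shown to be faithful representations of $\Gamma$, they are both isomorphic to $\Gamma$, hence isomorphic to each other, and of the claimed order.

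First I would define two candidate homomorphisms from $\Gamma$ into the symmetric groups of $[m]\times[n]$ and $[n]^m$, respectively. For the imprimitive action, set
\[
\phi_1\bigl((h_1,\ldots,h_m),\sigma\bigr)(i,j) := \bigl(\sigma(i),\, h_{\sigma(i)}(j)\bigr),
\]
and for the product action, set
\[
\phi_2\bigl((h_1,\ldots,h_m),\sigma\bigr)(i_1,\ldots,i_m) := \bigl(h_1(i_{\sigma^{-1}(1)}),\ldots,h_m(i_{\sigma^{-1}(m)})\bigr).
\]
A routine check — using the definition of multiplication in $\Gamma$ — shows that both $\phi_1$ and $\phi_2$ are group homomorphisms. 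Restricting to $H^{\times m} \leq \Gamma$ (i.e.\ $\sigma = \id$) recovers the natural component-wise actions of $H^{\times m}$ on $[m]\times[n]$ and $[n]^m$, and restricting to $G \leq \Gamma$ (i.e.\ all $h_i = \id$) recovers the index-permuting actions $(i,j) \mapsto (\sigma(i),j)$ and $(i_1,\ldots,i_m)\mapsto (i_{\sigma^{-1}(1)},\ldots,i_{\sigma^{-1}(m)})$ that appear in \defn{wreath_product}. Hence the images of $\phi_1$ and $\phi_2$ are exactly the two permutation groups in the statement.

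The remaining — and main — step is to verify that $\phi_1$ and $\phi_2$ are injective when $n>1$. For $\phi_1$, if the permutation $(i,j)\mapsto(\sigma(i), h_{\sigma(i)}(j))$ is the identity, then looking at the first coordinate forces $\sigma = \id$, and then $h_i(j)=j$ for all $i,j$ forces each $h_i = \id_H$ (using that $H$ acts faithfully on $[n]$, which is part of being a permutation group). For $\phi_2$, if the permutation on $[n]^m$ is the identity, then evaluating on constant tuples $(j,\ldots,j)$ gives $h_k(j)=j$ for all $k,j$, so each $h_k = \id_H$. Given this, the action reduces to $(i_1,\ldots,i_m)\mapsto (i_{\sigma^{-1}(1)},\ldots,i_{\sigma^{-1}(m)})$; if $\sigma \neq \id$ and $k$ satisfies $\sigma^{-1}(k)\neq k$, then by choosing a tuple with $i_k \neq i_{\sigma^{-1}(k)}$ — which is possible precisely because $n>1$ — we see the $k$-th coordinate changes, a contradiction. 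I expect this last use of $n>1$ to be the only subtle point; everything else is a bookkeeping check on the semidirect product structure. Combining, $\phi_1$ and $\phi_2$ are isomorphisms onto the two actions, both of which therefore have order $|\Gamma|=|H|^m \cdot |G|$ and are isomorphic as abstract groups.
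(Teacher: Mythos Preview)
The paper states this as a \emph{Fact} without proof, treating it as standard background on wreath products. Your argument is the natural one: identify the abstract group $\Gamma = H^{\times m} \rtimes G$ (with $G$ permuting the coordinates of $H^{\times m}$), then exhibit both the imprimitive and product actions as faithful homomorphic images of $\Gamma$. The homomorphism and injectivity checks you outline are correct; in particular, the use of a nonconstant tuple to force $\sigma=\mathrm{id}$ in the product-action kernel is exactly where the hypothesis $n>1$ is needed, and your argument handles it properly.

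Two minor remarks. First, your formula $\phi_2$ uses $\sigma^{-1}$ in the coordinate permutation whereas \defn{wreath_product} writes $\sigma$; this is harmless because the paper defines the product action as the group \emph{generated} by these permutations, and $\{\sigma:\sigma\in G\}=\{\sigma^{-1}:\sigma\in G\}$. Second, the claim that the images of $\phi_1,\phi_2$ are \emph{exactly} the two permutation groups of \defn{wreath_product} deserves one more sentence: since every element of $\Gamma$ factors as a product of an $H^{\times m}$-element and a $G$-element, the image of $\phi_k$ lies in the group generated by the images of these two subgroups, giving the reverse containment. With that, your proof is complete and is the standard way to justify this well-known fact.
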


If it is clear from context which of the two actions we are referring to, then we may sometimes simply write $G \wr H$ without specifying the type of action.

This definition differs from the standard definition of the wreath product, which is an abstract group that can be defined even if $H$ is an abstract group. However, the definitions agree when $H$ is a permutation group, which is the only case we consider.

The wreath product imprimitive action carries a useful intuitive interpretation: if $g$ and $h$ are functions that are symmetric under permutation groups $G$ and $H$, respectively, then $G \wr H$ is the (visible)\footnote{In rare cases, depending on $g$ and $h$, there can be more symmetries. For example, let $g = h = \mathsf{OR}_n$, which both have symmetry group $S_n$. The wreath product $S_n \wr S_n$ is \emph{not} the same as $S_{n^2}$, which is the group of symmetries of $g \circ h = \mathsf{OR}_{n^2}$. However, we can say that $g \circ h$ is guaranteed to be \emph{at least} symmetric under $G \wr H$ for any functions $g$ and $h$.} group of symmetries of the block-composed function $g \circ h$. Thus, the wreath product imprimitive action is associative.

Note that the wreath product is not commutative. This can be confusing because some authors instead denote this wreath product as $H \wr G$. We prefer the notation used here because of the correspondence to block composition of functions.

\begin{definition}[Base]
Let $G$ be a permutation group on domain $D$. A \emph{base} for $G$ is a set $S \subseteq D$ such that the pointwise stabilizer $\Stab_G(S)$ is the trivial group. The \emph{minimal base size} for $G$, denoted $b(G)$, is the size of the smallest base for $G$.
\end{definition}

Bases of permutation groups are important in computational group theory for the following reason: if $S$ is a base for $G$, then a permutation $\pi \in G$, viewed as a function $D \to D$, is uniquely determined by its restriction to $S$. This can be useful particularly when $S$ is small compared to $D$.

In many applications, the minimal base size is a useful proxy for the size of a permutation group. This is quantified by the following well-known lemma, which we prove for completeness:

\begin{lemma}\label{lem:base_bounds}
Let $G$ be a permutation group on $[n]$. Then the order of the group $|G|$ and the minimal base size $b(G)$ satisfy $2^{b(G)} \le |G| \le n^{b(G)}$. Equivalently, $\log_n(|G|) \le b(G) \le \log_2(|G|)$.
\end{lemma}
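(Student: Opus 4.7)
The plan is to prove the two inequalities separately, with the upper bound $|G| \le n^{b(G)}$ being the easier direction and the lower bound $|G| \ge 2^{b(G)}$ requiring a greedy stabilizer-chain argument.

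For the upper bound, I would fix a minimal base $S \subseteq [n]$ with $|S| = b(G)$, and observe that any $\pi \in G$ is uniquely determined by its restriction $\pi|_S$: if $\pi_1|_S = \pi_2|_S$ for some $\pi_1,\pi_2 \in G$, then $\pi_2^{-1}\pi_1$ lies in the pointwise stabilizer $\Stab_G(S)$, which is trivial by definition of a base, so $\pi_1 = \pi_2$. This is exactly the interpretive remark made just before the lemma. Hence the map $\pi \mapsto \pi|_S$ is an injection from $G$ into the set of functions $S \to [n]$, giving $|G| \le n^{|S|} = n^{b(G)}$.

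For the lower bound, I would build a base greedily via a decreasing stabilizer chain. Set $G_0 := G$, and while $G_i$ is nontrivial, pick any point $x_{i+1} \in [n]$ that is moved by some element of $G_i$ (such a point exists because $G_i \neq \{e\}$), and set $G_{i+1} := \Stab_{G_i}(\{x_{i+1}\}) = \Stab_G(\{x_1,\ldots,x_{i+1}\})$. Since $G_{i+1}$ is a proper subgroup of $G_i$, Lagrange's theorem gives $[G_i : G_{i+1}] \ge 2$, i.e., $|G_i| \ge 2|G_{i+1}|$. The chain must terminate at some index $k$ with $G_k = \{e\}$, at which point $\{x_1,\ldots,x_k\}$ is a base for $G$, so $b(G) \le k$. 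Multiplying the indices along the chain yields
\[
|G| = |G_0| = \prod_{i=0}^{k-1} [G_i : G_{i+1}] \cdot |G_k| \ge 2^k \ge 2^{b(G)}.
\]
Taking logarithms then gives the equivalent formulation $\log_n |G| \le b(G) \le \log_2 |G|$.

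Neither direction is a genuine obstacle; the only subtlety worth flagging is making sure the greedy construction really produces a base (rather than an arbitrary generating sequence for stabilizers), which is immediate once the chain terminates at the trivial group since then no nonidentity element of $G$ fixes $\{x_1,\ldots,x_k\}$ pointwise.
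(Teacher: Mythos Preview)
Your proof is correct and close in spirit to the paper's, though the paper unifies both bounds through a single stabilizer chain: starting from a \emph{minimal} base $\{1,\dots,k\}$ with $k=b(G)$, it sets $H_i=\Stab_G([i])$, applies orbit-stabilizer to write $|G|=\prod_{i=1}^k |O_i|$ with $O_i$ the orbit of $i$ in $H_{i-1}$, and then bounds each factor by $2\le |O_i|\le n$, where the lower bound $|O_i|\ge 2$ comes from minimality of the base (if $O_i=\{i\}$ one could drop $i$). You instead handle the two inequalities separately: the upper bound by the direct injection $\pi\mapsto \pi|_S$, and the lower bound by greedily building a (possibly non-minimal) base of some size $k\ge b(G)$ with each index at least $2$. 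Both routes are standard and equally short; the paper's version is slightly more economical in that one chain yields both inequalities, while your version avoids invoking orbit-stabilizer explicitly and makes the injection for the upper bound transparent.
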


\begin{proof}
Without loss of generality, identify a minimal base for $G$ with $[k]$ where $k = b(G)$. Let $H_i = \Stab_G([i])$ be the pointwise stabilizer of the first $i$ points of the base, with the understanding that $H_0 = G$. We have the inclusion
$$1 = H_k \subseteq H_{k-1} \subseteq \cdots \subseteq H_1 \subseteq H_0 = G.$$
Let $O_i \subseteq [n]$ denote the orbit of $i$ in $H_{i-1}$. By the orbit-stabilizer theorem, we have $|H_{i-1}| = |H_i| \cdot |O_i|$. Thus, we have $|G| = \prod_{i=1}^k |O_i|$. We also know that $2 \le |O_i| \le n$, where the upper bound is trivial, and the lower bound holds because $[k]$ was assumed to be a minimal base. We get the desired bound on $|G|$ by bounding each term in the product.
\end{proof}

\subsection{Symmetric functions}

We introduce some notation that is used throughout the
paper to talk about symmetric functions.

\begin{definition}[Permuting strings]
Let $\pi$ be a permutation on $[n]$, and let $x\in\B^n$.
We write $x\circ \pi$ to denote the string whose characters have been
permuted by $\pi$; that is, $(x\circ\pi)_i\coloneqq x_{\pi(i)}$.
More generally, $x\circ \pi$ is similarly defined when $\pi$
is merely a function $[n]\to[n]$ rather than a permutation.
\end{definition}

Note that if we view a string $x\in\Sigma^n$ as a function
$[n]\to\Sigma$ with $x(i)\coloneqq x_i$, then $x\circ \pi$ is simply
the usual function composition of $x$ and $\pi$.
This notation allows us to easily define symmetric functions.

\begin{definition}[Symmetric function]
Let $G$ be a permutation group on $[n]$, and let $f$ be a (possibly partial)
Boolean function with $\Dom(f)\subseteq\Sigma^n$.
We say $f$ is \emph{symmetric under} $G$
if for all $x\in\Dom(f)$ and all $\pi\in G$
we have $x\circ\pi\in\Dom(f)$ and $f(x\circ\pi)=f(x)$.
\end{definition}

For asymptotic bounds such as $\Q(f)=\R(f)^{\Omega(1)}$
to be well-defined, we need to talk about
\emph{classes} of functions rather than individual functions.
To do that, we need to talk about classes of permutation groups.
We introduce the following definition, which defines, for a class
of permutation groups $\mathcal{G}$, the set of all functions symmetric
under some group in $\mathcal{G}$. We denote this set by $F(\mathcal{G})$.

\begin{definition}[Class of symmetric functions]
Let $\mathcal{G}=\{G_i\}_{i\in I}$ be a (possibly infinite)
set of finite permutation groups, with $G_i$ acting on $[n_i]$
for each $i\in I$. Here $I$ is an arbitrary index set and $n_i\in\bN$
for all $i\in I$. Then define $F(\mathcal{G})$ to be the
set of all (possibly partial) Boolean functions that are symmetric
under some $G_i$. That is, we have $f\in F(\mathcal{G})$
if and only if
$f\colon\Dom(f)\to\B$ is a function with $\Dom(f)\subseteq[m]^n$
for some $n,m\in\bN$, and $f$ is symmetric under $G_i$ for some
$i\in I$ such that $n_i=n$.
(Here $[m]$ represents the alphabet $\Sigma$.)
\end{definition}

\section{Well-shuffling permutation groups}
\label{sec:wellshuffle}

In this section we first define the notion of a
well-shuffling class of permutation groups,
which is a class $\mathcal{G}$ of permutation groups $G$ that are hard
to distinguish from the set of small-range functions via a quantum query algorithm.
We then show that a well-shuffling class of permutation groups does not allow
super-polynomial quantum speedups. This result (\thm{shuffle})
converts the task of showing permutation groups do not allow quantum speedups
into the task of showing those permutation groups are well-shuffling,
a much simpler objective.

We start by defining the set of small-range strings $D_{n,r}$.

\begin{definition}[Small-range strings]
For $n,r\in\bN$, let $D_{n,r}$ be the set of all strings $\alpha$ in $[n]^n$
for which the number of unique alphabet symbols in $\alpha$ is at most $r$.
\end{definition}

We identify a string $\alpha\in[n]^n$ with a function $[n]\to[n]$.
Then $D_{n,r}$ is the set of all functions $[n]\to[n]$ with range size
at most $r$. Next, we define $\cost(G,r)$ as the quantum query complexity
of distinguishing $G$ from $D_{n,r}$ (where $G$ is a permutation group acting
on $[n]$).

\begin{definition}[Cost]
Identify a permutation on $[n]$ with a string in $[n]^n$ in which
each alphabet symbol occurs exactly once. Then a permutation group $G$
on $[n]$ corresponds to a subset of $[n]^n$. For $r<n$, let
$\cost_\epsilon(G,r)$ be the minimum number of quantum queries
needed to distinguish $G$ from $D_{n,r}$ to worst-case error $\epsilon$;
that is, $\cost_\epsilon(G,r)\coloneqq \Q_\epsilon(f)$,
where $f$ has domain $G\cup D_{n,r}\subseteq [n]^n$ and is defined by
$f(x)=1$ if $x\in G$ and $f(x)=0$ if $x\in D_{n,r}$. When $r\ge n$,
we set $\cost_\epsilon(G,r)\coloneqq\infty$. When $\epsilon=1/3$, we omit it
and write $\cost(G,r)$.
\end{definition}

We note that since $\cost_\epsilon(G,r)$ is defined as the worst-case quantum
query complexity of a Boolean function, it satisfies amplification,
meaning that the precise value of $\epsilon$ does not matter so long as
it is a constant in $(0,1/2)$ and so long as we do not care about
constant factors.

We define a well-shuffling class of permutation groups as follows.

\begin{definition}[Well-shuffling permutation groups]
\label{def:shuffling}
Let $\mathcal{G}$ be a collection of permutation groups.
We say $\mathcal{G}$ is \emph{well-shuffling}
if $\cost(G,r)=r^{\Omega(1)}$ for $G\in\mathcal{G}$ and $r\in\bN$.
More explicitly, we say $\mathcal{G}$ is well-shuffling
\emph{with power $a > 0$} if there exists $b > 0$ such that
$\cost(G,r)\ge r^{1/a}/b$ for all $G\in \mathcal{G}$ and all $r\in\bN$.
\end{definition}

We note that $\cost(G,r)\ge r^{1/a}/b$ is always satisfied when $r$
is greater than or equal to the domain size of $G$, since in that case
$\cost(G,r)=\infty$. Hence to show well-shuffling we only need to worry about
$r$ smaller than $n$, the domain size of the permutation group $G$.

The following theorem plays a central role in this work:
it shows that a well-shuffling
collection of permutation groups does not allow super-polynomial quantum speedups.

\begin{theorem}\label{thm:shuffle}
Let $f\colon\Dom(f)\to\B$ be a partial Boolean function on $n\in\bN$ bits,
with $\Dom(f)\subseteq\Sigma^n$
(where $\Sigma$ is a finite alphabet).
Let $G$ be a permutation group on $[n]$, and
suppose that $f$ is symmetric under $G$. Then there is a universal
constant $c\in\bN$ such that
\[\R(f)\le \min\{r\in\bN:\cost(G,r)\ge c\Q(f)\}.\]
Consequently, if
$\mathcal{G}$ is a well-shuffling collection
of permutation groups with power $a$, then for all $f\in F(\mathcal{G})$ we have
$\R(f)=O(\Q(f)^a)$.
\end{theorem}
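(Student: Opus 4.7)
The plan is to convert a $T$-query quantum algorithm $Q$ for $f$ into a classical algorithm by replacing a ``true'' $G$-shuffle of the input with a ``fake'' small-range function $\alpha\in D_{n,r}$, which a classical algorithm can read out fully in at most $r$ queries (by reading $x$ at the $\le r$ positions in $\alpha([n])$) and then simulating $Q$ on $x\circ\alpha$ without any further queries. Explicitly, I would first amplify $Q$ to a quantum algorithm $Q'$ with small constant error $\eps_Q$ (e.g.\ $0.01$) using $T':=O(\Q(f))$ queries, set $r^*:=\min\{r:\cost(G,r)\ge cT'\}$ for a large universal constant $c$, and apply \lem{minimax} to the Boolean distinguishing problem defining $\cost(G,r^*)$ to obtain a hard distribution $\mu^*=p\mu^*_G+(1-p)\mu^*_D$ on $G\cup D_{n,r^*}$ with $p\in(1/3,2/3)$ (since constant-output algorithms rule out $p$ outside this range). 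The classical algorithm $R$ then samples $\alpha\sim\mu^*_D$, queries $x$ on the image $\alpha([n])$, reconstructs the string $x\circ\alpha$, and outputs $Q'(x\circ\alpha)$; it uses at most $r^*$ queries.

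Correctness of $R$ would be proved by contradiction, exploiting the $G$-symmetry of $f$: for every $\pi\in G$, $x\circ\pi\in\Dom(f)$ with $f(x\circ\pi)=f(x)$, so $Q'(x\circ\pi)=f(x)$ with probability $\ge 1-\eps_Q$. Suppose $R$ fails on some $x_0\in\Dom(f)$, i.e.\ $\Pr_{\alpha\sim\mu^*_D}[Q'(x_0\circ\alpha)=f(x_0)]<2/3$. Then with $x_0$ and $f(x_0)$ hardcoded, one builds a $T'$-query quantum distinguisher $D$ for the function defining $\cost(G,r^*)$: on oracle access to $y\in[n]^n$, simulate $Q'(x_0\circ y)$ by turning each query $i$ of $x_0\circ y$ into one query of $y$ (returning $y_i$) followed by an internal lookup $x_0(y_i)$, then output $1$ iff the simulated $Q'$ returns $f(x_0)$. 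By the symmetry argument, $D$ outputs $1$ with probability $\ge 1-\eps_Q$ on \emph{every} $y\in G$, while on $y\sim\mu^*_D$ it outputs $1$ with probability $<2/3$. Amplifying $D$ by an appropriate constant number of independent repetitions then yields an $O(T')$-query algorithm whose average error on $\mu^*$ is strictly below $1/3$, contradicting the minimax guarantee that any algorithm using fewer than $\cost(G,r^*)\ge cT'$ queries must have error $>1/3$ on $\mu^*$.

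The main difficulty I anticipate is precisely the amplification step: the hypothesis that $R$ fails only bounds the \emph{average} success probability of $Q'$ over $\alpha\sim\mu^*_D$, and pointwise one could have some $\alpha$ on which $Q'$ is highly confident in the wrong answer, so naive repetition does not automatically drive the distinguisher's $\mu^*_D$-side error below $1/2$. Handling this requires picking $\eps_Q$, the number of repetitions, and the decision rule carefully so that the weighted average $p\cdot(\text{G-side error})+(1-p)\cdot(\text{D-side error})$ lies strictly below $1/3$ for every $p\in(1/3,2/3)$; the universal constant $c$ is then chosen large enough to absorb the constant blow-up from this amplification. Once $\R(f)\le r^*$ is established, the well-shuffling consequence follows immediately: if $\cost(G,r)\ge r^{1/a}/b$ for every $G\in\mathcal{G}$, then the defining inequality $\cost(G,r)\ge cT'$ holds once $r\ge(cbT')^a$, so $r^*=O(\Q(f)^a)$ and hence $\R(f)=O(\Q(f)^a)$ for every $f\in F(\mathcal{G})$.
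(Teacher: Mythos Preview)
Your overall strategy matches the paper's: simulate $Q'$ on $x\circ\alpha$ for $\alpha$ drawn from a hard distribution on $D_{n,r}$, and if this fails on some $x_0$, turn $Q'$ into a too-good distinguisher. The difficulty you flag is real, but your proposed fix (careful choice of $\eps_Q$, repetitions, and decision rule) cannot close it. You apply \lem{minimax} at error $1/3$, so a contradiction requires a distinguisher with $\mu^*$-average error at most $1/3$. But if $R$ fails on $x_0$ at threshold $\tau$, the only constraint on the $D$-side is $\E_{\alpha\sim\mu^*_D}[p(\alpha)]<1-\tau$ where $p(\alpha)\coloneqq\Pr[Q'(x_0\circ\alpha)=f(x_0)]$; this is compatible with $p(\alpha)$ putting mass roughly $1-\tau$ at values $\ge 1-\eps_Q$, where any rule that accepts $G$ must also accept. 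Even after learning $p(\alpha)$ exactly via unboundedly many repetitions and optimizing the decision rule, the best error simultaneously achievable on both sides is $(1-\tau)/(2-\tau)$ as $\eps_Q\to 0$, which drops below $1/3$ only when $\tau>1/2$---and then $R$ itself cannot be amplified. So no choice of $\tau<1/2$ yields $\mu^*$-error below $1/3$, and you never get the contradiction.

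The paper resolves this not by amplifying the distinguisher but by moving the error level at which minimax is invoked. It applies \lem{cost_minimax} at an $\epsilon$ close to $1/2$: the constant acceptance-probability gap (at least $20/27-2/3=2/27$) suffices, via a one-shot randomized rebalancing, to get error at most $\epsilon<1/2$ on both the uniform distribution on $G$ and the hard $\mu$, yielding $3\Q(f)\ge\cost_\epsilon(G,r)$. The amplification is then applied to $\cost$ itself---since $\cost_\epsilon(G,r)$ is the quantum query complexity of a Boolean function, standard error reduction gives $\cost(G,r)=O(\cost_\epsilon(G,r))$---and the resulting constant becomes the universal $c$. Your argument is easily repaired the same way: apply minimax at error $\epsilon$ close to $1/2$ rather than $1/3$, and absorb the $\cost_\epsilon$-to-$\cost$ factor into $c$.
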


To prove this theorem, we use the following minimax theorem for the
cost measure.

\begin{lemma}[Minimax for cost]\label{lem:cost_minimax}
Let $r,n\in\bN$ satisfy $r<n$, let $\epsilon\in[0,1/2)$,
and let $G$ be a permutation group on $[n]$. Then there is a distribution
$\mu$ on $D_{n,r}$ that is \emph{hard} in the following sense.
Let $\mu'$ be the uniform distribution on $G\subseteq[n]^n$.
Then any quantum algorithm for distinguishing $G$ from
$D_{n,r}$ which uses fewer than $\cost_\epsilon(G,r)$
queries must either make error $>\epsilon$ on average against $\mu$,
or else make error $>\epsilon$ against $\mu'$
(i.e., it fails to distinguish $\mu$ from the uniform distribution on $G$).
\end{lemma}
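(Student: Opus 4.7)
The plan is to deduce this refined minimax from the standard bounded-error minimax of \lem{minimax} via a symmetrization argument that exploits the left action of $G$ on itself. First, I would apply \lem{minimax} to the distinguishing function $f\colon G\cup D_{n,r}\to\B$ (with $f=1$ on $G$ and $f=0$ on $D_{n,r}$) to obtain a hard distribution $\nu$ on $G\cup D_{n,r}$: every quantum algorithm using fewer than $\cost_\epsilon(G,r)=\Q_\epsilon(f)$ queries errs with probability greater than $\epsilon$ on average against $\nu$. In general, the conditional distribution of $\nu$ on $G$ need not be uniform, so the remainder of the argument upgrades $\nu$ to a distribution whose restriction to $G$ is $\mu'$.

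The key observation is that $f$ is invariant under left composition by $G$: for any $\sigma\in G$, the map $x\mapsto \sigma\circ x$ sends $G$ to $G$ (since $G$ is a group) and preserves range sizes (since $\sigma$ is a permutation), so it carries $D_{n,r}$ into itself and hence $f(\sigma\circ x)=f(x)$. I would therefore define the symmetrized distribution $\nu^*\coloneqq \E_{\sigma\in G}[\sigma\circ\nu]$, where $\sigma\circ\nu$ is the pushforward of $\nu$ under $x\mapsto \sigma\circ x$. Because left multiplication makes $G$ act freely and transitively on itself, a short computation yields $\nu^*(x)=\nu(G)/|G|$ for every $x\in G$, so one may write $\nu^* = p\cdot \mu' + (1-p)\cdot \mu$ with $p\coloneqq \nu(G)$ and some probability distribution $\mu$ supported on $D_{n,r}$; this $\mu$ will be the distribution promised by the lemma.

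Next, I would show that $\nu^*$ inherits the hardness of $\nu$. Suppose, for contradiction, that some quantum algorithm $Q$ using fewer than $\cost_\epsilon(G,r)$ queries achieved average error at most $\epsilon$ against $\nu^*$. Construct the algorithm $Q'$ that samples a uniformly random $\sigma\in G$ internally and then simulates $Q$ on the ``virtual'' input $\sigma\circ x$: each query $Q$ would make to $\sigma\circ x$ is implemented by a query to $x$ together with a fixed unitary applying the known permutation $\sigma$ to the answer register, so $Q'$ uses the same number of queries as $Q$. Averaging over $\sigma$, the error of $Q'$ against $\nu$ equals the error of $Q$ against $\nu^*$, which is at most $\epsilon$; this contradicts the hardness of $\nu$, so $\nu^*$ must be hard as well.

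Finally, hardness of $\nu^*$ against any such $Q$ unfolds as $p\cdot \mathrm{err}(Q,\mu') + (1-p)\cdot \mathrm{err}(Q,\mu) > \epsilon$, and since a convex combination can exceed $\epsilon$ only if at least one term does, this yields the advertised dichotomy: $Q$ either errs more than $\epsilon$ on $\mu$ or more than $\epsilon$ on $\mu'$. The only delicate point is that the quantum simulation of a query to $\sigma\circ x$ from a query to $x$ is standard folklore but does depend on the precise choice of oracle model; in any reasonable convention the simulation is either free or introduces a universal constant overhead, which the statement can absorb into the definition of $\cost_\epsilon(G,r)$ without affecting downstream applications such as \thm{shuffle}.
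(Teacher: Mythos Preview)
Your proof is correct and follows essentially the same route as the paper's: apply \lem{minimax} to get a hard distribution $\nu$, symmetrize it by the action of $G$, observe the symmetrized distribution decomposes as a mixture of $\mu'$ and some $\mu$ on $D_{n,r}$, show via a one-query simulation that the symmetrized distribution remains hard, and read off the dichotomy by convexity. The only cosmetic difference is that you symmetrize by left composition $x\mapsto\sigma\circ x$ (post-processing the oracle answer through $\sigma$), whereas the paper symmetrizes by right composition $x\mapsto x\circ\pi$ (redirecting the query index through $\pi$); both actions preserve $G$ and $D_{n,r}$ and uniformize the $G$-component, and both simulations are exact with no query overhead in the standard oracle model, so your final hedge is unnecessary.
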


\begin{proof}
Let $f$ be the function that asks to distinguish $G$ from $D_{n,r}$
in the worst case. Then by the minimax theorem (\lem{minimax}), there is a hard
distribution $\nu$ for $f$, such that any quantum algorithm
using fewer than $\Q_\epsilon(f)=\cost_\epsilon(G,r)$ queries
must make more than $\epsilon$ error against $\nu$.
Let $\nu'$ be the distribution we get by applying a uniformly
random permutation from $G$ to a sample from $\nu$.
Then $\nu'$ is still a hard distribution for $f$. Indeed,
if it were not a hard distribution, there would be some quantum
algorithm $Q$ solving $f$ against $\nu'$ using too few queries;
but in that case, we could design an algorithm $Q'$ for solving
$f$ against $\nu$ simply by taking the input $x$, implicitly
applying a uniformly random $\pi$ from $G$ to permute the bits of $x$
(this can be done without querying $x$, simply by redirecting
all future queries $i\in[n]$ through the permutation $\pi$),
and then running $Q$ on the permuted string.

Now, note that composing
a uniformly random permutation from $G$ with an arbitrary (fixed)
permutation from $G$ gives a uniformly random permutation from $G$.
This means that $\nu'$ is some mixture of the uniform distribution $\mu'$
on $G\subseteq[n]^n$ and another distribution $\mu$ on $D_{n,r}$.
Then any algorithm that succeeds on both $\mu$ and $\mu'$ with
error at most $\epsilon$ will also succeed on $\nu'$ with error at most $\epsilon$,
from which the desired result follows.
\end{proof}

Using this lemma, we now prove \thm{shuffle}.

\begin{proof}[Proof of \protect\thm{shuffle}]
Let $Q$ be a quantum algorithm for $f$ that uses $\Q(f)$ queries.
Amplify it to $Q'$ by repeating $3$ times and taking the majority
vote; then it uses $3\Q(f)$ queries and makes worst-case error
$7/27$ instead of $1/3$.
Using \lem{cost_minimax}, let $\mu$ be the hard distribution on $D_{n,r}$
which is hard to distinguish from $G$ to error $\epsilon$,
where we pick $r$ later and pick $\epsilon$
to be a constant close to $1/2$. Sample $\alpha$ from $\mu$,
and consider the string $x\circ \alpha$
with $(x\circ\alpha)_i=x_{\alpha(i)}$.

Now, since $f$ is invariant
under $G$, $Q'$ outputs $f(x)$ with
error at most $7/27$ when run on $x\circ\pi$ for each $\pi\in G$.
In particular, consider picking $\pi$ from $G$ uniformly at random,
and running $Q'$ on $x\circ\pi$ where the string $x\in\Dom(f)$ is
fixed. Compare this to the behavior of $Q'$ on $x\circ\alpha$,
where $\alpha$ is sampled from the hard distribution $\mu$ on $D_{n,r}$.

If $Q'$ did not output $f(x)$ on $x\circ \alpha$
with error at most $1/3$, then we could convert $Q'$ to an
algorithm distinguishing $\pi$ from $\alpha$ with constant error. This is because $Q'$ outputs $f(x)$ with error at most $7/27<1/3$
on input $x\circ\pi$; hence $Q'$ behaves differently when run on
$x\circ\alpha$ and on $x\circ \pi$. We can convert $Q'$ to
an algorithm $Q''$ which hard codes the input $x$, and receives
either a random $\pi$ from $G$ or a random $\alpha$ from $\mu$
as input. This algorithm makes only $3\Q(f)$ queries
to $\pi$ or $\alpha$, but its acceptance probability
differs by a constant gap between the two distributions, which
(using some standard re-balancing) we can use to distinguish
$G$ from $\mu$ with at most constant error.

Now, assuming the distribution
$\mu$ was picked to be hard enough (i.e., $\epsilon$ was chosen
sufficiently close to $1/2$), this means that $3\Q(f)$,
the query cost of $Q''$, is at least $\cost_\epsilon(G,r)$.
Since $\cost_\epsilon(G,r)$ is the worst-case quantum query complexity
of a Boolean function, it can be amplified. We conclude that
if $Q'$ failed to output $f(x)$ on input $x\circ\alpha$ (with
$\alpha\leftarrow\mu$) with error at most $1/3$, then we have
$\Q(f)=\Omega(\cost(G,r))$, that is, $\Q(f)>\cost(G,r)/c$
for some universal constant $c$ (from amplification).

Now assume that $\Q(f)\le\cost(G,r)/c$. Then $Q'$ has error at most $1/3$
for computing $f(x)$ when run on $\alpha(x)$, with $\alpha$
chosen from $\mu$. Since $\alpha\in D_{n,r}$
uses at most $r$ alphabet symbols,
a randomized algorithm can simulate $Q'$ simply by picking
$\alpha$ from $\mu$ and querying all the $r$ bits of $x$ used
in the string $x\circ\alpha$, fully determining that string.
This algorithm $R$ uses $r$ queries, and makes at most $1/3$
error, so we conclude that $\R(f)\le r$.

By correctly picking $r$, we conclude that
$\R(f)\le \min\{r\in\bN:\cost(G,r)\ge c\Q(f)\}$,
as desired. Finally, note that if $\cost(G,r)\ge r^{1/a}/b$,
then by picking $r=(bc\Q(f))^a$ we get
$\cost(G,r)\ge c\Q(f)$. From this it follows that
$\R(f)\le (bc\Q(f))^a=O(\Q(f)^a)$, as desired.
\end{proof}

The upshot of \thm{shuffle} is that we can show a class of permutation groups
$\mathcal{G}$ does not allow super-polynomial quantum speedups
simply by showing that it is well-shuffling---that is, by showing
that $G\in\mathcal{G}$ is hard to distinguish from
the set of small-range functions $D_{n,r}$
using a quantum query algorithm.

\section{Showing permutation groups are well-shuffling}
\label{sec:wsgroups}

In this section, we introduce some tools for showing that a collection
of permutation groups is well-shuffling. Due to \thm{shuffle}, a well-shuffling
collection of permutation groups does not allow any super-polynomial
quantum speedups for the class of functions symmetric under it, so these tools
can be directly used to show that certain symmetries are not consistent
with large quantum speedups.

\subsection{The symmetric group}

The first fundamental result is that the class of full symmetric
permutation groups $S_n$ is well-shuffling.
This was shown by Zhandry \cite{Zha13} in a different context, though
we also provide a simpler proof by a reduction from the collision problem.

\begin{restatable}
{theorem}{zhandry}
\label{thm:zhandry}
There is a universal constant $C$ such that any quantum algorithm
distinguishing a permutation in $S_n$ from a string in $D_{n,r}$ must make at least $r^{1/3}/C$ queries.
\end{restatable}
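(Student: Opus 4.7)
\emph{Proof plan.} The plan is to reduce to the quantum lower bound for the (generalized) collision problem.

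I would start by applying the minimax lemma for cost (\lem{cost_minimax}), which reduces the task to the average-case setting: it suffices to exhibit a distribution $\mu'$ supported on $S_n$ and a distribution $\mu$ supported on $D_{n,r}$ such that no quantum algorithm using $o(r^{1/3})$ queries can distinguish them. This packaging converts what is nominally a worst-case query complexity bound into a standard collision-type indistinguishability bound.

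I would then take $\mu'$ to be the uniform distribution on $S_n$, which is exactly the uniform distribution on $1$-to-$1$ functions $[n] \to [n]$; and I would take $\mu$ to be the uniform distribution on exact $k$-to-$1$ functions $[n] \to [n]$, where $k$ is a positive integer close to $n/r$ chosen so that the resulting range size $n/k$ is at most $r$ (any $k$-to-$1$ function with $n/k \le r$ clearly lies in $D_{n,r}$). Distinguishing $\mu'$ from $\mu$ is then precisely the \emph{generalized collision problem} of distinguishing uniform $1$-to-$1$ functions from uniform $k$-to-$1$ functions on $[n]$. By the known quantum lower bound for this problem---due to Aaronson and Shi for $k = 2$, and extended by Ambainis and Kutin to arbitrary $k$---any quantum algorithm solving it requires $\Omega((n/k)^{1/3}) = \Omega(r^{1/3})$ queries. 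This gives the claimed bound on $\cost(S_n, r)$ and hence the theorem.

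The main point requiring care is that one really does need the \emph{generalized} collision bound, not just the basic $1$-to-$1$ vs.\ $2$-to-$1$ result on $[2r]$. A naive padding-style reduction from the basic case fails: extending a $1$-to-$1$ function on $[2r]$ to a permutation of $[n]$ requires filler with large range, while extending a $2$-to-$1$ function to a small-range function on $[n]$ requires filler with small range, and the appropriate choice of filler cannot be made without already knowing which of the two cases one is in. So I would either invoke the generalized collision bound directly, or reduce the generalized case to the $k = 2$ case in a separate short argument---for instance, by restricting a $k$-to-$1$ instance on $[n]$ to a uniformly random subdomain of size $\Theta(n/k)$, which in distribution turns it into an approximately $2$-to-$1$ instance to which the basic Aaronson--Shi bound applies.
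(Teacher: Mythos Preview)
Your approach is essentially the same as the paper's: both reduce directly to the generalized collision lower bound ($1$-to-$1$ versus $k$-to-$1$ for $k\approx n/r$), citing Aaronson--Shi, Ambainis, and Kutin. The paper's write-up is a bit simpler, though, because it skips the distributional detour entirely. Since every $1$-to-$1$ function lies in $S_n$ and every $(n/r)$-to-$1$ function lies in $D_{n,r}$, the worst-case problem of distinguishing $1$-to-$1$ from $(n/r)$-to-$1$ is literally a \emph{sub-problem} of the worst-case problem of distinguishing $S_n$ from $D_{n,r}$, so the collision lower bound transfers immediately with no need to talk about $\mu,\mu'$ or invoke any minimax-type statement. (The paper handles the divisibility issue by passing to $n'=r\lceil n/r\rceil$.)

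One small correction: your invocation of \lem{cost_minimax} is in the wrong direction. That lemma says a hard distribution \emph{exists}; it does not help you prove a lower bound by \emph{exhibiting} one. What you actually need is the trivial (``easy'') direction of Yao --- that any particular pair of distributions on the yes/no sides lower-bounds the worst-case complexity --- which requires no lemma at all. Your argument is still sound once you drop the reference to \lem{cost_minimax}. Also, your fallback sketch for reducing general $k$ to $k=2$ by random subdomain restriction is shaky as stated (restricting a $k$-to-$1$ function to a random subdomain of size $\Theta(n/k)$ does not produce something close to $2$-to-$1$); better to simply cite the generalized bound directly, as both you and the paper ultimately do.
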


This theorem says that $S_n$ is hard to distinguish from
$D_{n,r}$ (moreover, Zhandry \cite{Zha13} showed
that the hard distribution over $D_{n,r}$
is uniform, but we do not need this fact).

\begin{proof}
When $n$ is a multiple of $r$, then each $(n/r)$-to-$1$
function has range $r$ and each $1$-to-$1$ function is a permutation;
hence distinguishing $(n/r)$-to-$1$ from $1$-to-$1$ functions
is a sub-problem of distinguishing $D_{n,r}$ from $S_n$.
This sub-problem is the collision problem,
from which an $\Omega(r^{1/3})$
lower bound directly follows \cite{AS04,Amb05,Kut05}.
When $n$ is not a multiple of $r$ but $r\le n/2$, we can just set
$n'=r\lceil n/r\rceil$, and then distinguishing $(n'/r)$-to-$1$
from $1$-to-$1$ functions with domain size $n'$ still
reduces to distinguishing $D_{n,r}$ from $S_n$.
\end{proof}

From \thm{zhandry}, the following two corollaries immediately follow
(in light of \thm{shuffle}).

\begin{corollary}
The set of symmetric groups $\mathcal{S}=\{S_n\}_{n\in\bN}$
is well-shuffling with power $3$.
\end{corollary}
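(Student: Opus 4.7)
The plan is to derive this corollary directly from \thm{zhandry} by a straightforward matching of parameters to \defn{shuffling}. Recall that a class $\mathcal{G}$ of permutation groups is well-shuffling with power $a$ iff there exists $b > 0$ such that $\cost(G,r) \ge r^{1/a}/b$ for all $G \in \mathcal{G}$ and all $r \in \bN$. So I need to exhibit a single constant $b$ that works uniformly for every $S_n \in \mathcal{S}$ and every $r$.

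First I would split on the range of $r$. When $r \ge n$, the definition of $\cost_\epsilon$ stipulates $\cost(S_n,r) = \infty$, so the inequality $\cost(S_n,r) \ge r^{1/3}/b$ holds trivially regardless of the choice of $b$. This case requires no work.

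For the remaining case $r < n$, I would apply \thm{zhandry} directly: it produces a universal constant $C$ such that any quantum algorithm distinguishing a permutation in $S_n$ from a string in $D_{n,r}$ must make at least $r^{1/3}/C$ queries. By definition of $\cost$, this is exactly the statement $\cost(S_n, r) \ge r^{1/3}/C$. Setting $b := C$ and $a := 3$ in \defn{shuffling} and combining both cases, we conclude that $\mathcal{S}$ is well-shuffling with power $3$.

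The argument contains no real obstacle since \thm{zhandry} is stated with the quantitative lower bound already tuned to match the definition; the only thing to be careful about is the uniformity of the constant $C$ across all values of $n$, which is exactly what \thm{zhandry} provides (the constant is universal, not depending on $n$). Hence the proof is essentially a one-line unpacking of definitions.
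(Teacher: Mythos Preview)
Your proposal is correct and matches the paper's approach: the paper simply states that this corollary follows immediately from \thm{zhandry}, and your argument is just a careful unpacking of that one-line observation, handling the trivial $r\ge n$ case and reading off the bound $\cost(S_n,r)\ge r^{1/3}/C$ for $r<n$ with the universal constant $C$.
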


\begin{corollary}
\label{cor:zhandry}
All (possibly partial) Boolean functions $f$ that are symmetric
under the full symmetric group $S_n$ satisfy
$\R(f)=O(\Q(f)^3)$.
\end{corollary}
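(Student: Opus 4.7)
The plan is to combine the two ingredients assembled immediately above: the well-shuffling power of $S_n$ from \thm{zhandry}, and the general reduction from well-shuffling to randomized-vs-quantum speedups from \thm{shuffle}. Since the corollary is stated as an immediate consequence, the work is really just tracing the definitions and constants through.

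First I would observe that \thm{zhandry} provides constants so that $\cost(S_n,r) \ge r^{1/3}/C$ for every $n$ and every $r \in \bN$ (for $r \ge n$ this is trivial since $\cost$ is set to $\infty$). By \defn{shuffling}, this is exactly the statement that the collection $\mathcal{S} = \{S_n\}_{n\in\bN}$ is well-shuffling with power $a=3$ (and constant $b = C$), which is the content of the unnamed corollary just before \cor{zhandry}. Thus the hypothesis of the ``consequently'' clause of \thm{shuffle} is satisfied with $a=3$.

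Next I would invoke \thm{shuffle} directly: given any partial Boolean $f$ with $\Dom(f) \subseteq \Sigma^n$ that is symmetric under $S_n$, we have $f \in F(\mathcal{S})$, and therefore $\R(f) = O(\Q(f)^3)$. Concretely, if one prefers to avoid appealing to the ``consequently'' statement, one can instead use the explicit bound $\R(f) \le \min\{r \in \bN : \cost(S_n,r) \ge c\Q(f)\}$ from \thm{shuffle}, and then pick $r = \lceil (Cc\,\Q(f))^3 \rceil$, which guarantees $\cost(S_n,r) \ge r^{1/3}/C \ge c\,\Q(f)$, yielding $\R(f) \le (Cc\,\Q(f))^3 + 1 = O(\Q(f)^3)$.

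Since both cited results have already been proven in the excerpt, there is no substantive obstacle: the proof is just ``apply \thm{shuffle} to the well-shuffling power supplied by \thm{zhandry}.'' The only thing to check is the trivial bookkeeping that the constants $a$ and $b$ from \defn{shuffling} match those appearing in \thm{zhandry}, which they do with $a = 3$ and $b = C$.
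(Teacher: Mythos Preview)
Your proposal is correct and matches the paper's approach exactly: the paper simply states that \cor{zhandry} follows immediately from \thm{zhandry} (giving well-shuffling power $3$ for $\mathcal{S}=\{S_n\}$) in light of \thm{shuffle}. Your write-up makes the constant-tracing explicit, but the underlying argument is the same two-line deduction.
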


Apart from \thm{zhandry}, the main tools we use to prove
the well-shuffling property are transformations on permutation groups that
approximately preserve $\cost(G,r)$.
We outline several such transformations and invariances.
Since we prove \thm{zhandry} by a reduction from collision,
and since our main tools from here on out are additional reductions,
effectively all lower bounds in this paper work by
reductions from collision.

\subsection{Reduction between similar-looking permutation groups}

We next show that similar-looking permutation groups have similar costs. Here, two groups $G$ and $H$ are ``similar looking'' if a random permutation from $G$ is almost indistinguishable from a random permutation from $H$ when queried in sufficiently few places. More formally, we have the following theorem.

\begin{theorem}[Similar-looking permutation groups have similar costs]
\label{thm:transitivity_transformation}
Suppose $G$ and $H$ are permutation groups on $[n]$
and $k\le n$ is a positive integer such that for each
$i_1,i_2,\dots,i_k,j_1,j_2,\dots,j_k\in[n]$,
\[\left|\Pr_{\pi\leftarrow G}[\forall \ell\;\pi(i_\ell)=j_\ell]
-\Pr_{\pi\leftarrow H}[\forall \ell\;\pi(i_\ell)=j_\ell]\right|
\le n^{-10k}.\]
Then $\cost(H,r)\ge\Omega(\min\{k,\cost(G,r)\})$.
In particular, if $k\ge n^{\Omega(1)}$ and
$\cost(G,r)\ge r^{\Omega(1)}$, then $\cost(H,r)\ge r^{\Omega(1)}$.
\end{theorem}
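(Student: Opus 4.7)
The plan is a reduction in two steps: (i) convert a low-cost algorithm for distinguishing $H$ from $D_{n,r}$ into one that behaves similarly on the uniform distribution $\nu_G$ over $G$, using the $k$-wise marginal closeness hypothesis as the bridge; and (ii) conclude from the minimax characterization of $\cost(G,r)$ that such an algorithm must already have cost $\Omega(\cost(G,r))$, unless $T = \Omega(k)$. Concretely, suppose $A$ is a $T$-query quantum algorithm witnessing $\cost_\epsilon(H,r) = T$, so $A$ accepts every $\pi \in H$ with probability at least $1-\epsilon$ and rejects every $\alpha \in D_{n,r}$ with probability at least $1-\epsilon$. Averaging, $A$ accepts the uniform distribution $\nu_H$ on $H$ with probability at least $1-\epsilon$.

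The key technical step is to show that when $T \le ck$ for a small absolute constant $c$,
\[
\bigl|\Pr_{\pi \sim \nu_G}[A \text{ accepts}] - \Pr_{\pi \sim \nu_H}[A \text{ accepts}]\bigr| \le n^{-\Omega(k)}.
\]
Granting this, $A$ also accepts $\nu_G$ with probability at least $1 - \epsilon - o(1)$. By \lem{cost_minimax} applied to $G$, let $\mu$ be the corresponding hard distribution on $D_{n,r}$: since $\mu$ is supported on $D_{n,r}$ and $A$ rejects there worst-case, $A$ has at most $\epsilon$ average error on $\mu$ and at most $\epsilon + o(1)$ average error on $\nu_G$. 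This contradicts the hardness of $\mu$ unless $T \ge \cost_{\epsilon+o(1)}(G,r) = \Omega(\cost(G,r))$, where the final equality is standard error amplification. Combining with the complementary case $T > ck$ yields $\cost(H,r) = \Omega(\min\{k, \cost(G,r)\})$; the ``in particular'' conclusion follows directly since $r \le n$.

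The main obstacle, and the only nontrivial quantum content, is the closeness claim. I would prove it by the polynomial method: the acceptance probability of a $T$-query quantum algorithm on oracle $\pi$ is a polynomial $p(\pi)$ of degree at most $2T$ in the indicator variables $x_{i,j} := [\pi(i) = j]$, obtained by expanding $\langle \psi_T(\pi) | \Pi_{\mathrm{accept}} | \psi_T(\pi) \rangle$ as a sum over pairs of Feynman query paths. Taking expectations writes $\mathbb{E}_{\pi \sim \mathcal{D}}[p(\pi)]$ as a fixed linear combination of at most $n^{O(T)}$ marginal probabilities of the form $\Pr_{\pi \sim \mathcal{D}}[\pi(i_\ell) = j_\ell \text{ for all } \ell]$ over tuples of length at most $2T$, with coefficient $L_1$-norm bounded by another factor of $n^{O(T)}$ via Cauchy--Schwarz on the path amplitudes. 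Each length-$m$ marginal difference is at most $n^{k-m} \cdot n^{-10k} \le n^{-9k}$ (by summing the hypothesis over $n^{k-m}$ extensions to length $k$), so the overall distinguishing advantage is at most $n^{O(T) - 9k}$, which is $n^{-\Omega(k)}$ for $T \le ck$ with $c$ small enough; the generous $10k$ in the hypothesis is precisely designed to absorb this polynomial counting factor. An equivalent and perhaps cleaner route goes through the compressed-oracle framework, in which the joint algorithm-oracle state after $T$ queries has literal support only on ``databases'' of size at most $T$, making the dependence on $\mathcal{D}$ through its $(\le 2T)$-wise marginals manifest.
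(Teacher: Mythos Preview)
Your plan coincides with the paper's: represent the acceptance probability of the $T$-query algorithm as a degree-$2T$ polynomial in the indicators $x_{i,j}=[\pi(i)=j]$, bound $|\mathbb{E}_{\nu_G}p-\mathbb{E}_{\nu_H}p|$ by (coefficient $L_1$-norm)$\times$(maximum monomial-expectation gap), and conclude by reduction. Two comments.

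First, the appeal to \lem{cost_minimax} is superfluous. Since $A$ already rejects every $\alpha\in D_{n,r}$ with probability $\ge 1-\epsilon$ in the worst case, once you know it accepts $\nu_G$ with probability $\ge 1-\epsilon-o(1)$ you directly have an algorithm distinguishing $G$ from $D_{n,r}$ to error $\epsilon+o(1)$, hence $T\ge\cost_{\epsilon+o(1)}(G,r)=\Omega(\cost(G,r))$. This is exactly how the paper argues; no hard distribution is needed.

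Second, and more substantively, the step ``coefficient $L_1$-norm $\le n^{O(T)}$ via Cauchy--Schwarz on the path amplitudes'' is not justified as written. A raw Feynman-path expansion of $\langle\psi_T|\Pi_{\mathrm{acc}}|\psi_T\rangle$ produces an $L_1$ coefficient norm that scales with the \emph{workspace dimension}, not merely with $n$ and $T$, so Cauchy--Schwarz on the amplitudes alone does not give $n^{O(T)}$. The paper obtains the needed bound $\|p\|_1\le n^{3d}$ (with $d=2T$) by a different, elementary route: for each degree-$d$ monomial $m$ in the variables $\{z_{ij}\}$, let $p_m$ be the sum of all terms of $p$ supported on the variables in $m$; feeding the algorithm an input that sets $x_i=*$ for every index $i$ outside the support of $m$ shows that $p_m$ is still an acceptance probability and hence bounded in $[0,1]$ on $\{0,1\}^d$, which forces $\|p_m\|_1\le 5^d$. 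Since there are at most $\binom{n^2}{d}\le n^{2d}$ choices of $m$ and every term of $p$ appears in some $p_m$, this yields $\|p\|_1\le n^{2d}\cdot 5^d\le n^{3d}$. With this replacement your argument goes through verbatim and matches the paper's proof. (The compressed-oracle route you mention would also work but is heavier machinery than required here.)
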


\begin{proof}
Let $Q$ be a quantum algorithm for distinguishing
$H$ from $D_{n,r}$ which uses $\cost(H,r)$ and achieves worst-case
error $1/3$. If $\cost(H,r)\ge k$, we are done, so assume
$\cost(H,r)<k$. Now, $Q$ can be converted into a polynomial of
degree at most $2\cost(H,r)$ in the variables $z_{ij}$, where
$z_{ij}=1$ if the input $x$ satisfies $x_i=j$ and otherwise $z_{ij}=0$
(see \cite{AS04}).
This polynomial $p$ satisfies $p(x)\in [0,1/3]$ if $x\in D_{n,r}$ and
$p(x)\in[2/3,1]$ if $x\in H$. It has $n^2$ variables and degree
$d=2\cost(H,r)$. We assume it has no monomials that always
evaluate to $0$ (for example, $z_{11}z_{12}$, which is always
$0$ as $x_1$ cannot be both $1$ and $2$), because if it had such
monomials we could just delete them.

We claim that the sum of absolute values of coefficients of $p$ is at most
$n^{3d}$, where $d=2\cost(H,r)$ is its degree. To see this,
first note that there are at most $\binom{n^2}{d}$ monomials of $p$ of degree
$d$; for each such monomial $m$, let $p_m$ be the polynomial consisting
of all terms in $p$ that use a subset of the variables in $m$.
Then the sum of the absolute values of the coefficients of $p$ is at most
$\binom{n^2}{d}$ times the maximum sum of absolute values of the coefficients
in one of the polynomials $p_m$; since $\binom{n^2}{d}\le n^{2d}$, it suffices
to upper bound the sum of absolute values of coefficients of $p_m$
for arbitrary $m$. Now, $m$ consists of $d$ variables $z_{i_tj_t}$ for $t=1,2,\dots, d$,
which equal $1$ when $x_{i_t}=j_t$
and equal $0$ otherwise. Consider feeding into the quantum
algorithm an input string where $x_i=*$ when $i\notin\{i_1,i_2,\dots,i_d\}$,
and $x_{i_t}$ is either $j_t$ or $*$ for $t=1,2,\dots, d$. The quantum algorithm
will accept the string with some probability between $0$ and $1$, which
means the polynomial $p$ computing the acceptance probability of $Q$ will evaluate to
something between $0$ and $1$. But such inputs ``zero out'' all terms
that use variables outside of $m$, and hence turn $p$ into $p_m$.
From this we can conclude that $p_m$ is bounded in $[0,1]$ for all inputs
it receives in $\B^d$. But polynomials bounded in $[0,1]$ on the Boolean hypercube
can have sum of coefficients at most $5^d$ (one way to see this is
to recall that a bounded polynomial in the $\{-1,1\}$ basis has its sum
of squares of coefficients equal to at most $1$, and has at most $2^d$
coefficients, so by Cauchy-Schwartz, the sum of absolute values of coefficients
is at most $2^{d/2}$; converting the $\{-1,1\}$ basis to the $\B$ basis
requires plugging $2z-1$ terms into the variables, which can increase
the sum of absolute values by a factor of at most $3^d$, for a total of at most
$(3\sqrt{2})^d\le 5^d$). Assuming $n\ge 5$, we get an upper bound of $n^{3d}$
on the sum of absolute values of coefficients of $p$.

We have $d\le 2k$, so this sum is also at most $n^{6k}$.
Now, on each input $x$, the expected output of $p(\pi(x))$
when $\pi$ is sampled uniformly from $H$ is a linear combination
of the expectations of the monomials of $p$. For each monomial,
this expectation is just the probability that the monomial
is satisfied, which by the condition on $G$ and $H$
is within $n^{-10k}$ of the expectation under $\pi\leftarrow G$.
It follows that the expectation of $p(\pi(x))$ when $\pi\leftarrow H$
is within $n^{6k}n^{-10k}=n^{-4k}$ of the expectation of $p(\pi(x))$
when $\pi\leftarrow G$. But this expectation is simply the acceptance
probability of $Q$. Hence the acceptance probability of $Q$
on the uniform distribution on $H$ is within $n^{-4k}$
of the acceptance probability of $Q$ on the uniform distribution on $G$.

Since $Q$ distinguishes $H$ from $D_{n,r}$, it distinguishes
the uniform distribution on $H$ from any string in $D_{n,r}$.
Since it does not distinguish the uniform distribution on $H$
from the uniform distribution on $G$, $Q$ must also
distinguish the uniform distribution on $G$ from any input in
$D_{n,r}$ to error $1/3+n^{-4k}$. By amplifying, we can
get this down to error $1/3$, meaning that $\cost(G,r)=O(\cost(H,r))$,
as desired.
\end{proof}

An immediate consequence is that highly transitive permutation groups are well-shuffling:

\begin{corollary}\label{cor:informal_transitive}
If $G$ is $k$-transitive, then $\cost(G,r)=\Omega(\min\{k,r^{1/3}\})$,
where the constant in the big-$\Omega$ is universal.
\end{corollary}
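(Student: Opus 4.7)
The plan is to obtain the bound by reduction from the symmetric group, using the two tools already proven just above: \thm{transitivity_transformation} (which lets us transfer a cost lower bound from one group to a similar-looking one) and \thm{zhandry} (which gives the cost lower bound for $S_n$). Specifically, I will apply \thm{transitivity_transformation} with the role of the reference group $G$ played by $S_n$ and the role of the target group $H$ played by the given $k$-transitive group, call it $G_0$.

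The first step is to verify the probability-matching hypothesis of \thm{transitivity_transformation}: that for every $i_1,\dots,i_k,j_1,\dots,j_k\in[n]$,
\[
\Bigl|\Pr_{\pi\leftarrow S_n}[\forall \ell\; \pi(i_\ell)=j_\ell]
-\Pr_{\pi\leftarrow G_0}[\forall \ell\; \pi(i_\ell)=j_\ell]\Bigr|\le n^{-10k}.
\]
I claim these probabilities are in fact \emph{equal}, so the hypothesis holds trivially. Since every permutation preserves the pattern of equalities among coordinates, both probabilities are zero unless the multiset pattern of $(i_1,\dots,i_k)$ matches that of $(j_1,\dots,j_k)$. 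When the patterns match, after removing duplicates we are left with $k'\le k$ distinct indices mapped to $k'$ distinct targets. Because $G_0$ is $k$-transitive it is also $k'$-transitive, so by orbit-stabilizer the orbit of $(i_1,\dots,i_{k'})$ in $G_0$ consists of all $n!/(n-k')!$ distinct $k'$-tuples, and hence the probability of hitting $(j_1,\dots,j_{k'})$ equals $(n-k')!/n!$. The same computation gives the same value for $S_n$, so the two probabilities coincide exactly.

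Having verified the hypothesis, the second step is simply to invoke \thm{transitivity_transformation}, which yields
\[
\cost(G_0,r)\ge \Omega\bigl(\min\{k,\cost(S_n,r)\}\bigr),
\]
and then to substitute the bound $\cost(S_n,r)\ge r^{1/3}/C$ from \thm{zhandry} to conclude $\cost(G_0,r)=\Omega(\min\{k,r^{1/3}\})$ with a universal constant, exactly as claimed.

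I do not anticipate any real obstacle here: the only substantive step is the orbit-stabilizer computation showing that $k$-transitivity forces exact equality of the $k$-point marginals with those of $S_n$, which is a standard consequence of the definition of $k$-transitivity. All the quantitative work has already been done in the preceding theorems, and this corollary is essentially a packaging of those results together with the observation that $k$-transitivity is a very strong form of the similarity hypothesis of \thm{transitivity_transformation}.
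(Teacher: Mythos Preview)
Your proposal is correct and follows essentially the same approach as the paper: apply \thm{transitivity_transformation} with the reference group $S_n$ and the target group being the given $k$-transitive group, observing that $k$-transitivity forces the $k$-point marginal probabilities to coincide exactly (so the difference is $0\le n^{-10k}$), and then invoke \thm{zhandry}. Your treatment of possibly repeated indices via the equality-pattern reduction to $k'\le k$ distinct points is a nice explicit detail that the paper's proof leaves implicit.
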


\begin{proof}
This follows directly from \thm{transitivity_transformation},
setting $H$ to be the $k$-transitive permutation group we care about and setting
$G=S_n$. To see this, observe that $k$-transitivity completely determines
$\Pr_{\pi\leftarrow G}[\forall \ell\in[k]\;\pi(i_\ell)=j_\ell]$,
and that both $G$ and $H$ are $k$-transitive; hence this expression
is the same for both $G$ and $H$, and the difference between the two expressions
is exactly $0$ (certainly less than $n^{-10k}$).
\end{proof}

However, \cor{informal_transitive} is less powerful than it seems: it is a consequence of the classification of finite simple groups that the only $6$-transitive permutation groups are the symmetric group $S_n$ for $n \ge 6$ and the alternating group $A_n$ for $n \ge 8$, both in their natural action on $[n]$ \cite{DM12}. Nevertheless, it will be important for us later that the alternating group $A_n$, which is $(n-2)$-transitive, satisfies $\cost(A_n, r) = \Omega(r^{1/3})$.

\subsection{Transformations for graph symmetries}

Next, we introduce some additional transformations on permutation groups
that approximately preserve the cost. The transformations in this section
will allow us to show that graph property permutation groups (and several variants of them)
are well-shuffling.

\subsubsection{Transformation for directed graphs}

We start by defining an extension of a permutation group $G$ on $[n]$ to a permutation group acting on $[n]^\ell$. The notation in the definition below
comes from \cite{Ker13}.

\begin{definition}\label{def:power_action}
Let $G$ be a permutation group on domain $D$, and let $\ell\in\bN$.
Define $G^{(\ell)}$ to be the permutation group that acts on domain $D^\ell$
by $\pi(i_1,i_2,\dots,i_\ell)=(\pi(i_1),\pi(i_2),\dots,\pi(i_\ell))$
for each $\pi\in G$ (so the number of permutations in $G^{(\ell)}$
is the same as the number of permutations in $G$).

Define $G^{\langle\ell\rangle}$ to be the permutation group $G^{(\ell)}$
with domain restricted
to the subset $D^{\langle\ell\rangle}\subseteq D^\ell$
consisting of all distinct $\ell$-tuples of elements of $D$.
\end{definition}

We show that these transformations both preserve the cost, at least when $\ell$
is constant. We start with $G^{(\ell)}$.

\begin{theorem}
\label{thm:exponentiation}
Let $G$ be a permutation group on $[n]$, and let $H$ be
the permutation group $G^{(\ell)}$. Then $\cost(H,r^\ell)\ge\cost(G,r)/\ell$.
\end{theorem}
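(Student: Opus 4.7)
The plan is to give a direct reduction from distinguishing $G$ versus $D_{n,r}$ to distinguishing $H=G^{(\ell)}$ versus $D_{n^\ell,r^\ell}$, at a multiplicative cost of $\ell$ in the number of queries. Concretely, I will define a deterministic map $x\mapsto y$ from $[n]^n$ to $[n^\ell]^{n^\ell}$ (after identifying $[n]^\ell$ with $[n^\ell]$) by
\[y_{(i_1,\dots,i_\ell)} \;:=\; (x_{i_1},\, x_{i_2},\, \dots,\, x_{i_\ell}).\]
This map sends $G$ into $H$: if $x\in[n]^n$ represents a permutation $\pi\in G$ via $x_i=\pi(i)$, then $y$ represents the permutation $(i_1,\dots,i_\ell)\mapsto(\pi(i_1),\dots,\pi(i_\ell))$, which is exactly the corresponding element of $H=G^{(\ell)}$ according to \defn{power_action}. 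The same map sends $D_{n,r}$ into $D_{n^\ell,r^\ell}$: if the range of $x$ has size at most $r$, then every entry of $y$ lies in $\{x_1,\dots,x_n\}^\ell$, a set of size at most $r^\ell$.

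The second step is to simulate a single oracle call to $y$ at position $(i_1,\dots,i_\ell)$ using $\ell$ standard oracle calls to $x$. The $y$-query is the unitary
\[\bigl|(i_1,\dots,i_\ell),\, w_1,\dots,w_\ell\bigr\>\;\longmapsto\;\bigl|(i_1,\dots,i_\ell),\, w_1\oplus x_{i_1},\dots,\, w_\ell\oplus x_{i_\ell}\bigr\>,\]
which we can implement by splitting the output register into $\ell$ named blocks and, for $k=1,\dots,\ell$, performing one standard $x$-query that uses $i_k$ as the index register and XORs $x_{i_k}$ into the $k$-th block. Each sub-query is a controlled oracle call on a subsystem of the algorithm's state, so the whole composition is a legitimate quantum circuit. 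Consequently, any $T$-query quantum algorithm $Q$ that distinguishes $H$ from $D_{n^\ell,r^\ell}$ to worst-case error $1/3$ yields an $\ell T$-query quantum algorithm that distinguishes $G$ from $D_{n,r}$ to the same error (it simply runs $Q$ on the implicit string $y$ constructed from its own oracle $x$).

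Setting $T=\cost(H,r^\ell)$, this reduction gives $\cost(G,r)\le \ell\cdot\cost(H,r^\ell)$, which rearranges to the claimed bound $\cost(H,r^\ell)\ge\cost(G,r)/\ell$. There is no serious obstacle here: the only step requiring care is the coherent simulation of one $y$-query by $\ell$ $x$-queries, but this is standard because the $\ell$ output blocks are disjoint, so the sub-queries commute with the rest of the algorithm's state and can be composed without uncomputation. (The edge case $r\ge n$ is vacuous, since both sides are infinite.)
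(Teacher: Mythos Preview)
Your proposal is correct and takes essentially the same approach as the paper: both define the same map $x\mapsto y$ with $y_{(i_1,\dots,i_\ell)}=(x_{i_1},\dots,x_{i_\ell})$, observe that it sends $G$ into $H$ and $D_{n,r}$ into $D_{n^\ell,r^\ell}$, and simulate one $y$-query by $\ell$ $x$-queries. The only cosmetic difference is that the paper routes the argument through a hard distribution for $G$ (via minimax), whereas you give the cleaner direct worst-case reduction; the content is the same.
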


\begin{proof}
Let $Q$ be an algorithm distinguishing $H$ from $D_{n^\ell,r^\ell}$.
Let $\mu$ be the hard distribution for $G$, such that no algorithm
using fewer than $\cost(G,r)$ queries can distinguish $\mu$ from the
uniform distribution on $G$. Then $\mu$ is a distribution
on $D_{n,r}$. Let $\mu'$ be the distribution on $D_{n^\ell,r^{\ell}}$
that we get by sampling $\alpha\leftarrow\mu$ and returning
$\alpha'$ defined by
$\alpha'(z)=(\alpha(z_1),\alpha(z_2),\dots,\alpha(z_\ell))$
for each $z\in[n]^\ell$ (here we identify $[n^\ell]$ with $[n]^\ell$).
Note that if $\alpha$ has range $r$, then $\alpha'$ has range
at most $r^\ell$.

Then $Q$ distinguishes $\mu'$ from the uniform distribution
on $H$. The latter distribution is the same as obtained when sampling $\pi$ uniformly from $G$ and returning $\pi'$
defined by $\pi'(z)=(\pi(z_1),\pi(z_2),\dots,\allowbreak \pi(z_\ell))$
for $z$ in the domain of $H$.
This means that $Q$ can be used to distinguish $\mu$ from the
uniform distribution on $G$: all we need is to simulate
every query of $Q$ using $\ell$ queries to the input $\alpha$.
The desired result follows.
\end{proof}

To handle $G^{\langle \ell\rangle}$, we first observe that restricting the domain
of a permutation group to some union of its orbits does not decrease its cost.

\begin{lemma}
\label{lem:cutting_into_orbits}
Let $G$ be a permutation group on $[n]$, and let $S\subseteq[n]$
be a union of orbits of $G$. Let $G'$ be the permutation group $G$
acting only on $S$. Then $\cost(G',r)\ge\cost(G,r)$.
\end{lemma}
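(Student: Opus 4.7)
The plan is to establish $\cost(G',r) \ge \cost(G,r)$ by direct simulation: I would take any quantum algorithm that distinguishes $G'$ from $D_{|S|,r}$ and convert it into an algorithm distinguishing $G$ from $D_{n,r}$ using the same number of queries, which forces $\cost(G,r) \le \cost(G',r)$. The starting observation is that since $S$ is a union of $G$-orbits, every $\pi \in G$ satisfies $\pi(S) = S$, so the restriction $\pi|_S$ is a bijection of $S$ lying in $G'$ (after identifying $S$ with $[|S|]$ via a fixed bijection). Equivalently, for $x \in G \subseteq [n]^n$, the string $x|_S \in S^S$ belongs to $G'$.

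Next, I would fix any $s_0 \in S$ and define an auxiliary map $\phi \colon [n] \to S$ by $\phi(i) = i$ for $i \in S$ and $\phi(i) = s_0$ for $i \in [n] \setminus S$. Given a quantum algorithm $A'$ that distinguishes $G'$ from $D_{|S|,r}$ using $T$ queries, I build an algorithm $A$ as follows: on input $x \in [n]^n$, run $A'$, and each time $A'$ queries an index $i \in S$, $A$ queries $x_i$ and feeds $\phi(x_i) \in S$ back to $A'$, finally outputting whatever $A'$ outputs. This uses exactly $T$ queries to $x$. For correctness, when $x \in G$ the virtual input that $A'$ sees is exactly $x|_S \in G'$, since $\phi$ fixes $S$ pointwise and $x_i \in S$ for every $i \in S$. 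When $x \in D_{n,r}$, the virtual input is $\phi \circ x|_S$, whose image equals $\phi(\{x_i : i \in S\})$; since $\phi$ is a (single-valued) function, this image has cardinality at most $|\{x_i : i \in S\}| \le r$, placing the virtual input in $D_{|S|,r}$.

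The only delicate point worth highlighting is the range bound in the second case, which relies on $\phi$ being a total function so that its image cannot enlarge the set $\{x_i : i \in S\}$; collapsing $[n]\setminus S$ to a single point $s_0$ makes this automatic. Everything else is bookkeeping around the identification of $S$ with $[|S|]$ and the definition of $\cost$, so combining the simulation with the hypothesis that $A'$ achieves error $1/3$ yields an algorithm $A$ with the same error and query count, giving $\cost(G,r) \le T$ for every achievable $T$, hence $\cost(G,r) \le \cost(G',r)$.
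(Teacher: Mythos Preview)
Your proof is correct and takes essentially the same approach as the paper: simulate an algorithm for $G'$ by restricting the input $x\in[n]^n$ to the indices in $S$. You are more careful than the paper in making explicit the alphabet-collapsing map $\phi$ to ensure that for $x\in D_{n,r}$ the virtual input actually lands in $D_{|S|,r}\subseteq[|S|]^{|S|}$, a detail the paper leaves implicit.
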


\begin{proof}
We identify $S$ with $[|S|]$ without loss of generality.
If $Q$ distinguishes $G'$ from $D_{|S|,r}$, then
we can turn it into $Q'$ distinguishing $G$ from $D_{n,r}$ by
having $Q'$ run $Q$ and make queries only from $[|S|]$.
\end{proof}

The fact that $G^{\langle\ell\rangle}$ does not decrease the cost of $G$ too much then follows as a corollary of \thm{exponentiation} and \lem{cutting_into_orbits}.

\begin{corollary}\label{cor:angle_power}
Let $G$ be a permutation group on $[n]$, and let $H$ be the permutation group $G^{\langle\ell\rangle}$. Then $\cost(H,r^\ell)\ge\cost(G,r)/\ell$.
\end{corollary}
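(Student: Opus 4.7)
The plan is to observe that Corollary~4.10 follows by directly chaining the two results just proved, once we verify that $[n]^{\langle\ell\rangle}$ sits inside $[n]^\ell$ in the ``right'' way relative to the action of $G^{(\ell)}$.

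First I would check the key structural fact: the subset $[n]^{\langle\ell\rangle}\subseteq[n]^\ell$ of distinct $\ell$-tuples is a union of orbits of $G^{(\ell)}$. This is immediate from the coordinate-wise definition of the action: for any $\pi\in G$ and any distinct $(i_1,\dots,i_\ell)$, the image $(\pi(i_1),\dots,\pi(i_\ell))$ is also distinct because $\pi$ is a bijection. So the action of $G^{(\ell)}$ on $[n]^\ell$ preserves both $[n]^{\langle\ell\rangle}$ and its complement, which means each of these subsets is a union of $G^{(\ell)}$-orbits. Moreover, by the very definition of $G^{\langle\ell\rangle}$ (Definition~\ref{def:power_action}), the group $G^{\langle\ell\rangle}$ is exactly $G^{(\ell)}$ restricted to this union of orbits.

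Second, I would apply \lem{cutting_into_orbits} with the group $G^{(\ell)}$ (acting on $[n]^\ell$) and the orbit-union $S = [n]^{\langle\ell\rangle}$. This yields
\[
\cost\bigl(G^{\langle\ell\rangle},r^\ell\bigr)\;\ge\;\cost\bigl(G^{(\ell)},r^\ell\bigr).
\]
Then I would plug in \thm{exponentiation}, which bounds the right-hand side below by $\cost(G,r)/\ell$. Chaining the two inequalities gives the claim
\[
\cost\bigl(G^{\langle\ell\rangle},r^\ell\bigr)\;\ge\;\cost(G,r)/\ell.
\]

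I do not expect a real obstacle here; the entire argument is essentially a two-line composition of already-established results, and the only ``content'' is the observation that a coordinate-wise permutation action cannot change whether an $\ell$-tuple has repeated entries, so distinct tuples form a $G^{(\ell)}$-invariant set. The rest is bookkeeping to make sure the parameters $r^\ell$ (for the range of the small-range strings) match up consistently in both invocations.
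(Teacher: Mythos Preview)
Your proposal is correct and is essentially identical to the paper's own proof: observe that $[n]^{\langle\ell\rangle}$ is a union of $G^{(\ell)}$-orbits (since coordinate-wise permutations preserve distinctness of tuples), then chain \lem{cutting_into_orbits} with \thm{exponentiation}. There is nothing to add.
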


\begin{proof}
It suffices to note that $G^{\langle\ell\rangle}$ is the permutation group $G^{(\ell)}$ with domain restricted to $[n]^{\langle\ell\rangle}$,
which is a union of orbits because $\pi\in G^{(\ell)}$ always sends
a tuple with unique entries to another tuple with unique entries
(since a permutation on $[n]$ is applied to each entry).
The desired result then follows from
\thm{exponentiation} and \lem{cutting_into_orbits}.
\end{proof}

We now observe that the transformation $G^{\langle\ell\rangle}$ immediately allows us to show
that directed graph symmetries are well-shuffling.

\begin{corollary}[Directed graph symmetries]\label{cor:dir_graph}
The set $\mathcal{G}=\{G_k\}_{k\in\bN}$ of all directed graph
symmetries is well-shuffling with power $6$. Here the permutation group $G_k$
acts on a domain of size $n=k(k-1)$ representing the possible
arcs of a $k$-vertex directed graph, and $G_k$ consists of
all $k!$ permutations on these arcs that act by relabeling the vertices.
\end{corollary}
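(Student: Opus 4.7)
The plan is to identify $G_k$ as a straightforward instance of the $G^{\langle\ell\rangle}$ construction and then reduce all the way back to the symmetric group, for which Zhandry's bound (\thm{zhandry}) gives the desired lower bound on $\cost$.

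First, I would observe that the set of arcs of a $k$-vertex directed graph is naturally identified with $[k]^{\langle 2\rangle}$, the set of ordered pairs of distinct elements of $[k]$, and that the action of $S_k$ on arcs induced by relabeling the $k$ vertices is exactly the action defining $S_k^{\langle 2\rangle}$ in \defn{power_action}. Thus $G_k = S_k^{\langle 2\rangle}$ as a permutation group on a domain of size $n = k(k-1)$.

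Next I would invoke \cor{angle_power} with $\ell = 2$ and $G = S_k$, which yields
\[
\cost(G_k, r^2) = \cost(S_k^{\langle 2\rangle}, r^2) \ge \frac{\cost(S_k, r)}{2}.
\]
Combining this with \thm{zhandry}, which gives $\cost(S_k, r) \ge r^{1/3}/C$ for a universal constant $C$, I obtain $\cost(G_k, r^2) \ge r^{1/3}/(2C)$.

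Finally, I would convert this into the desired well-shuffling bound for arbitrary $r' \in \bN$. Since $\cost(G, \cdot)$ is monotonically non-decreasing in its second argument (inputs from a smaller $D_{n,r}$ can only be easier to distinguish from $G$), I can pick $s = \lfloor \sqrt{r'} \rfloor$ and write $\cost(G_k, r') \ge \cost(G_k, s^2) \ge s^{1/3}/(2C) \ge (r')^{1/6}/b$ for an appropriate universal constant $b$ (absorbing the factor coming from $s \ge \sqrt{r'}/2$ when $r' \ge 4$, and handling the finitely many small $r'$ by adjusting $b$). Per \defn{shuffling}, this shows $\mathcal{G} = \{G_k\}_{k \in \bN}$ is well-shuffling with power $6$. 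There is no real obstacle here: the transformation machinery of \thm{exponentiation} and \lem{cutting_into_orbits} has already done the work, and this corollary is essentially bookkeeping that composes the factor-of-$2$ loss from the $\ell = 2$ reduction with Zhandry's cubic bound to produce the sixth-power exponent.
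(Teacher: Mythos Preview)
Your proposal is correct and follows essentially the same approach as the paper: identify $G_k = S_k^{\langle 2\rangle}$, apply \cor{angle_power} with $\ell=2$, and then invoke \thm{zhandry}. The paper's proof is slightly terser (it writes $\cost(G_k,r)\ge\cost(S_k,\sqrt{r})/2$ directly rather than going through the $r^2$ substitution and monotonicity cleanup), but the content is identical.
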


\begin{proof}
This immediately follows by observing that $G_k=S_k^{\langle 2\rangle}$.
To see this, note that the domain of $G_k$ is the set of all ordered pairs
$(x,y)\in [k]$ with $x\ne y$, which is precisely $[k]^{\langle 2\rangle}$,
and the permutations in $G_k$ are just those in $S_k$ applied to
both coordinates, which is precisely relabeling the vertices.
\cor{angle_power} then gives $\cost(G_k,r)\ge\cost(S_k,\sqrt{r})/2$,
which is at least $\Omega(r^{1/6})$ by \thm{zhandry}.
\end{proof}

The collection of directed hypergraph symmetries is similarly well-shuffling.

\begin{corollary}[Directed hypergraph symmetries]
The set $\mathcal{G}_p=\{G_k\}_{k\in\bN}$ consisting of all
$p$-uniform directed hypergraph symmetries is well-shuffling
with power $3p$.
\end{corollary}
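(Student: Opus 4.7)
The plan is to mimic the proof of \cor{dir_graph} exactly, replacing pairs with $p$-tuples. A $p$-uniform directed hyperedge on vertex set $[k]$ is naturally modeled as an ordered $p$-tuple of distinct elements of $[k]$, so the domain of $G_k$ can be identified with $[k]^{\langle p\rangle}$. The symmetry action of $S_k$ on such hyperedges is precisely coordinate-wise relabeling, so $G_k = S_k^{\langle p\rangle}$ in the notation of \defn{power_action}.

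Given this identification, I would apply \cor{angle_power} with $\ell=p$ and $G=S_k$ to conclude
\[
\cost(G_k, r^p) = \cost\bigl(S_k^{\langle p\rangle}, r^p\bigr) \ge \cost(S_k, r)/p.
\]
By \thm{zhandry}, $\cost(S_k, r) \ge r^{1/3}/C$ for a universal constant $C$. Substituting $r' = r^p$ (equivalently $r = (r')^{1/p}$) yields
\[
\cost(G_k, r') \ge \frac{(r')^{1/(3p)}}{Cp},
\]
which, for the constant $p$ defining the family $\mathcal{G}_p$, is exactly the well-shuffling bound with power $3p$ required by \defn{shuffling}.

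There is no real obstacle here: the only thing worth double-checking is that restricting $S_k^{(p)}$ to distinct $p$-tuples is justified (which is handled inside \cor{angle_power} via \lem{cutting_into_orbits}), and that the small loss factor $1/p$ is absorbed into the constant $b$ of \defn{shuffling}, which is allowed since $p$ is a fixed parameter of the family $\mathcal{G}_p$ rather than a variable. Finally, I would note as a sanity check that setting $p=2$ recovers \cor{dir_graph} with power $6$, confirming the calculation.
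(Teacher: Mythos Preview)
Your proposal is correct and matches the paper's own proof, which simply says ``This follows from the same argument as \cor{dir_graph}.'' You have spelled out exactly that argument: identify $G_k$ with $S_k^{\langle p\rangle}$, apply \cor{angle_power} with $\ell=p$, and invoke \thm{zhandry}.
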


\begin{proof}
This follows from the same argument as \cor{dir_graph}.
\end{proof}

\subsubsection{Transformation for undirected graphs}

To handle undirected graphs, we introduce yet another operation on
permutation groups which approximately preserves the cost.

\begin{theorem}\label{thm:classes}
Let $G$ be a permutation group, and let
$\mathcal{B} = \{B_1,B_2,\dots,B_k\}$ be a block system for $G$, where the blocks have equal size.
Let $H$ be the permutation group on $[k]$ induced by the action of $G$ on the blocks.
Then $\cost(H,r)\ge\cost(G,r)$.
\end{theorem}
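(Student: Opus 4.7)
The plan is to reduce the problem of distinguishing $G$ from $D_{n,r}$ to the problem of distinguishing $H$ from $D_{k,r}$, which gives $\cost(G,r)\le\cost(H,r)$. Concretely, I will show that any $T$-query quantum algorithm for the latter task can be converted, without any query blowup, into a $T$-query quantum algorithm for the former.

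For the reduction, fix a representative element $x_i\in B_i$ for each block $i\in[k]$ (recall that the blocks partition the domain $[n]$ of $G$). Given a quantum algorithm $Q$ for distinguishing $H$ from $D_{k,r}$ using $T=\cost(H,r)$ queries, define $Q'$ to act on an input $x\in G\cup D_{n,r}$ as follows: whenever $Q$ wants to query an index $i\in[k]$ of its input, $Q'$ instead queries $x(x_i)$ and returns the index $j\in[k]$ of the (unique) block $B_j$ containing $x(x_i)$. This conversion replaces each query of $Q$ by a single query to $x$, so $Q'$ also uses $T$ queries.

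It remains to verify that $Q'$ indeed distinguishes $G$ from $D_{n,r}$. If $x=\pi\in G$, then by definition of the block system, $\pi(B_i)=B_{\sigma_\pi(i)}$ where $\sigma_\pi\in H$ is the permutation of $[k]$ induced by $\pi$. In particular $\pi(x_i)\in B_{\sigma_\pi(i)}$, so the simulated input fed to $Q$ is exactly $\sigma_\pi$; since $\pi$ is uniformly distributed over $G$, the simulated input is distributed over $H$. If instead $x=\alpha\in D_{n,r}$, then the simulated input $\beta\in[k]^{[k]}$ is the function $\beta(i)=j$ where $\alpha(x_i)\in B_j$. The key observation is that $\beta\in D_{k,r}$: the set $\{\alpha(x_i):i\in[k]\}$ is contained in the range of $\alpha$ and therefore has at most $r$ distinct elements, each lying in a single block, so $\beta$ takes at most $r$ distinct values.

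Putting these together, $Q'$ correctly distinguishes any input in $G$ from any input in $D_{n,r}$ with the same bounded error as $Q$, using $T$ queries, which yields $\cost(G,r)\le T=\cost(H,r)$. The only conceptual hurdle is the second step, namely checking that collapsing a small-range function on $[n]$ to blocks produces a small-range function on $[k]$; this is the reason the reduction preserves both sides of the distinguishing problem. (In the degenerate cases $r\ge n$ or $r\ge k$ one side is $\infty$ and the inequality is trivial.)
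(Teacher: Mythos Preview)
Your proof is correct and follows essentially the same reduction as the paper: fix a representative in each block, simulate each query of the $H$-vs-$D_{k,r}$ distinguisher by querying the representative and returning the block index of the answer, and observe that this sends elements of $G$ to elements of $H$ and small-range functions to small-range functions. The only superfluous remark is the appeal to $\pi$ being uniformly distributed; since $\cost$ is a worst-case measure, it suffices (as the paper does) to note that every $\pi\in G$ maps to some $\sigma_\pi\in H$.
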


\begin{proof}
Let $Q$ be a quantum algorithm distinguishing $H$ from $D_{k,r}$
using $\cost(H,r)$ queries. We construct
a quantum algorithm $Q'$ for distinguishing $G$ from $D_{n,r}$.
Let $\mu$ be the distribution
on $D_{n,r}$ that is hard to distinguish from the uniform
distribution on $G$. The algorithm $Q'$ fixes a unique
$i_t\in B_t$ for each $t=1,2,\dots,k$.
On input $\alpha$ from $G\cup D_{n,r}$,
the algorithm $Q'$ runs $Q$ as follows:
each query $t\in[k]$ that $Q$ makes is turned into the query
$i_t\in[n]$ for $\alpha$, and the output $\alpha(i_t)$ is converted into the symbol $t'$ such that $\alpha(i_t)\in B_{t'}$
and returned to $Q$. In this way, the algorithm $Q'$ effectively
runs $Q$ on the mapped string $\phi(\alpha)\in[k]^k$, where
$\phi(\alpha)_t$ is the symbol $t'$ such that $\alpha(i_t)\in B_{t'}$.

Now, if $\alpha\in D_{n,r}$, then $\phi(\alpha)\in D_{k,r}$,
while if $\alpha\in G$, we have $\phi(\alpha)\in H$. Since $Q$
distinguishes $H$ from $D_{k,r}$, it follows that $Q'$ distinguishes
$G$ from $D_{n,r}$ using the same number of queries, as desired.
\end{proof}

We are now finally ready to prove the formal version
of \thm{main_graph}, showing that the
collection of (undirected) graph symmetries is well-shuffling.

\begin{definition}[Graph Symmetries]\label{def:graph_sym}
The collection of \emph{graph symmetries} is the set
$\mathcal{G}=\{G_k\}_{k\in \bN}$ of permutation groups
with $G_k$ acting on $[n]$ with $n=k(k-1)/2$,
such that the domain $[n]$ represents the set
of all possible edges in a $k$-vertex graph,
and $G_k$ acts on these edges and permutes them
in a way that corresponds to relabeling the vertices of the underlying graph.
\end{definition}

\begin{corollary}\label{cor:formal_graph}
The set of all graph symmetries is well-shuffling with power $6$.
Hence $\R(f)=O(\Q(f)^6)$ for functions $f$ symmetric under a graph symmetry.
\end{corollary}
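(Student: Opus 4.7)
The plan is to realize the undirected graph symmetry $G_k$ as the induced action of the directed graph symmetry $S_k^{\langle 2 \rangle}$ on a natural block system, and then feed this through the machinery already developed in the section. Recall that $S_k^{\langle 2 \rangle}$ acts on the set $[k]^{\langle 2 \rangle}$ of ordered pairs of distinct vertices, while $G_k$ acts on the $k(k-1)/2$ unordered pairs. The bridge between the two is the partition $\mathcal{B} = \{B_{\{i,j\}}\}_{\{i,j\}}$ of $[k]^{\langle 2 \rangle}$ into the doubleton blocks $B_{\{i,j\}} := \{(i,j),(j,i)\}$.

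First I would verify that $\mathcal{B}$ is a block system for $S_k^{\langle 2 \rangle}$ of equal-size blocks: any $\pi \in S_k$ acts on ordered pairs by $(i,j) \mapsto (\pi(i),\pi(j))$ and $(j,i) \mapsto (\pi(j),\pi(i))$, both of which lie in $B_{\{\pi(i),\pi(j)\}}$, so each block is sent to another block. All blocks have size $2$, and the induced action on $\mathcal{B}$ is precisely vertex relabeling on unordered pairs, i.e., the graph symmetry $G_k$ of \defn{graph_sym}. Applying \thm{classes} to this block system gives $\cost(G_k,r) \ge \cost(S_k^{\langle 2 \rangle}, r)$.

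Next I would chain together the directed-graph bound with \thm{zhandry}: by \cor{angle_power} with $\ell = 2$, we have $\cost(S_k^{\langle 2\rangle}, r) \ge \cost(S_k,\sqrt{r})/2$, and by \thm{zhandry} the latter is at least $r^{1/6}/(2C)$. Combining with the previous inequality, $\cost(G_k,r) \ge r^{1/6}/(2C)$, which is exactly the well-shuffling condition with power $6$. The $\R(f) = O(\Q(f)^6)$ conclusion then follows immediately from \thm{shuffle} applied with $a = 6$.

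There is no real obstacle here; the only substantive content is identifying the right block system so that the undirected graph symmetry literally appears as the image of $S_k^{\langle 2 \rangle}$, after which the result is a three-line composition of \thm{zhandry}, \cor{angle_power}, \thm{classes}, and \thm{shuffle}. The analogous argument for $p$-uniform undirected hypergraph symmetries would proceed by replacing $\langle 2\rangle$ with $\langle p\rangle$ and using the $S_p$-action on coordinates of $p$-tuples as the coordinate-permutation block system, yielding power $3p$.
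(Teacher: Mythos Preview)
Your proof is correct and follows essentially the same approach as the paper: both pass from $S_k^{\langle 2\rangle}$ to the undirected graph symmetry via the block system $\{\{(i,j),(j,i)\}\}$ and \thm{classes}, then invoke the directed-graph cost bound. The only cosmetic difference is that the paper cites the already-packaged directed graph result (\cor{dir_graph}) for $\cost(S_k^{\langle 2\rangle},r)=\Omega(r^{1/6})$, whereas you unroll that step explicitly via \cor{angle_power} and \thm{zhandry}.
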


\begin{proof}
Let $G$ be a directed graph symmetry on domain size $k(k-1)$,
and partition this domain into $k(k-1)/2$ sets of size $2$ of the form
$\{(x,y),(y,x)\}$ for $x,y\in[k]$. Then the induced permutation group $H$
on these sets (from \thm{classes}) is precisely the undirected graph symmetry
on graphs of size $k$. Since $\cost(H,r)\ge\cost(G,r)$, and since
the directed graph symmetries are well-shuffling with power $6$, it follows
that the undirected graph symmetries are also well-shuffling with power $6$.
\end{proof}

Using similar arguments, we can show a similar result for hypergraphs.
\begin{corollary}
For every constant $p \in \bN$, the collection of all $p$-uniform hypergraph symmetries is well-shuffling with power $3p$.
\end{corollary}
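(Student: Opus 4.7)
The plan is to mimic the proof of \cor{formal_graph}, which handled $p=2$, by reducing the undirected hypergraph symmetry to the directed one and applying \thm{classes}. Let $G_k$ denote the directed $p$-uniform hypergraph symmetry, that is, the permutation group $S_k^{\langle p\rangle}$ acting on the domain $[k]^{\langle p\rangle}$ of ordered $p$-tuples of distinct elements of $[k]$, with $\pi\in S_k$ acting coordinate-wise: $\pi(i_1,\dots,i_p)=(\pi(i_1),\dots,\pi(i_p))$. The preceding corollary tells us that $\{G_k\}_{k\in\bN}$ is well-shuffling with power $3p$, i.e., $\cost(G_k,r)\ge r^{1/(3p)}/C$ for some constant $C$ depending on $p$.

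Next I would exhibit the natural block system on $[k]^{\langle p\rangle}$. For each $p$-subset $S\subseteq [k]$, let $B_S\subseteq [k]^{\langle p\rangle}$ be the set of all $p!$ orderings of the elements of $S$. The family $\mathcal{B}=\{B_S : S\in\binom{[k]}{p}\}$ partitions $[k]^{\langle p\rangle}$ into blocks of equal size $p!$, and it is $G_k$-invariant since any $\pi\in S_k$ maps the set of orderings of $S$ bijectively onto the set of orderings of $\pi(S)$. Hence $\mathcal{B}$ is a block system for $G_k$. The induced action on $\mathcal{B}$, identified with $\binom{[k]}{p}$, sends $S$ to $\pi(S)$; this is precisely the $p$-uniform (undirected) hypergraph symmetry $H_k$ acting on the $\binom{k}{p}$ possible hyperedges of a $k$-vertex $p$-uniform hypergraph.

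Applying \thm{classes} to the block system $\mathcal{B}$ yields
\[\cost(H_k,r)\ge\cost(G_k,r)\ge r^{1/(3p)}/C,\]
for all $k,r\in\bN$. This shows that the collection $\{H_k\}_{k\in\bN}$ of $p$-uniform hypergraph symmetries is well-shuffling with power $3p$, as claimed. The only potential obstacle is verifying that $\mathcal{B}$ is genuinely a block system with equal block sizes and that the induced action coincides with the undirected hypergraph symmetry, but both facts follow directly from the coordinate-wise nature of the $S_k$-action on tuples. The rest is a bookkeeping copy of the $p=2$ argument.
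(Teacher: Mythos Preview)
Your proposal is correct and follows exactly the approach the paper intends: reduce the undirected $p$-uniform hypergraph symmetry to the directed one $S_k^{\langle p\rangle}$ via the block system of $p!$-element blocks indexed by $p$-subsets, then apply \thm{classes} and the directed hypergraph corollary. The paper itself does not spell out this proof, merely noting that ``using similar arguments'' the $p=2$ case (\cor{formal_graph}) extends, and your write-up is precisely that extension.
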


\subsubsection{Transformations for bipartite graphs}

We introduce yet more operations on permutation groups for the case of bipartite graph
symmetries.

\begin{definition}[Product of permutation groups]\label{def:group_product}
Let $G_1$ and $G_2$ be two permutation groups acting on $[n_1]$ and $[n_2]$, respectively.
Then the product permutation group $G_1\times G_2$ is a permutation group acting on $[n_1 n_2]$ such that
for any $(\pi_1, \pi_2) \in G_1\times G_2$, and any $k\in[n_1]$ and $\ell\in [n_2]$, we have
$(\pi_1, \pi_2)(k, \ell) = (\pi_1(k), \pi_2(\ell))$.
(Here we identify $[n_1]\times[n_2]$ with $[n_1n_2]$.)
\end{definition}

\begin{theorem}
\label{thm:group-product}
For all $G_1$ and $G_2$ acting on $[n_1]$ and $[n_2]$, respectively, and for all $r$, $\cost(G_1\times G_2,r^2)\ge \min\{\cost(G_1,r),\cost(G_2,r)\}$.
\end{theorem}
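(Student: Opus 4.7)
The plan is to reduce the problem of distinguishing $G_i$ from $D_{n_i,r}$ (for at least one $i \in \{1,2\}$) to the problem of distinguishing $G_1 \times G_2$ from $D_{n_1 n_2, r^2}$, via a hybrid argument with a locally-sampled coordinate. First I would set up the natural embedding: any pair of functions $(\alpha_1, \alpha_2) \in [n_1]^{n_1} \times [n_2]^{n_2}$ gives rise to a function $\alpha$ on $[n_1 n_2] = [n_1] \times [n_2]$ via $\alpha(k,\ell) = (\alpha_1(k), \alpha_2(\ell))$. This embedding sends $G_1 \times G_2$ (as a set of pairs) to $G_1 \times G_2$ (as a permutation group on the product domain), and sends $D_{n_1,r} \times D_{n_2,r}$ into $D_{n_1 n_2, r^2}$ since the range size is multiplicative. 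Crucially, each evaluation of $\alpha(k,\ell)$ requires only \emph{one} query to whichever of $\alpha_1, \alpha_2$ is treated as the unknown input, provided the other is sampled locally by the simulator.

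Next I would apply \lem{cost_minimax} to obtain hard distributions $\mu_1$ on $D_{n_1,r}$ and $\mu_2$ on $D_{n_2,r}$ such that no algorithm using fewer than $\cost(G_i, r)$ queries can distinguish $\mu_i$ from the uniform distribution on $G_i$ (up to constant error). Let $T = \cost(G_1 \times G_2, r^2)$ and amplify a corresponding algorithm $Q$ to worst-case error $\epsilon$ (a small constant) using $O(T)$ queries. I would then introduce the three hybrid distributions on $\alpha$:
\begin{align*}
P_0 &:\ \alpha_1 \sim \mu_1,\ \alpha_2 \sim \mu_2, \\
P_1 &:\ \alpha_1 \sim \mu_1,\ \alpha_2 \sim \mathrm{Unif}(G_2), \\
P_2 &:\ \alpha_1 \sim \mathrm{Unif}(G_1),\ \alpha_2 \sim \mathrm{Unif}(G_2).
\end{align*}
By construction $P_0$ is supported on $D_{n_1 n_2, r^2}$ and $P_2$ is the uniform distribution on $G_1 \times G_2$, so the amplified $Q$ accepts $P_2$ with probability at least $1 - \epsilon$ and accepts $P_0$ with probability at most $\epsilon$.

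The heart of the argument is a triangle-inequality hybrid: at least one of the pairs $(P_0, P_1)$ or $(P_1, P_2)$ is distinguished by $Q$ with constant advantage. In the first case, hard-wire a sample $\alpha_1 \sim \mu_1$ inside the simulator and let the input be $\alpha_2$; each query that $Q$ makes to $\alpha(k,\ell)$ is answered by locally computing $\alpha_1(k)$ and making one query to the input to get $\alpha_2(\ell)$. This yields an $O(T)$-query algorithm distinguishing $\mu_2$ from $\mathrm{Unif}(G_2)$ with constant advantage, so after amplification $\cost(G_2, r) = O(T)$. The other case symmetrically gives $\cost(G_1, r) = O(T)$. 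In either case, $T \ge \min\{\cost(G_1, r), \cost(G_2, r)\}$ up to a constant, which, because $\cost$ is the quantum query complexity of a Boolean function and hence admits amplification, matches the statement of the theorem.

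The main subtlety, rather than an obstacle, is being careful that the hybrid simulation truly costs only one query per simulated query: this requires sampling the ``free'' coordinate entirely offline, which is legitimate since $\mu_i$ and $\mathrm{Unif}(G_i)$ are both explicit distributions not depending on the unknown input. A secondary check is the range bookkeeping — that $P_0$ lies in $D_{n_1 n_2, r^2}$ and not in some larger range class — which is immediate from the product form $\alpha(k,\ell) = (\alpha_1(k), \alpha_2(\ell))$ and the fact that each $\alpha_i$ takes at most $r$ values.
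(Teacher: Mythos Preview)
Your proposal is correct and follows essentially the same approach as the paper: both use the product embedding $(\alpha_1,\alpha_2)\mapsto\alpha$ with $\alpha(k,\ell)=(\alpha_1(k),\alpha_2(\ell))$, invoke the hard distributions $\mu_1,\mu_2$ from \lem{cost_minimax}, and run a two-step hybrid through the intermediate distribution $\mu_1\times\mathrm{Unif}(G_2)$ to conclude that $Q$ yields a distinguisher for one of the two factors. The paper's proof is slightly terser about constants and amplification, but the structure is identical.
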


\begin{proof}
Let $H=G_1 \times G_2$ and $m =n_1n_2$.
Let $Q$ be an algorithm distinguishing $H$ from $D_{m,r^2}$.
Let $\mu_1$ be the hard distribution for $G_1$ (on $D_{n_1,r}$), and let $\mu_2$ be
the hard distribution for $G_2$ (on $D_{n_2,r}$).
Let $\mu'$ be the distribution on $D_{m,r}$ that we get by sampling $\alpha_1 \leftarrow\mu_1$, and $\alpha_2\leftarrow\mu_2$ independently, and returning
$\alpha'=(\alpha(z_1),\alpha(z_2))$.
Note that if $\alpha_1$ and $\alpha_2$ have range $r$, then $\alpha'$ has range
at most $r^2$. Now, since $Q$ distinguishes $D_{m,r^2}$ from $G_1\times G_2$,
it must also distinguish $\mu'$ from the uniform distribution over $G_1\times G_2$,
which itself is the product of the uniform distribution on $G_1$ and the
uniform distribution on $G_2$. Let $\nu_1$ be the uniform distribution on $G_1$,
and let $\nu_2$ be the uniform distribution on $G_2$. Consider
the behavior of $Q$ on $\mu_1\times\nu_2$. It must either distinguish
this distribution from $\mu_1\times\mu_2$, or else from $\nu_1\times\nu_2$
(since it distinguishes $\mu_1\times\mu_2$ and $\nu_1\times\nu_2$ from
each other). In the first case, we can construct $Q'$ which
artificially generates the sample from $\mu_1$ and uses $Q$ to distinguish
$\mu_2$ from $\nu_2$. In the second case, we can construct $Q'$ which
artificially generates the sample from $\nu_2$ and uses $Q$ to distinguish
$\mu_1$ from $\nu_1$.
Hence $\cost(G_1\times G_2,r^2)\ge\min\{\cost(G_1,r),\cost(G_2,r)\}$, as desired.
\end{proof}

\begin{corollary}
The collection $\mathcal{G}$
of all bipartite graph symmetries with equal parts is well-shuffling.
\end{corollary}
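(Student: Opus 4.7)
The plan is to realize the bipartite graph symmetry group as (essentially) the product $S_k\times S_k$ and then apply \thm{group-product} together with Zhandry's lower bound \thm{zhandry}.

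First I would fix the setup. A bipartite graph with two parts of equal size $k$ has edge set $[k]\times[k]$, identified with $[k^2]$. The natural symmetry group induced by relabeling the two parts independently is exactly $S_k\times S_k$ in the sense of \defn{group_product}, acting on $[k]\times[k]$ by $(\pi_1,\pi_2)(i,j)=(\pi_1(i),\pi_2(j))$. If one additionally permits swapping the two parts, the resulting group is an index-$2$ extension of $S_k\times S_k$; in either interpretation, the bipartite graph symmetry group $G_k$ contains $S_k\times S_k$ as a subgroup of $[n]^n$ (with $n=k^2$).

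Second, I would apply \thm{group-product} with $G_1=G_2=S_k$ to obtain
\[\cost(S_k\times S_k,\,r^2)\;\ge\;\min\{\cost(S_k,r),\cost(S_k,r)\}\;=\;\cost(S_k,r)\;\ge\;r^{1/3}/C,\]
where the last inequality is \thm{zhandry}. Since $\cost(G,\cdot)$ is monotonically non-decreasing (an algorithm that distinguishes $G$ from a larger $D_{n,r}$ also distinguishes $G$ from any subset of it), reparametrizing $r^2\mapsto r$ and absorbing rounding into the constant yields $\cost(S_k\times S_k,r)=\Omega(r^{1/6})$ for all $r\in\bN$.

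Third, I would extend the lower bound from $S_k\times S_k$ to the full bipartite symmetry $G_k$ by a trivial subgroup argument: if $H\subseteq G$ as subsets of $[n]^n$, then any quantum algorithm distinguishing $G$ from $D_{n,r}$ in the worst case also distinguishes the smaller set $H$ from $D_{n,r}$, so $\cost(G,r)\ge\cost(H,r)$. Taking $H=S_k\times S_k$ gives $\cost(G_k,r)=\Omega(r^{1/6})$, so $\{G_k\}_{k\in\bN}$ is well-shuffling with power $6$. I do not anticipate any real obstacle: the argument is a one-line combination of previously developed machinery, and the only mildly subtle point is the subgroup step, which is immediate from the definition of $\cost$ as the worst-case query complexity of a Boolean decision problem.
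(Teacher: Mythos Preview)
Your proposal is correct and follows essentially the same approach as the paper: identify the bipartite graph symmetry with $S_k\times S_k$, then combine \thm{group-product} with \thm{zhandry}. Your extra care with the subgroup extension (which is just \lem{extension}) and the monotonicity-based reparametrization are fine elaborations of details the paper leaves implicit.
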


\begin{proof}
This immediately follow by observing that bipartite graph symmetries are
the symmetries $S_{k}\times S_{k}$.
Then \thm{group-product} and \thm{zhandry} give the desired result.
\end{proof}

\subsubsection{Other transformations}

We introduce one final transformation, which merges two permutation groups into one. This transformation also does not decrease the cost.

\begin{lemma}[Merger]
\label{lem:extension}
Let $G$ and $H$ be two permutation groups on $[n]$, and let
$F=\langle G,H\rangle$ be the permutation group on $[n]$ which is the closure
of $G\cup H$ under composition. Then $\cost(F,r)\ge\cost(G,r)$.
\end{lemma}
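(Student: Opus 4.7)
The plan is to observe that this lemma is essentially immediate from the definitions, since $G$ is a subgroup of $F = \langle G, H \rangle$. The key point is that distinguishing $G$ from $D_{n,r}$ is literally a sub-problem of distinguishing $F$ from $D_{n,r}$, in the sense that the input domain of the former is a subset of the input domain of the latter, with the same labeling.

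In more detail, first I would note that by definition $F$ is the closure of $G \cup H$ under composition and inverse, so $G \subseteq F$. Recalling the definition of $\cost$, let $f_F$ be the Boolean function with domain $F \cup D_{n,r} \subseteq [n]^n$ defined by $f_F(x) = 1$ if $x \in F$ and $f_F(x) = 0$ if $x \in D_{n,r}$, and analogously define $f_G$ on the smaller domain $G \cup D_{n,r}$. Then $f_G$ is the restriction of $f_F$ to $G \cup D_{n,r}$ (the two functions agree wherever both are defined, since $G \subseteq F$), so any quantum query algorithm that computes $f_F$ with worst-case bounded error also computes $f_G$ with worst-case bounded error using the same number of queries.

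More concretely, fix a quantum algorithm $Q$ that witnesses $\cost(F,r)$: it makes $\cost(F,r)$ queries, accepts every $\pi \in F$ with probability at least $2/3$, and rejects every $\alpha \in D_{n,r}$ with probability at least $2/3$. Since $G \subseteq F$, the algorithm $Q$ still accepts every $\pi \in G$ with probability at least $2/3$, and it still rejects every $\alpha \in D_{n,r}$ with probability at least $2/3$. Hence $Q$ is a valid algorithm distinguishing $G$ from $D_{n,r}$, yielding $\cost(G,r) \le \cost(F,r)$, as desired.

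There is essentially no obstacle here; the lemma is a direct consequence of the monotonicity of worst-case query complexity under restriction of the input domain, combined with the containment $G \subseteq F$. The same argument would also give $\cost(F,r) \ge \cost(H,r)$, and more generally $\cost(F,r) \ge \cost(G',r)$ for any subgroup $G' \le F$ (acting on the same domain $[n]$).
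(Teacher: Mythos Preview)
Your proof is correct and takes essentially the same approach as the paper: the paper's proof is a single sentence observing that since $G \subseteq F$, distinguishing $G$ from $D_{n,r}$ is strictly easier than distinguishing $F$ from $D_{n,r}$. Your version spells out the details of that restriction-of-domain argument, but the idea is identical.
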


\begin{proof}
Since $G$ is a subset of $F$, distinguishing $G$ from $D_{n,r}$
is strictly easier than distinguishing $F$ from $D_{n,r}$.
\end{proof}

\section{Classification of well-shuffling permutation groups}
\label{sec:wsclassify}

In this section, we show that the results from the previous section are qualitatively optimal, in the sense that permutation groups constructed out of hypergraph symmetries are essentially the \emph{only} groups that are not consistent with super-polynomial quantum speedups. Our starting point is an observation that super-polynomial quantum speedups can be constructed out of any permutation group with sufficiently small minimal base size.

\begin{proposition}\label{prop:small_base}
Let $G$ be a permutation group on $[n]$, and let $f$ be a (possibly partial) Boolean function with $\Dom(f) \subseteq \Sigma^n$ for some alphabet $\Sigma$. Then there exists a partial Boolean function $g$ that is symmetric under $G$ such that $Q(g) \le Q(f) + b(G)$ and $R(g) \ge R(f)$.
\end{proposition}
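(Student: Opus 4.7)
The plan is to encode $f$ into a $G$-symmetric function $g$ by ``tagging'' each position of the input with its own index in a second coordinate, so that reading the tag of any entry of the tagged input reveals where the hidden permutation $\pi$ sent that position. Concretely, I would let $k := b(G)$, fix a minimal base $S = \{s_1, \ldots, s_k\} \subseteq [n]$ for $G$, and work over the alphabet $\Sigma' := \Sigma \times [n]$. For each $x \in \Dom(f)$, define $\tilde{x} \in (\Sigma')^n$ by $\tilde{x}_i := (x_i, i)$, and then set
\[
\Dom(g) := \{\tilde{x} \circ \pi : x \in \Dom(f),\ \pi \in G\}, \qquad g(\tilde{x} \circ \pi) := f(x).
\]

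Before analyzing the complexities I would check that $g$ is well-defined and $G$-symmetric. Given $y = \tilde{x} \circ \pi$, the definition yields $y_i = (x_{\pi(i)}, \pi(i))$, so reading the second coordinate of every entry of $y$ recovers $\pi$ entirely, and then reading the first coordinates recovers $x$. Hence the pair $(x, \pi)$ is uniquely determined by $y$ and $g(y) = f(x)$ is unambiguous. The $G$-symmetry is immediate since $(\tilde{x} \circ \pi) \circ \sigma = \tilde{x} \circ (\pi\sigma) \in \Dom(g)$ with value $f(x)$ for every $\sigma \in G$.

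For the quantum upper bound, my plan is to first identify $\pi$ using $b(G)$ queries, then simulate the optimal algorithm for $f$. The algorithm queries $y$ at the $k$ base positions $s_1, \ldots, s_k$; the second coordinates of the answers are exactly $\pi(s_1), \ldots, \pi(s_k)$, and since $S$ is a base, these values uniquely determine $\pi$ (and hence $\pi^{-1}$) with no further queries. The algorithm then runs the optimal $\Q(f)$-query quantum algorithm for $f$, translating each query to $x_j$ into a single query to $y_{\pi^{-1}(j)}$, whose first coordinate is precisely $x_j$. This yields $\Q(g) \le \Q(f) + b(G)$.

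For the classical lower bound, I would convert any $\R(g)$-query randomized algorithm $B$ for $g$ into an $\R(g)$-query algorithm for $f$ by simulating $B$ on the specific input $y := \tilde{x}$ (corresponding to $\pi$ being the identity permutation). Whenever $B$ queries $y_i$, the simulator queries $x_i$ once and returns $(x_i, i)$; this preserves the query count, so $\R(f) \le \R(g)$. No step here involves any real obstacle: the only thing to be careful about is the well-definedness of $g$ on its orbit structure, which is handled automatically by the injectivity of the tagging $x \mapsto \tilde{x}$ together with the fact that the tag of a position stays visible after any $\pi \in G$ is applied.
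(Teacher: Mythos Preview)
Your proof is correct and essentially identical to the paper's own argument: both tag each position with its index over the enlarged alphabet $\Sigma \times [n]$, take the $G$-orbit of the tagged inputs as the domain of $g$, learn $\pi$ by querying a minimal base (reading off the tags), and obtain the randomized lower bound by restricting to the identity permutation. The only cosmetic difference is that the paper parametrizes the promise by the condition $x_{\pi(i)} = (y_i, i)$ rather than your $y_i = (x_{\pi(i)}, \pi(i))$, which amounts to swapping $\pi$ with $\pi^{-1}$.
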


\begin{proof}
Define a promise problem $g$ as follows. A string $x \in (\Sigma \times [n])^n$ is in the promise if there exists a permutation $\pi \in G$ such that for all $i$, $x_{\pi(i)} = (y_i, i)$ where $y \in \Sigma^n$ is in the promise of $f$. In other words, the promise for $g$ contains all reorderings of strings of the form $(y_i, i)$ by permutations that are in G. Naturally, in this case we define $g(x) = f(y)$. Then clearly $R(g) \ge R(f)$, because any randomized algorithm that solves $g$ also solves $g$ restricted to the promise that $\pi$ is the identity permutation, in which case $g$ is equivalent to $f$. Moreover, this problem is clearly symmetric under $G$. On the other hand, $Q(g) \le Q(f) + b(G)$: by querying a minimal base for $G$, the algorithm can learn the unique permutation $\pi$ such that $x_{\pi(i)} = (y_i, i)$. Now, the algorithm just permutes $x$ according to $\pi$ and runs the query algorithm for $f$.
\end{proof}

Put another way, if $b(G) = n^{o(1)}$, then by choosing an arbitrary function $f$ with $Q(f) = n^{o(1)}$ and $R(f) = n^{\Omega(1)}$ (e.g., Simon's problem \cite{Sim97} or Forrelation \cite{AA15}), then one can construct a function $g$ symmetric under $G$ with $Q(g) = n^{o(1)}$ and $R(g) = n^{\Omega(1)}$. Thus, permutation groups that do not allow super-polynomial speedups must have large minimal base size.

Naturally, this raises the question of whether a sort of converse holds, i.e., whether $b(G) = n^{\Omega(1)}$ implies that $G$ is well-shuffling. We will show that this is true for primitive permutation groups (\thm{primitive_large_base}), and moreover that primitive groups with $b(G) = n^{\Omega(1)}$ all look roughly like symmetries of $p$-uniform hypergraphs (\cor{primitive_shuffling_equivalence}). Thus, we completely classify which primitive permutation groups are consistent with super-polynomial quantum speedups, and which are not.

For imprimitive groups, we will see that there exist permutation groups $G$ with $b(G) = n^{\Omega(1)}$ that are consistent with large quantum speedups. So, $b(G) = n^{\Omega(1)}$ is not equivalent to the well-shuffling property in general. Nevertheless, we can show that primitive permutation groups with $b(G) = n^{\Omega(1)}$ are the ``building blocks'' of well-shuffling group actions. Indeed, we show in \cor{imprimitive_classification} that if a collection of imprimitive permutation groups is inconsistent with super-polynomial speedups, then it must be built out of a constant number of well-shuffling primitive groups. Thus, in some sense, all permutation groups that do not allow large quantum speedups must involve symmetries of $p$-uniform hypergraphs.

\subsection{Primitive groups}

For primitive permutation groups, the following theorem due to Liebeck \cite{Lie84} shows that there is a tight connection between minimal base size and the structure of the group. Its proof relies on the classification of finite simple groups.

\begin{theorem}\label{thm:liebeck}
Let $G$ be a primitive permutation group on $[n]$. Then one of the following holds:
\begin{enumerate}[label=(\roman*)]
\item $G$ is a subgroup of $S_\ell \wr S_m$ containing $A_m^{\times \ell}$, where the action of $S_m$ is on $p$-element subsets of $[m]$ and the wreath product has the product action of degree $n = {\binom{m}{p}}^\ell$, or
\item $b(G) < 9 \log_2 n$.
\end{enumerate}
Here, $S_n$ denotes the symmetric group on $[n]$ and $A_n$ denotes the alternating group on $[n]$.
\end{theorem}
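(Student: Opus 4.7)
The plan is to proceed via the O'Nan--Scott theorem, which classifies the finite primitive permutation groups by the structure of their socle into a handful of broad families: affine groups, almost simple groups, diagonal-type groups, product-type groups, and twisted wreath product groups. For each family I would separately bound the minimal base size, appealing to the classification of finite simple groups (CFSG) where necessary to obtain sufficiently sharp bounds on the orders of primitive almost simple groups and on the degrees of their primitive actions. The overall shape of the argument is: except in the exceptional almost-simple situation that gives rise to case (i), we will have enough control on $|G|$ (or more refined structural information) to conclude $b(G) < 9 \log_2 n$ via $\lem{base_bounds}$ or minor refinements of it.

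For the affine case, where $G$ has an elementary abelian regular normal subgroup of order $n = p^d$ and the point stabilizer sits inside $\mathrm{GL}_d(p)$, a linear-algebra argument shows that any $\mathbb{F}_p$-spanning set of $[n] \setminus \{0\}$ is a base, giving $b(G) = O(d) = O(\log n)$. For diagonal-type and twisted wreath product groups the socle has the form $T^k$ with $T$ nonabelian simple; a short calculation (essentially choosing base points from distinct simple factors, then killing the remaining outer automorphisms) yields $b(G) = O(k + \log |T|) = O(\log n)$. For product-type groups $G \le H \wr S_k$ acting on $\Delta^k$, one reduces the base-size question to that of the almost simple constituent $H$ acting on $\Delta$, losing only a constant factor relative to $\log n$ in the reduction; this is the step that produces the wreath product appearing in (i).

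The main difficulty, and the core of Liebeck's argument, is the almost simple case, where $T \trianglelefteq G \le \mathrm{Aut}(T)$ for $T$ a nonabelian simple group. CFSG partitions $T$ into alternating, classical, exceptional Lie type, and sporadic families, and one treats each separately. For the sporadic and exceptional Lie type groups, $|G|$ is bounded by an explicit polynomial in $n$ (via the known minimum degrees of their primitive actions), so $b(G) \le \log_2 |G| = O(\log n)$ follows from $\lem{base_bounds}$. For classical groups one exhibits a small explicit base using the geometry of the natural module in each Aschbacher class (subspace actions, tensor-decomposition actions, field-extension actions, etc.). For $T = A_m$ one enumerates the primitive actions of $A_m$: the action on unordered $p$-subsets of $[m]$ is the unique family in which $b(G)$ can grow as fast as $m/\log m$ and thereby violate the $9 \log_2 n$ threshold, and this is precisely case (i); every other primitive action of $A_m$ (on partitions, on product structures, etc.) admits a base of size $O(\log n)$ by direct combinatorial arguments on the underlying combinatorial object.

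The hardest step is the classical Lie-type subcase of the almost simple analysis: there is no single uniform order bound, so one has to run through the Aschbacher subgroup-class machinery and construct an explicit small base in each class, which is essentially the technical content of Liebeck's paper and depends crucially on CFSG. Everything else can be reduced to clean order estimates combined with $\lem{base_bounds}$, or to inductive reductions within the O'Nan--Scott hierarchy. If I were writing this from scratch I would first isolate the alternating-on-subsets family as the only obstruction, reduce all non-almost-simple types to the almost simple case by the product and diagonal reductions, and then invest the bulk of the effort in the Aschbacher-class analysis for classical groups, which is the single genuine obstacle.
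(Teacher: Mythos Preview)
The paper does not prove this theorem at all: it is quoted verbatim as a result of Liebeck \cite{Lie84} and only the sentence ``Its proof relies on the classification of finite simple groups'' is offered by way of justification. So there is no ``paper's own proof'' to compare against; your outline is being measured against Liebeck's original argument, not against anything in this paper.

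That said, your sketch is a faithful high-level summary of Liebeck's proof: split by O'Nan--Scott type, handle affine, diagonal, twisted wreath, and product types by order bounds and reductions, and concentrate the real work in the almost simple case via CFSG, where the alternating-on-$p$-subsets action (and its product-action wreath closures) is isolated as the sole obstruction to a logarithmic base. Your identification of the Aschbacher-class analysis for classical groups as the technical core is also accurate. One small correction: in the product-type reduction $G \le H \wr S_k$, the exceptional family in case~(i) arises precisely when the almost simple constituent $H$ is itself alternating-on-subsets, so the wreath structure in~(i) comes from combining the product-type reduction with the almost simple exception, not from the product-type case alone; you gesture at this but the phrasing ``this is the step that produces the wreath product'' slightly obscures that both ingredients are needed.
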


Observe that in case (i), when $\ell = 1$, this corresponds precisely to either the set of $p$-uniform undirected hypergraph symmetries, or the index-2 subgroup of even permutations of the vertices. When $p$ is a constant, we know that these groups are well-shuffling. More generally, as long as both $\ell$ and $p$ are constant, then these groups are well-shuffling:

\begin{corollary}\label{cor:primitive_shuffling}
For all constant $\ell$, $p$, the collection of primitive permutation groups that satisfy case (i) of \thm{liebeck} is well-shuffling with power $3\ell p$.
\end{corollary}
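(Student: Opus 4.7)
The plan is to build up the well-shuffling property for the groups in case (i) by chaining four transformations from \sec{wsgroups} and closing with \lem{extension} to accommodate the ``contains $A_m^{\times \ell}$'' condition. The starting point is the alternating group itself: since $A_m$ acting on $[m]$ is $(m-2)$-transitive, \cor{informal_transitive} gives $\cost(A_m, r) = \Omega(\min\{m-2, r^{1/3}\})$. Because $\cost(A_m, r) = \infty$ whenever $r \ge m$, only $r < m$ is relevant, and then $r^{1/3} \le m-2$ for all sufficiently large $m$; absorbing small $m$ into the constant shows that the family $\{A_m\}_{m \in \bN}$ is well-shuffling with power $3$.

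Next, I would apply \cor{angle_power} (with its exponent set to $p$) to obtain $\cost(A_m^{\langle p \rangle}, r^p) \ge \cost(A_m, r)/p$ for the action on ordered $p$-tuples of distinct elements of $[m]$, which is well-shuffling with power $3p$ since $p$ is constant. To pass from ordered to unordered tuples, I would observe that the partition of $[m]^{\langle p \rangle}$ into the $p!$-element classes of orderings of the same unordered subset is a block system for $A_m^{\langle p \rangle}$: the action of any $\pi \in A_m$ on tuples commutes with coordinate reorderings and hence sends the class above $S$ to the class above $\pi(S)$. The induced quotient action is precisely the action of $A_m$ on $p$-element subsets of $[m]$; let $H_m$ denote this action for brevity. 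Then \thm{classes} gives $\cost(H_m, r) \ge \cost(A_m^{\langle p \rangle}, r)$, preserving well-shuffling with power $3p$.

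For the outer layer I would iterate \thm{group-product} across $\ell$ copies of $H_m$ to obtain $\cost(H_m^{\times \ell}, s^\ell) \ge \min_i \cost(H_m, s)$, where $H_m^{\times \ell}$ is the product action on $\ell$-tuples of $p$-subsets. Substituting the previous bound and setting $R = s^\ell$ then yields well-shuffling with power $3\ell p$ for $H_m^{\times \ell}$, whose domain has size $\binom{m}{p}^\ell$, exactly matching case (i) of \thm{liebeck}. Finally, for any subgroup $G \le S_\ell \wr S_m$ (in its product action) containing $A_m^{\times \ell}$, one has $G = \langle A_m^{\times \ell}, G\rangle$, so \lem{extension} gives $\cost(G, r) \ge \cost(H_m^{\times \ell}, r)$, transferring the bound to $G$ and completing the argument.

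The only slightly delicate step is iterating \thm{group-product}, whose stated form handles only two factors with a common range bound $r$. Its proof generalizes cleanly, however: drawing independent samples from hard distributions $\mu_i$ on $D_{n_i, r_i}$ with possibly distinct $r_i$ yields $\cost(G_1 \times \cdots \times G_\ell, r_1 \cdots r_\ell) \ge \min_i \cost(G_i, r_i)$, and specializing to $r_1 = \cdots = r_\ell = s$ gives the bound used above. Apart from this routine bookkeeping, the corollary is just a composition of reductions already developed in \sec{wsgroups}, so I do not anticipate any serious obstacle provided that $\ell$ and $p$ are treated as constants throughout.
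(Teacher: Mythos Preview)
Your argument is correct and tracks the paper's proof almost step for step: both start from the $(m-2)$-transitivity of $A_m$ via \cor{informal_transitive}, pass to ordered $p$-tuples via \cor{angle_power}, descend to unordered $p$-subsets via \thm{classes}, and finish with \lem{extension} to absorb the ``contains $A_m^{\times \ell}$'' clause. The one substantive difference is how you handle the outer $\ell$-fold product. The paper does not iterate \thm{group-product}; instead it observes that the diagonal $H^{(\ell)}$ (all $\ell$ coordinates equal) is a subgroup of $H^{\times\ell}$, applies \lem{extension} once more to drop to $H^{(\ell)}$, and then invokes \thm{exponentiation} directly to get $\cost(H^{(\ell)},r)\ge\cost(H,r^{1/\ell})/\ell$. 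This avoids any need to extend \thm{group-product} to $\ell$ factors or to unequal range parameters, at the cost of an extra factor of $1/\ell$ that is harmless for constant $\ell$. Your route via an iterated \thm{group-product} is equally valid and in fact slightly sharper in the constant, but it does rely on the mild generalization you flag; the paper's diagonal trick sidesteps that bookkeeping entirely.
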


\begin{proof}
Suppose $G$, $\ell$, $p$, and $m$ are as in case (i) of \thm{liebeck}. Let $A$ denote the action of the alternating group $A_m$ on $[m]$, and let $H$ denote the action of $A_m$ on size-$p$ subsets of $[m]$. Formally, we have the following chain of inequalities:
\begin{align*}
    \cost(G, r) &\ge \cost(H^{\times \ell}, r) & \textrm{\lem{extension}}\\
    &\ge \cost(H^{(\ell)}, r) & \textrm{\lem{extension}}\\
    &\ge \cost(H, r^{1/\ell})/\ell & \textrm{\thm{exponentiation}}\\
    &\ge \cost(A^{\langle p\rangle}, r^{1/\ell})/\ell & \textrm{\thm{classes}}\\
    &\ge \cost(A, r^{1/\ell p})/\ell p & \textrm{\cor{angle_power}}\\
    &\ge \Omega\bigl(r^{1/3\ell p}\bigr). & \textrm{\cor{informal_transitive}}
\end{align*}
In words, in the first two lines, it suffices to show that a subgroup of $G$ is well-shuffling, and thus that a subgroup of a subgroup is well-shuffling. $H^{(\ell)}$ is the so-called ``diagonal subgroup'' of $H^{\times \ell}$, consisting of the permutations $(\pi_1,\pi_2,\ldots,\pi_\ell)$ such that $\pi_i = \pi_j$ for all $i, j$; this is consistent with the notation from \defn{power_action}. The third inequality appeals to the cost lower bound for power actions. Then, we observe that $H$ is equivalent to the action of $A^{\langle p\rangle}$ on blocks, where $A^{\langle p \rangle}$ is the power action of $A$ restricted to tuples with no repeats, and the blocks are sets of tuples that are reorderings of each other. The next inequality uses the cost lower bound for this restriction of the power action. Finally, we appeal to the fact that the alternating group on $m$ points is $(m-2)$-transitive, and thus well-shuffling.
\end{proof}

In fact, the following theorem shows that a collection of primitive permutation groups is well-shuffling if and only if the permutation groups are as in case (i) of \thm{liebeck}, with $\ell$ and $p$ both constant:

\begin{theorem}\label{thm:primitive_large_base}
For all constant $\epsilon > 0$, there exists $c \in \mathbb{N}$ such that the collection of primitive permutation groups $G$ with $b(G) \ge n^{\epsilon}$ is well-shuffling with power $c$. Moreover, for all sufficiently large $n$, all such $G$ are as in case (i) of \thm{liebeck} with $\ell p \le c/3$.
\end{theorem}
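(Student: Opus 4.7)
The plan is to combine Liebeck's classification in \thm{liebeck} with the upper bound $b(G) \le \log_2|G|$ from \lem{base_bounds} to pin down the structure of any primitive $G$ satisfying $b(G) \ge n^{\epsilon}$, and then apply \cor{primitive_shuffling}. First, for $n$ large enough that $n^{\epsilon} > 9\log_2 n$, the hypothesis $b(G) \ge n^{\epsilon}$ is incompatible with case (ii) of \thm{liebeck}, so case (i) must hold: $G$ is a subgroup of the product action of $S_\ell \wr S_m$ on $n = \binom{m}{p}^{\ell}$ points, with $A_m^{\times \ell} \le G$. Using the permutation-group isomorphism between the actions of $S_m$ on $p$-subsets and on $(m-p)$-subsets, I assume $p \le m/2$ without loss of generality.

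Next I would show that $\ell p$ is bounded by a function of $\epsilon$ alone. Applying \lem{base_bounds} to $G \le S_\ell \wr S_m$ gives
\[
  n^{\epsilon} \;\le\; b(G) \;\le\; \log_2 |G| \;\le\; \log_2\bigl(\ell!\,(m!)^{\ell}\bigr) \;=\; O(\ell\,m \log m).
\]
Taking logs, using $\log_2 n = \ell \log_2 \binom{m}{p} \ge \ell \log_2 m$ (valid since $\binom{m}{p} \ge m$ for $1 \le p \le m-1$), and splitting on whether $\ell \le m$ or $\ell > m$ yields a bound $\ell \le C_1(\epsilon)$ depending only on $\epsilon$. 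Feeding this back, $\epsilon \log_2 \binom{m}{p} = O(\log m)$, and the standard estimate $\binom{m}{p} \ge (m/p)^{p}$ then gives $p \log_2(m/p) = O(\log m/\epsilon)$. For $p \le \sqrt{m}$ one has $\log_2(m/p) \ge \tfrac{1}{2}\log_2 m$, which forces $p = O(1/\epsilon)$. For $\sqrt m < p \le m/2$, using $\log_2(m/p) \ge 1$ gives $p = O(\log m/\epsilon)$; combined with $p > \sqrt m$ this yields $\sqrt m = O(\log m/\epsilon)$, contradicting $m$ being sufficiently large. Hence $p \le C_2(\epsilon)$ and $\ell p \le C_1(\epsilon)\, C_2(\epsilon) =: C(\epsilon)$.

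Setting $c := \lceil 3 C(\epsilon) \rceil$ ensures $\ell p \le c/3$, which is the structural claim. Only finitely many pairs $(\ell, p)$ satisfy this bound, and for each such pair \cor{primitive_shuffling} provides a constant $b_{\ell,p}$ with $\cost(G, r) \ge r^{1/(3\ell p)}/b_{\ell,p} \ge r^{1/c}/b_{\ell,p}$ (the last inequality uses $3\ell p \le c$). Taking the maximum $b_0 := \max_{(\ell,p)} b_{\ell,p}$ gives a uniform bound for all $n \ge n_0(\epsilon)$. The finitely many primitive groups on $[n]$ with $n < n_0(\epsilon)$ contribute only finitely many further inequalities, which can be absorbed by enlarging $b_0$ to some $b$, completing well-shuffling with power $c$ per \defn{shuffling}.

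The main technical obstacle is controlling $p$ in the regime $\sqrt m < p \le m/2$, where $\binom{m}{p} \ge (m/p)^{p}$ on its own only yields $p = O(\log m/\epsilon)$ rather than a bound depending only on $\epsilon$; this is resolved by playing $p > \sqrt m$ against $p = O(\log m/\epsilon)$ to derive a contradiction for $m$ (equivalently $n$) sufficiently large.
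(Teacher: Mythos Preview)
Your proposal is correct and follows essentially the same approach as the paper: rule out case (ii) of \thm{liebeck} for large $n$, then in case (i) combine $n^{\epsilon}\le b(G)\le\log_2|G|\le\log_2(\ell!\,(m!)^{\ell})$ with the lower bound $\binom{m}{p}\ge(m/p)^p$ and the case split $m\gtrless p^2$ (equivalently, your $p\lessgtr\sqrt m$) to force $\ell p=O_\epsilon(1)$, and finish via \cor{primitive_shuffling}. The only organizational difference is that you first bound $\ell$ separately using $\binom{m}{p}\ge m$ and then bound $p$, whereas the paper bounds the product $\ell p$ in one shot from the single inequality $\ell p\,\epsilon(\log_2 m-\log_2 p)\le 1+2\log_2\ell+2\log_2 m$; both routes are equally short and yield the same conclusion.
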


\begin{proof}
For sufficiently large $n$, we cannot be in case (ii) of \thm{liebeck}. This is because $b(G) = O(\log n)$, so by \prop{small_base}, we can construct a function symmetric under $G$ with a super-polynomial gap between quantum and randomized query complexity.

Therefore, suppose we are in case (i) of \thm{liebeck}. If $\binom{m}{p} = 1$, then by the definition of the wreath product (\defn{wreath_product}), $G$ is the trivial group. Otherwise, we suppose $\binom{m}{p} \ge 2$ and therefore $m \ge 2$. The minimal base size of $G$ is upper bounded by
\begin{align*}
    b(G) &\le \log_2(|G|) & \textrm{\lem{base_bounds}}\\
    &\le \log_2(|S_\ell \wr S_m|) & G \subseteq S_\ell \wr S_m\\
    &= \log_2(\ell! \cdot m!^\ell) & \textrm{\fct{wreath_product_order}}\\
    &\le \ell \log_2 \ell + \ell m \log_2 m\\
    &\le 2\ell^2m^2.
\end{align*}
Thus we have
$$\left(\frac{m}{p}\right)^{\ell p\epsilon}\le \binom{m}{p}^{\ell \epsilon} = n^{\epsilon} \le b(G) \le 2\ell^2m^2,$$
so
$$\ell p \epsilon (\log_2 m - \log_2 p) \le 1 + 2 \log_2 \ell + 2 \log_2 m.$$
Consider two cases. First, suppose $m \le p^2$. Without loss of generality, we may always assume $p \le m / 2$, because the action of $S_m$ on $p$-element subsets is equivalent to the action on $(m-p)$-element subsets, by identifying each $p$-element subset with its complement in $[m]$. So we have
$$\ell p\epsilon \le \ell p\epsilon (\log_2 m - \log_2 p) \le 1 + 2 \log_2 \ell + 2 \log_2 m \le 1 + 2 \log_2 \ell + 4\log_2 p.$$
But clearly $\ell p\epsilon \le 1 + 2 \log_2 \ell + 4\log_2 p$ is possible only for bounded values of $\ell p$.

Otherwise, suppose $m \ge p^2$. Then we have
$$\ell p\epsilon \le \frac{1 + 2 \log_2 \ell + 2 \log_2 m}{\log_2 m - \log_2 p} \le 2\frac{1 + 2 \log_2 \ell + 2 \log_2 m}{\log_2 m } = 4 + \frac{2 + 4 \log_2 \ell}{\log_2 m } \le 6 + 4\log_2 \ell.$$
Likewise, $\ell p\epsilon \le 6 + 4\log_2 \ell$ is possible only for bounded values of $\ell p$. So we may suppose that $\ell p\le c/3$ for some constant $c$. Then, the theorem follows from \cor{primitive_shuffling}.
\end{proof}

\begin{corollary}\label{cor:primitive_shuffling_equivalence}
Let $\mathcal{G}$ be a collection of primitive permutation groups. The following are equivalent:
\begin{enumerate}[label=(\roman*)]
    \item $\mathcal{G}$ is well-shuffling.
    \item $b(G) = n^{\Omega(1)}$ for $G \in \mathcal{G}$ acting on $[n]$.
    \item All but finitely many $G \in \mathcal{G}$ satisfy case (i) of \thm{liebeck} with $\ell p = O(1)$.
\end{enumerate}
\end{corollary}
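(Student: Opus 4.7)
I would prove the three-way equivalence by completing the cycle (i) $\Rightarrow$ (ii) $\Rightarrow$ (iii) $\Rightarrow$ (i). Each direction is essentially an assembly of theorems already established above.

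For (i) $\Rightarrow$ (ii), I would argue by contrapositive. Suppose (ii) fails: then for every $\epsilon > 0$ there exists $G \in \mathcal{G}$ acting on $[n]$ with $b(G) < n^\epsilon$. Fix a partial Boolean function $f$ on $n$ bits exhibiting a sufficiently large separation, such as Forrelation \cite{AA15}, with $\Q(f) = O(1)$ and $\R(f) = \tOmega(\sqrt{n})$. Applying \prop{small_base} produces a partial $g$ symmetric under $G$ with $\Q(g) \le \Q(f) + b(G) = O(n^\epsilon)$ and $\R(g) \ge \R(f) = \tOmega(\sqrt{n})$. For any hypothetical well-shuffling power $a$, choosing $\epsilon < 1/(2a)$ makes $\R(g)/\Q(g)^a \to \infty$ as $n \to \infty$, contradicting the bound $\R(g) = O(\Q(g)^a)$ guaranteed by \thm{shuffle}.

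For (ii) $\Rightarrow$ (iii), fix $\epsilon > 0$ such that $b(G) \ge n^\epsilon$ for every $G \in \mathcal{G}$ of degree $n$. Since $9\log_2 n < n^\epsilon$ for all sufficiently large $n$, case (ii) of \thm{liebeck} is excluded for all but finitely many $G \in \mathcal{G}$, and the uniform bound $\ell p = O(1)$ on the remainder is exactly the second conclusion of \thm{primitive_large_base} applied with this $\epsilon$.

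For (iii) $\Rightarrow$ (i), \cor{primitive_shuffling} implies that the cofinite subcollection of $\mathcal{G}$ consisting of the case-(i) groups with bounded $\ell p$ is well-shuffling with power $3 \sup(\ell p) = O(1)$. Each of the finitely many exceptional groups acts on a fixed finite domain and so contributes only finitely many distinct values $\cost(G, r)$ for $r < n$ (with $\cost(G,r) = \infty$ for $r \ge n$); enlarging the constant $b$ in \defn{shuffling} absorbs these into the well-shuffling bound without altering the power. The only step that requires any care is (i) $\Rightarrow$ (ii): one must track input-size scaling carefully so that Forrelation, pulled back through \prop{small_base}, genuinely produces a super-polynomial separation in $g$. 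The remaining two directions amount to bookkeeping over \thm{liebeck}, \thm{primitive_large_base}, and \cor{primitive_shuffling}.
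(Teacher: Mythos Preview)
Your proposal is correct and follows exactly the same cycle (i) $\Rightarrow$ (ii) $\Rightarrow$ (iii) $\Rightarrow$ (i) as the paper, invoking the same results (\prop{small_base} and \thm{shuffle} for the first implication, \thm{primitive_large_base} for the second, and \cor{primitive_shuffling} for the third). Your treatment is simply more detailed than the paper's one-line-per-implication proof; in particular, your explicit handling of the finitely many exceptional groups in (iii) $\Rightarrow$ (i) by absorbing them into the constant $b$ is a point the paper leaves implicit.
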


\begin{proof}
(i) implies (ii) by \prop{small_base}, because well-shuffling permutation groups are not consistent with super-polynomial quantum speedups. (ii) implies (iii) by \thm{primitive_large_base}, and (iii) implies (i) by \cor{primitive_shuffling}.
\end{proof}

\begin{corollary}\label{cor:primitive_dichotomy}
Let $\mathcal{G}$ be a collection of primitive permutation groups. Then:
\begin{enumerate}[label=(\roman*)]
    \item If $b(G) = n^{\Omega(1)}$ for $G \in \mathcal{G}$ acting on $[n]$, then $\mathcal{G}$ is well-shuffling, and thus $R(f) = Q(f)^{O(1)}$ for all $f\in F(\mathcal{G})$.
    \item If $b(G) = n^{o(1)}$ for $G \in \mathcal{G}$ acting on $[n]$, then there exists a class of partial functions $f \in F(\mathcal{G})$ with $R(f) = Q(f)^{\omega(1)}$.
\end{enumerate}
\end{corollary}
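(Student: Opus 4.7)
The statement is essentially a packaging of results already built up earlier in the section, so the plan is to combine them cleanly without re-deriving anything substantial.

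For part (i), my plan is to apply \thm{primitive_large_base} followed by \thm{shuffle}. The hypothesis $b(G) = n^{\Omega(1)}$ means there exists a constant $\epsilon > 0$ such that $b(G) \ge n^\epsilon$ for all $G \in \mathcal{G}$ acting on $[n]$ (throwing away finitely many small groups, which do not affect asymptotic claims). By \thm{primitive_large_base}, there is a constant $c$ (depending only on $\epsilon$) such that $\mathcal{G}$ is well-shuffling with power $c$. Plugging this into the consequence part of \thm{shuffle} immediately yields $R(f) = O(Q(f)^c)$ for every $f \in F(\mathcal{G})$, which is exactly the bound $R(f) = Q(f)^{O(1)}$. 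This is essentially the direction (ii) $\Rightarrow$ (i) of \cor{primitive_shuffling_equivalence}; I will just invoke that corollary.

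For part (ii), my plan is to invoke \prop{small_base} with a well-known promise problem exhibiting a super-polynomial quantum speedup. Concretely, for each $G \in \mathcal{G}$ acting on $[n]$, I will take $f_n$ to be an instance of Simon's problem (or Forrelation) with domain of size $n$, so that $Q(f_n) = \polylog(n)$ and $R(f_n) = n^{\Omega(1)}$; by padding the alphabet and input length if necessary, we can always match the size of the domain on which $G$ acts. Applying \prop{small_base} produces a partial function $g_n$ symmetric under $G$ with
\[
Q(g_n) \le Q(f_n) + b(G) = \polylog(n) + n^{o(1)} = n^{o(1)}, \qquad R(g_n) \ge R(f_n) = n^{\Omega(1)}.
\]
The resulting family $\{g_n\}$ lies in $F(\mathcal{G})$ and witnesses $R(g) = Q(g)^{\omega(1)}$.

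No step is really a serious obstacle, since all the technical work was done in \thm{primitive_large_base} (case analysis using Liebeck's theorem), in \thm{shuffle} (converting well-shuffling to speedup bounds), and in \prop{small_base} (embedding arbitrary speedups into $G$-symmetric functions). The only mild care needed is in part (ii): one should make sure the base problem $f_n$ is defined on an alphabet-and-length combination compatible with the domain on which $G$ acts, but this is a standard padding argument and the resulting $Q$ and $R$ are polynomially related to their unpadded counterparts, which is enough to preserve a super-polynomial gap.
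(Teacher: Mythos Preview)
Your proposal is correct and matches the paper's approach; indeed, the paper states \cor{primitive_dichotomy} without a written proof, treating it as an immediate consequence of \cor{primitive_shuffling_equivalence} (and hence \thm{primitive_large_base}) together with \thm{shuffle} for part (i), and of \prop{small_base} applied to a problem such as Forrelation for part (ii). Your write-up simply makes these invocations explicit.
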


It would be interesting to prove some version of \cor{primitive_dichotomy} directly using the definition of $b(G)$, without appealing to such deep results about the structure of permutation groups.

\subsection{Imprimitive groups}

Starting with an arbitrary permutation group $G$, the first place to look for structure is in the orbits of $G$. We first observe that quantum speedups can be constructed out of any permutation group with many orbits.

\begin{proposition}
Let $G$ be a permutation group on $[n]$. Let $\mathcal{O} = \{O_1,O_2,\ldots,O_k\}$ be a partition of $[n]$ into $k$ orbits of $G$. Let $f$ be a (possibly partial) Boolean function with $\Dom(f) \subseteq \Sigma^k$ for some alphabet $\Sigma$. Then there exists a partial Boolean function $g$ that is symmetric under $G$ such that $Q(g) = Q(f)$ and $R(g) = R(f)$.
\end{proposition}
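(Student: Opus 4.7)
The natural approach is to \emph{lift} $f$ from a $k$-letter alphabet to an $n$-letter alphabet by making the $i$-th letter of the $f$-input get copied into every position of the $i$-th orbit $O_i$. Concretely, I would define
$$\Dom(g) \coloneqq \bigl\{x\in\Sigma^n : \exists\, y\in\Dom(f)\ \text{such that}\ x_j = y_i\ \text{for all $i\in[k]$ and all $j\in O_i$}\bigr\},$$
and set $g(x) \coloneqq f(y)$ for the unique such $y$ (unique because the orbits partition $[n]$, so $y_i$ is determined by the value at any representative of $O_i$).

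The symmetry check is immediate, and in fact stronger than needed: for any $\pi\in G$ and any $x\in\Dom(g)$ coming from $y$, the permutation $\pi$ maps $O_i$ into $O_i$, so $(x\circ\pi)_j = x_{\pi(j)} = y_i = x_j$ whenever $j\in O_i$. Hence $x\circ\pi = x$ identically, which trivially gives $x\circ\pi\in\Dom(g)$ and $g(x\circ\pi)=g(x)$. So $g$ is symmetric under $G$ in the sense of \defn{symmetric_function}.

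For the complexity equalities, I would argue matching upper and lower bounds in both the randomized and quantum models by a simulation in each direction. For $Q(g)\le Q(f)$ and $R(g)\le R(f)$: fix a representative $j_i\in O_i$ for each orbit; any algorithm for $f$ is converted into one for $g$ by replacing each query to coordinate $i\in[k]$ with a query to $x_{j_i}$, which by construction returns $y_i$. For $Q(g)\ge Q(f)$ and $R(g)\ge R(f)$: any algorithm for $g$ is converted into one for $f$ by intercepting each query to $j\in[n]$, locating the orbit $O_i$ that contains $j$ (which is a fixed lookup depending only on $G$, not on the input), querying $y_i$, and returning that value; the simulated algorithm's input distribution on $\Dom(g)$ exactly matches the lifted distribution from $\Dom(f)$. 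Both simulations are exact, preserve query count, and work identically in the randomized and quantum settings (the query translation is a deterministic reindexing, which lifts to a unitary permutation of basis states in the quantum case).

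I do not anticipate a significant obstacle here: the whole content is the observation that the orbits let us replicate a single $f$-letter redundantly across each $O_i$, reducing the problem to coordinate relabeling. The only mild subtlety is making sure the simulation from $g$ back to $f$ is legitimately a quantum algorithm, which follows because the map sending a query index $j$ to its orbit label $i$ is a fixed function of $G$ (computable without touching the input), so the query oracle for $g$ can be implemented by one oracle call to the $f$-oracle plus this fixed reindexing.
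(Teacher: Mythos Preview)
Your proof is correct and takes essentially the same approach as the paper: you lift $f$ to $g$ by replicating the $i$-th coordinate of $y$ across all positions in orbit $O_i$, which is exactly the paper's construction. The paper is terser (it just says $g$ is ``$f$ with some inputs possibly repeated'' and declares the complexity equalities trivial), while you spell out the two-way simulation explicitly, but the content is identical.
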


\begin{proof}
Define the promise for $g$ to consist of the strings $x$ such that for all $i \in [n]$, $x_i = y_j$ where $i \in O_j$ and $y \in \Dom(f)$. Naturally, in this case we define $g(x) = f(y)$. Then this problem is trivially symmetric under $G$, and moreover these two problems have the same (quantum or randomized) query complexity because $g$ is just $f$ with some inputs possibly repeated.
\end{proof}

By choosing $f$ to be some query problem with $Q(f) = O(1)$ but $R(f) = \omega(1)$ (e.g., Forrelation \cite{AA15}), one can construct a super-polynomial speedup out of any collection of permutation groups with $\omega(1)$ orbits.

Next, we observe that if $G$ restricted to any orbit is consistent with super-polynomial speedups, then so is $G$ as a whole.

\begin{proposition}
Let $G$ be a permutation group on $[n]$. Let $\mathcal{O} = \{O_1,O_2,\ldots,O_k\}$ be a partition of $[n]$ into $k$ orbits of $G$, and let $G|_{O_i}$ denote the restriction of $G$ acting only on $O_i$. Let $f$ be a (possibly partial) Boolean function that is symmetric under $G|_{O_i}$ with $\Dom(f) \subseteq \Sigma^{|O_i|}$ for some alphabet $\Sigma$ and orbit $i$. Then there exists a partial Boolean function $g$ that is symmetric under $G$ such that $Q(g) = Q(f)$ and $R(g) = R(f)$.
\end{proposition}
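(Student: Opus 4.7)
The plan is to construct $g$ on $n$ positions by embedding $f$ into the orbit $O_i$ and padding the remaining orbits with a fixed constant symbol, arranging matters so that every $\pi \in G$ acts on positions of $O_i$ via its restriction $\pi|_{O_i} \in G|_{O_i}$ and leaves the padding invariant. Concretely, I would pick any symbol $0 \in \Sigma$ (enlarging $\Sigma$ by one symbol if necessary), identify $\Sigma^{|O_i|}$ with the set of functions $O_i \to \Sigma$ via some arbitrary bijection between $\{1,\dots,|O_i|\}$ and $O_i$, and set
\[
\Dom(g) := \{\, x \in \Sigma^n : x|_{O_i} \in \Dom(f),\ x_j = 0 \text{ for all } j \in [n] \setminus O_i \,\}, \qquad g(x) := f(x|_{O_i}).
\]

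Next I would verify symmetry of $g$ under $G$. Since $\mathcal{O}$ is a partition of $[n]$ into $G$-orbits, every $\pi \in G$ satisfies $\pi(O_i) = O_i$, so $\pi|_{O_i}$ is a well-defined element of $G|_{O_i}$. For any $x \in \Dom(g)$, this gives $(x \circ \pi)|_{O_i} = x|_{O_i} \circ \pi|_{O_i}$, which lies in $\Dom(f)$ and has the same value of $f$ as $x|_{O_i}$ by the assumed symmetry of $f$ under $G|_{O_i}$; meanwhile each $j \in [n] \setminus O_i$ is mapped by $\pi$ to another point outside $O_i$, so the padded $0$s remain $0$s in $x \circ \pi$. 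Hence $x \circ \pi \in \Dom(g)$ and $g(x \circ \pi) = g(x)$.

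Finally, the equalities $\Q(g) = \Q(f)$ and $\R(g) = \R(f)$ follow by mutual simulation, in both the randomized and quantum query models. Given an algorithm for $f$, one obtains an algorithm for $g$ by redirecting each query of $f$ to the corresponding position in $O_i$ of the $g$-input, with no extra queries introduced. Conversely, given an algorithm for $g$, one obtains an algorithm for $f$ on input $y \in \Dom(f)$ by coherently simulating the $g$-input $x$ with $x|_{O_i} = y$ and $x_j = 0$ for $j \notin O_i$: queries inside $O_i$ are forwarded to the oracle for $y$, and queries outside $O_i$ are answered with the hardcoded value $0$. This simulation preserves the number of queries in both models.

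The only real ``obstacle'' is notational: fixing a consistent identification between coordinates of $f$'s input and elements of $O_i$, and observing that $G$-orbit invariance ensures $\pi|_{O_i} \in G|_{O_i}$ for every $\pi \in G$. With these conventions in place, both the symmetry check and the query-complexity reductions are mechanical.
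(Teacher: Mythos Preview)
Your proposal is correct and follows essentially the same approach as the paper: embed $f$ into the coordinates indexed by $O_i$, pad the remaining coordinates with a fixed constant symbol, and observe that the resulting $g$ is $G$-symmetric with identical randomized and quantum query complexities. The paper's own proof is a one-paragraph version of exactly this construction; your write-up simply makes the symmetry verification and the two-way simulation explicit.
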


\begin{proof}
Define the promise for $g$ to consist of the strings $x$ that agree with some string $y \in \Dom(f)$ on $O_i$ and are some constant symbol elsewhere. Naturally, in this case we define $g(x) = f(y)$. Then this problem is trivially symmetric under $G$, and moreover these two problems have the same (quantum or randomized) query complexity because $g$ is just $f$ with extra inputs that do not affect the output.
\end{proof}

Thus, a collection of permutation groups that does not allow any super-polynomial quantum speedups must remain so when restricted to any subset of its orbits. For this reason, the main task is to understand which \emph{transitive} permutation groups (i.e., groups with a single orbit) allow super-polynomial quantum speedups. The following well-known embedding theorem shows that impritive transitive groups can be expressed as subgroups of a wreath product constructed from a nontrivial block system. See Theorem II.49 of \cite{Hul10} for a proof.

\begin{lemma}[Embedding theorem]\label{lem:embedding}
Let $G$ be a transitive imprimitive permutation group on domain $D$ with $\mathcal{B} = \{B_1, B_2, \ldots, B_k\}$ a nontrivial block system for $G$. Let $H$ denote the permutation group acting on $[k]$ induced by the action of $G$ on blocks $\mathcal{B}$. Let $\Stab_G(\{B_1\})|_{B_1}$ denote the permutation group induced by the action of $Stab_G(\{B_1\})$ restricted to the orbit $B_1$. Then the action of $G$ is permutation isomorphic to a subgroup of $H \wr \Stab_G(\{B_1\})|_{B_1}$ in imprimitive action, meaning that $G$ can be viewed as a subgroup under an appropriate relabeling that identifies $D$ with $[k] \times B_1$.
\end{lemma}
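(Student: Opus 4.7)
The plan is to make the identification $D \cong [k] \times B_1$ explicit by choosing coset representatives for the action on blocks, and then to check that conjugation by these representatives produces the required wreath-product decomposition.

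First, since $G$ is transitive on $D$ and $\mathcal{B}$ is a block system, $G$ acts transitively on $\mathcal{B}$. For each $i \in [k]$, pick $g_i \in G$ with $g_i(B_1) = B_i$, choosing $g_1 = e$. Define the bijection $\phi \colon [k] \times B_1 \to D$ by $\phi(i,y) = g_i(y)$, so that elements of $B_i$ are labeled by their preimage in $B_1$ under $g_i$. This is well-defined and bijective because the blocks partition $D$ and each $g_i$ restricts to a bijection $B_1 \to B_i$.

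Next, for each $\pi \in G$ define its induced action on $[k] \times B_1$ via $\phi$. Let $\sigma \in H$ be the permutation of $[k]$ induced by $\pi$ on blocks, i.e. $\pi(B_i) = B_{\sigma(i)}$. Then $g_{\sigma(i)}^{-1} \pi g_i$ sends $B_1$ to itself and thus lies in $\Stab_G(\{B_1\})$; let $h_i(\pi) \in \Stab_G(\{B_1\})|_{B_1}$ denote its restriction to $B_1$. Define
\[
\Phi(\pi) = (\sigma;\, h_1(\pi),\ldots,h_k(\pi)) \in H \wr \Stab_G(\{B_1\})|_{B_1}.
\]
A direct computation using $\phi$ shows that $\Phi(\pi)$ acts on $(i,y)$ by $(i,y) \mapsto (\sigma(i), h_i(\pi)(y))$, which matches the imprimitive action in \defn{wreath_product}, and furthermore that this agrees with the pulled-back action of $\pi$ through $\phi$.

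I would then verify that $\Phi$ is a group homomorphism by computing $\Phi(\pi\pi')$ and checking it equals $\Phi(\pi)\Phi(\pi')$ in the wreath product; the $H$-component is immediate from the definition, while the $i$th local factor becomes $g_{\sigma\sigma'(i)}^{-1}\pi g_{\sigma'(i)} \cdot g_{\sigma'(i)}^{-1}\pi' g_i = g_{\sigma\sigma'(i)}^{-1}(\pi\pi')g_i$, giving exactly $h_i(\pi\pi')$. Faithfulness of $\Phi$ is free: if $\Phi(\pi)$ is trivial then $\pi$ fixes every $\phi(i,y)$ in $D$, so $\pi = e$ because $G$ acts faithfully on $D$.

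The main obstacle is purely notational—keeping the conjugation bookkeeping straight so that the local factor at index $i$ of $\Phi(\pi\pi')$ (which is read using $g_{\sigma\sigma'(i)}$ and $g_i$) telescopes correctly with the factors of $\Phi(\pi')$ (read using $g_{\sigma'(i)}$ and $g_i$) and $\Phi(\pi)$ (read using $g_{\sigma\sigma'(i)}$ and $g_{\sigma'(i)}$). Once this is checked, the lemma follows: $\Phi(G)$ is a subgroup of $H \wr \Stab_G(\{B_1\})|_{B_1}$ in imprimitive action, and $\Phi$ intertwines the $G$-action on $D$ with this subgroup's action on $[k] \times B_1$ via $\phi$, which is exactly the definition of permutation isomorphism.
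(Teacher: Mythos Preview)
Your proof is correct and follows the standard textbook argument: pick a transversal $\{g_i\}$ for the block action, use it to identify $D$ with $[k]\times B_1$, and read off the wreath-product components of $\pi$ as $(\sigma;\,g_{\sigma(1)}^{-1}\pi g_1,\ldots,g_{\sigma(k)}^{-1}\pi g_k)$, with the telescoping $g_{\sigma\sigma'(i)}^{-1}\pi g_{\sigma'(i)}\cdot g_{\sigma'(i)}^{-1}\pi' g_i$ verifying the homomorphism property. The paper does not prove this lemma at all---it simply cites Theorem~II.49 of \cite{Hul10}---so there is nothing to compare against beyond noting that your argument is exactly the classical one such a reference would contain.
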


Applying \lem{embedding} recursively, one can always decompose a transitive imprimitive permutation group as a subgroup of an iterated wreath product of primitive groups. Thus, primitive groups are sometimes described as the ``building blocks'' of permutation groups. Note that such a decomposition is not unique in general, because a permutation group can have many nontrivial block systems.

The next proposition shows that if any terms in such a wreath product decomposition are consistent with super-polynomial speedups, then so is the group as a whole.

\begin{proposition}\label{prop:wreath_speedup}
Let $G$ be a permutation group on $[n]$ that is a subgroup of an iterated wreath product $G_1 \wr G_2 \wr \cdots \wr G_d$ in imprimitive action. Suppose $f$ is a (possibly partial) function symmetric under $G_i$ for some $i$. Then there exists a partial function $g$ symmetric under $G$ with $Q(g) = Q(f)$ and $R(g) = R(f)$.
\end{proposition}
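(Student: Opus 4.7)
The plan is to give two building-block constructions for lifting symmetries through a single wreath product, and then combine them using the associativity of the imprimitive wreath product action.

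\textbf{Block A (top-level lift).} Given $h$ symmetric under a group $K$ acting on $[m]$ and any permutation group $H$ acting on $[n]$, I construct $g_A$ on $[m]\times[n]$ symmetric under $K \wr H$ by declaring the promise to consist of inputs $x$ for which $x_{(i,j)} = y_i$ is constant in $j$ for some $y \in \Dom(h)$, and setting $g_A(x) := h(y)$. The $K$-action $(i,j)\mapsto(\sigma(i),j)$ transforms the vector of row-values $y$ into a $K$-shuffle of itself, on which $h$ is invariant; each factor of $H^{\times m}$ permutes within a constant row and leaves $x$ unchanged, so the promise is $(K \wr H)$-invariant. Translating each query of $h$ to index $i$ into a query of $x$ at $(i,1)$ simulates $h$'s algorithm with no overhead, while restricting $g_A$ to all-equal-row inputs embeds $h$ into $g_A$; hence $\Q(g_A) = \Q(h)$ and $\R(g_A) = \R(h)$.

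\textbf{Block B (within-block lift).} Given $h$ symmetric under a group $H$ acting on $[n]$ and any group $K$ acting on $[m]$, I construct $g_B$ on $[m]\times[n]$ symmetric under $K \wr H$ by declaring the promise to consist of inputs $x$ for which there exist $\pi_1,\ldots,\pi_m \in H$ and $y \in \Dom(h)$ with $x_{(i,j)} = y_{\pi_i(j)}$, and setting $g_B(x) := h(y)$ (well-defined by $H$-symmetry of $h$, since all candidate $y$'s differ by an element of $H$). The $K$-action on the first coordinate merely relabels the tuple $(\pi_i)_i$, while applying $h_i \in H$ to row $i$ replaces $\pi_i$ with $\pi_i h_i \in H$, so the promise is $(K\wr H)$-invariant. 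Querying row $1$ returns $y \circ \pi_1 \in \Dom(h)$, whose $h$-value equals $h(y)$, so $h$'s algorithm transfers with no overhead; embedding $y$ via $\pi_i = \mathrm{id}$ gives the lower bound. Hence $\Q(g_B) = \Q(h)$ and $\R(g_B) = \R(h)$.

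\textbf{Combining.} Using associativity of the imprimitive wreath product, write $G_1 \wr G_2 \wr \cdots \wr G_d = L \wr R$ where $L := G_1 \wr \cdots \wr G_{i-1}$ and $R := G_i \wr \cdots \wr G_d$ (with $L$ trivial when $i=1$ and $R = G_d$ when $i=d$). Starting from $f$ symmetric under $G_i$, I iterate Block A, taking the current function as $h$ with $K = G_i \wr \cdots \wr G_{j-1}$ and $H = G_j$ for $j = i+1,\ldots,d$, to lift $f$ to $f'$ symmetric under $R$ with $\Q(f') = \Q(f)$ and $\R(f') = \R(f)$. If $i > 1$, one further application of Block B with $K = L$ and $H = R$ lifts $f'$ to the desired $g$, symmetric under $L \wr R = G_1 \wr \cdots \wr G_d$ and hence under its subgroup $G$. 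The main subtlety is checking that each promise is invariant under the \emph{entire} wreath product rather than just a generating set and that the implicit $y$ is well-defined modulo $h$'s symmetries; both are handled directly by the group laws together with the symmetry assumption on $h$.
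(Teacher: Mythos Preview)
Your proof is correct and follows essentially the same idea as the paper: pad $f$ by trivial layers on both sides so that the resulting function is invariant under the full iterated wreath product (hence under its subgroup $G$) while preserving query complexity. The paper expresses this in a single line as the block composition $\triv_{n_1\cdots n_{i-1}} \circ f \circ \triv_{n_{i+1}\cdots n_d}$, where $\triv_m$ is the promise problem with domain $\{0^m,1^m\}$; your Block~A is exactly $h\circ\triv_n$, and your Block~B is a minor variant of $\triv_m\circ h$ in which rows are allowed to be $H$-shuffles of a single $y$ rather than all literally equal---a harmless change since $h$ is $H$-symmetric.
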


\begin{proof}
Suppose $G_i$ acts on $n_i$ points, so that $n = \prod_{i=1}^d n_i$. It suffices to exhibit such a function when $G$ is the full wreath product, and each $G_j$ is the symmetric group $S_{n_j}$ for every $j \neq i$. Define the function $\triv_m$ to be a promise problem that only takes two inputs, $0^m$ and $1^m$, outputting $0$ on $0^m$ and $1$ on $1^m$. The block composition $\triv_{n_1n_2\cdots n_{i-1}} \circ f \circ \triv_{n_{i+1}n_{i+2}\cdots n_d}$ is symmetric under $G$, and has quantum and randomized query complexities identical to those of $f$.
\end{proof}

Less trivially, if a wreath product decomposition of a group is sufficiently deep, then the group is also consistent with large speedups. The following construction is based on the so-called ``$\kFaultDirectTrees$'' problem of \cite{ZKH12,Kim12}:

\begin{theorem}\label{thm:deep_wreath_speedup}
Let $G$ be a permutation group on $[n]$ that is a subgroup of an iterated wreath product of nontrivial permutation groups $G_1 \wr G_2 \wr \cdots \wr G_d$ in imprimitive action. Then there exists a function $f$ symmetric under $G$ such that $Q(f) = O(1)$ but $R(f) = \Omega(\log d)$.
\end{theorem}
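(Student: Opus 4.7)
The plan is to reduce the statement to a tree-symmetry problem and then appeal to the $\kFaultDirectTrees$ construction of \cite{ZKH12,Kim12}. First, since $G \le G_1 \wr G_2 \wr \cdots \wr G_d$ in imprimitive action, any function symmetric under the full wreath product is automatically symmetric under $G$. Moreover, I may replace each $G_i$ by the full symmetric group $S_{n_i}$ acting on the same $n_i$ points (enlarging the symmetry group only makes the invariance condition more restrictive, so any function satisfying the stronger condition satisfies the original). Because each $G_i$ is nontrivial, $n_i \ge 2$. Hence it suffices to construct $f$ symmetric under $S_{n_1} \wr S_{n_2} \wr \cdots \wr S_{n_d}$ in imprimitive action.

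Next, I would identify this symmetry with a tree structure. In the imprimitive action, the domain $\prod_i [n_i]$ is in natural bijection with the leaves of a rooted tree of depth $d$ in which internal nodes at level $i$ each have $n_i \ge 2$ children, and $S_{n_i}$ permutes sibling subtrees at level $i$ arbitrarily. I would then define $f$ by adapting the $\kFaultDirectTrees$ problem of \cite{ZKH12,Kim12} to this tree: assign bits to the leaves, evaluate each internal node by a prescribed Boolean rule that is symmetric in its children, and impose a ``one-fault'' promise (all internal nodes are consistent with the rule, except possibly one). Because both the evaluation rule and the fault condition are symmetric in sibling children at every level, $f$ is invariant under arbitrary permutations of sibling subtrees, hence under the full wreath product.

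Finally, I would verify the complexity bounds. The quantum upper bound $Q(f) = O(1)$ follows from the algorithm of \cite{ZKH12,Kim12}, which uses only a constant number of queries (independent of depth) to locate and evaluate around the unique fault. For the randomized lower bound $R(f) = \Omega(\log d)$, I would give a direct adversary argument: the promise supports $\Omega(d)$ distinguishable instances, each corresponding to a different depth at which the fault occurs and yielding different function values, and each classical query reveals only $O(1)$ bits about where in the tree the fault lies, so $\Omega(\log d)$ queries are required to identify the outcome.

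The main obstacle I expect is adapting the original $\kFaultDirectTrees$ construction, which is stated for complete binary trees, to trees with arbitrary branching factors $n_i \ge 2$ that may vary between levels, while simultaneously keeping the quantum algorithm at $O(1)$ queries and keeping the randomized lower bound at $\Omega(\log d)$. A secondary difficulty is ensuring that the ``one-fault'' promise is phrased so that the resulting partial function is genuinely symmetric under the full wreath product, rather than only under some strict refinement of it that would restrict the class of permissible inputs asymmetrically.
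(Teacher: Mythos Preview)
Your high-level plan matches the paper's: reduce to the full wreath product of symmetric groups, identify the domain with leaves of a depth-$d$ tree with branching $n_i\ge 2$ at level $i$, and instantiate a variant of $\kFaultDirectTrees$. However, two concrete pieces of your proposal are off.

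First, your description of the promise is wrong. The ``one-fault'' condition in \cite{ZKH12,Kim12} is \emph{not} that at most one internal node in the entire tree is a fault; it is that along every root-to-leaf path at most one node is a fault (a node is a fault when its children's values are not all equal). Under the actual promise many nodes can be faults simultaneously, and this is essential: it is what makes the classical problem hard while keeping the span-program witness size bounded. Your single-global-fault picture would not support the lower bound, and your description of the quantum algorithm as one that ``locates and evaluates around the unique fault'' is not how the $O(1)$ upper bound works --- it comes from a witness-size bound on composed span programs, not from a search.

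Second, your randomized lower bound sketch does not go through. The function is Boolean, so an algorithm need not identify which of $\Omega(d)$ fault depths occurred; it only needs the output bit. An information-theoretic ``$O(1)$ bits per query, hence $\Omega(\log d)$ queries to localize the fault'' argument proves nothing about the decision problem. The paper's route is different: it chooses the level-$i$ gate to be the padded $\NAND$ that outputs $0$ iff the input is $1^{n_i}$, restricted to Hamming weights $\{0,\lfloor n_i/2\rfloor,\lceil n_i/2\rceil,n_i\}$, and then further restricts the promise so that the $n_i$ child-subtrees come in two groups of sizes $\lfloor n_i/2\rfloor$ and $\lceil n_i/2\rceil$ with identical leaf patterns within each group. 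This reduces the problem to the binary $1\textsc{-Fault Nand Tree}$ with repeated inputs, to which the $\Omega(\log d)$ bound of \cite{ZKH12} applies directly. For the quantum side, the paper verifies the three witness-size conditions of \cite{Kim12} for the explicit span program $v_j=1/\sqrt{n_i}$, which gives $\textsc{wsize}\le 3$ at every node and hence $Q(f)=O(1)$. Your identified obstacle (arbitrary, level-dependent fan-in) is exactly where this work lies; citing \cite{ZKH12,Kim12} alone does not discharge it.
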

\begin{proof}
Suppose $G_i$ acts on $n_i \ge 2$ points, so that $n = \prod_{i=1}^d n_i$. It suffices to show that there exists a super-polynomial speedup when $G$ is the full wreath product and each $G_i$ is the symmetric group $S_{n_i}$. In this case, $G$ can be described the group of symmetries of a depth-$d$ rooted tree, where all of the vertices at distance $i - 1$ from the root have exactly $n_i$ children. $G$ is the action of this group of symmetries on the leaves of the tree. We construct a problem symmetric under $G$ by taking a Boolean evaluation tree and adding a promise that gives an exponential speedup.

If $n_i = 2$ for all $i$, such a problem is already known: the $\kFaultDirectTrees$ problem is a promise problem symmetric under $G$ with quantum query complexity $O(1)$ \cite{Kim12} and randomized query complexity $\Omega(\log d)$ \cite{ZKH12}. In the simplest case, the problem is a restriction of the depth-$d$ binary $\NAND$ tree, but it can also be generalized to Boolean evaluation trees composed of other functions.

We take a slight generalization of the $\kFaultDirectTrees$ trees problem where the Boolean functions can have different fan-in at each depth. Specifically, we take the tree to be the block composition of functions $f_1 \circ f_2 \circ \cdots \circ f_d$ where the function $f_i$ at depth $i$ is defined by $f_i\colon S \to \{0,1\}$ where $S \subset \{0,1\}^{n_i}$ consists of the $n_i$-bit strings that have Hamming weight in $\{0,\lfloor n_i/2 \rfloor, \lceil n_i/2 \rceil, n_i\}$, and $f_i(x) = 0$ if and only if $x = 1^{n_i}$. Note that $f_i$ is essentially just a padded version of the binary $\NAND$ function. The proof that this problem has the desired query complexity bounds follows almost immediately from the proofs in \cite{Kim12, ZKH12}; for completeness, we provide the details in \app{wreath_tree}.
\end{proof}

We remark that \thm{deep_wreath_speedup} shows the existence of super-polynomial speedups under permutation groups with large base. Indeed, if we define $G = \underbrace{S_2 \wr S_2 \wr \cdots \wr S_2}_{d \text{ times}}$ to be the depth-$d$ iterated wreath product in imprimitive action of the symmetric group $S_2$, then $b(G) = 2^{d-1}$ even though $G$ acts on $2^d$ points.

Putting all of these together, we can say the following about collections of permutation groups that are inconsistent with super-polynomial speedups:

\begin{corollary}\label{cor:imprimitive_classification}
Let $\mathcal{G}$ be a collection of permutation groups. Suppose that for every $f \in F(\mathcal{G})$, we have $R(f) = Q(f)^{O(1)}$. Then the following must hold:
\begin{enumerate}[label=(\roman*)]
    \item Permutation groups in $\mathcal{G}$ have $O(1)$ orbits.
    \item The collection of restrictions of groups $G \in \mathcal{G}$ to subsets of their orbits is also inconsistent with super-polynomial quantum speedups.
    \item All wreath product decompositions of groups $G \in \mathcal{G}$ restricted to an orbit have $O(1)$ primitive factors.
    \item The collection of primitive factors that appear in such wreath product decompositions is well-shuffling. Equivalently, by \cor{primitive_shuffling_equivalence}, those primitive factors $G$ that act on $[n]$ have $b(G) = n^{\Omega(1)}$, and all but finitely many satisfy case (i) of \thm{liebeck} with $\ell p = O(1)$.
\end{enumerate}
\end{corollary}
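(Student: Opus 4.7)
The plan is to prove the corollary by contrapositive: for each of the four claimed conclusions, we assume it fails and construct a function $f \in F(\mathcal{G})$ with a super-polynomial quantum speedup, contradicting the hypothesis. Each construction is an immediate application of one of the earlier propositions in \sec{wsclassify}, so the proof is essentially a modular combination of previous results; the main work is organizing the reductions correctly.

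For claim (i), if $\mathcal{G}$ contains groups with $\omega(1)$ orbits, apply the first proposition of the imprimitive subsection to a base problem (such as Forrelation \cite{AA15}) with $Q = O(1)$ and $R = \omega(1)$ on a domain of size equal to the number of orbits, yielding a function in $F(\mathcal{G})$ with a super-polynomial speedup. For claim (ii), the second proposition of the imprimitive subsection shows that if $G \in \mathcal{G}$ and $f$ is symmetric under the restriction $G|_{O_i}$ to some orbit with $R(f) = Q(f)^{\omega(1)}$, then we can lift $f$ to a function $g$ symmetric under $G$ with $Q(g) = Q(f)$ and $R(g) = R(f)$, again producing a super-polynomial speedup in $F(\mathcal{G})$.

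For claims (iii) and (iv), combine the embedding theorem (\lem{embedding}) with the wreath product propositions. If some $G \in \mathcal{G}$ has a transitive orbit whose wreath product decomposition involves $\omega(1)$ nontrivial primitive factors, then \thm{deep_wreath_speedup} supplies a function symmetric under the full wreath product (and thus under $G|_{O_i}$, by taking $G$ to sit inside that wreath product via \lem{embedding}) with $Q = O(1)$ and $R = \Omega(\log d) = \omega(1)$; then lift to $G$ using claim (ii). For claim (iv), if some primitive factor $H$ appearing in a wreath product decomposition of $G|_{O_i}$ is not well-shuffling, then by \cor{primitive_shuffling_equivalence} there is a function $f$ symmetric under $H$ with $R(f) = Q(f)^{\omega(1)}$; apply \prop{wreath_speedup} to lift this to a function symmetric under the wreath product containing $G|_{O_i}$ (hence symmetric under $G|_{O_i}$), and then lift once more to $G$ using claim (ii).

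The main conceptual obstacle, if any, is making sure the lifting is compatible at each stage: specifically, when we invoke \prop{wreath_speedup} or \thm{deep_wreath_speedup}, the resulting function is symmetric under the ambient wreath product $G_1 \wr \cdots \wr G_d$, not literally under the subgroup $G|_{O_i}$ of this wreath product given by \lem{embedding}; but any function symmetric under a supergroup is also symmetric under any subgroup, so this is automatic. The only genuine point requiring care is the order in which we apply the lifting steps — first lift a primitive-factor speedup to a wreath-product speedup, then restrict to the subgroup $G|_{O_i}$, then lift to $G$ via the orbit-restriction proposition — and after doing this one sees that each of (i)–(iv) follows by a single invocation of an earlier result.
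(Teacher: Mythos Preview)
Your proposal is correct and matches the paper's intended argument: the paper does not give an explicit proof of this corollary, treating it as an immediate consequence of the four preceding results (the two orbit propositions, \prop{wreath_speedup}, and \thm{deep_wreath_speedup}), which is exactly the contrapositive organization you spell out. One minor imprecision: in your argument for (iv) you invoke \cor{primitive_shuffling_equivalence} to produce a function with a speedup, but that corollary only gives the equivalence ``well-shuffling $\Leftrightarrow b(G)=n^{\Omega(1)}$''; the actual construction of the speedup from a non-well-shuffling primitive factor comes from \prop{small_base} (or equivalently \cor{primitive_dichotomy}(ii)).
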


Thus, we have formally shown that collections of permutation groups that do not allow super-polynomial quantum speedups must be constructed out of a constant number of well-shuffling primitive groups. As we have shown before, such primitive groups are essentially just extensions of permutation groups that correspond to $p$-uniform hypergraph symmetries.

\section{Exponential speedup for adjacency-list graph property testing}
\label{sec:testing}

Having shown that graph properties in the adjacency \emph{matrix} model cannot have exponential quantum speedup, we now turn to the adjacency \emph{list} model. Specifically, we describe a graph property testing problem in the adjacency list model for which there is an exponential quantum speedup. We work with graphs that are undirected and have degree at most $5$, while for convenience\footnote{This is simply for clarity of presentation; we use self-loops and double edges as markers, for nodes and edges respectively. However, one could also work with simple graphs only and get the same exponential separation. This can be achieved by using slightly more complicated marker structures, e.g., one could replace self-loops with an edge connected to a triangle, etc.} we allow the graphs to have self-loops and parallel edges (we count these with multiplicity toward the degree bound).

Recall that in property testing, the goal is to distinguish between some set of yes instances and the set of no instances consisting of all graphs that are $\eps$-far from the yes instances. More precisely we say that a graph $G=(V,E)$ (with maximum degree at most $5$) is $\eps$-far from the property $\mathcal{P}$, if for any yes instance $G'=(V,E')\in\mathcal{P}$ with $n$ vertices, the symmetric difference of the edge sets has size $|E \triangle E'|\geq \eps n$, where we count edges with multiplicity. Since we work with graph properties, we also require that $\mathcal{P}$ is invariant under permuting the vertices.

\begin{figure}[ht]
\centering
\begin{tikzpicture}[scale=0.96]
\SetGraphUnit{1.5}
\GraphInit[vstyle=Simple]
\tikzset{VertexStyle/.style = {shape = circle,fill = black,minimum size = 3.5pt,inner sep=1pt}}

\candy{(-8,0)}{1.2}
\node[text width=5cm, align=center] at (-2.5,0) {$\boldsymbol{=}$};
\node[text width=5cm, align=center] at (-5.6,0) {Body};
\node[text width=2cm, align=center, rotate = 90] at (-7.8,0) {Antenna};
\node[text width=2cm, align=center, rotate = 90] at (-3.5,0) {Antenna};

\Vertex[x=-1,y=1]{T11}
\Vertex[x=-1,y=-1]{T12}
\Vertex[x=-2,y=1.5]{T21}
\Vertex[x=-2,y=0.5]{T22}
\Vertex[x=-2,y=-0.5]{T23}
\Vertex[x=-2,y=-1.5]{T24}

\Vertex[x=0,y=0]{O}

\Vertex[x=1,y=1]{M11}
\Vertex[x=1,y=-1]{M12}
\Vertex[x=2,y=1.5]{M21}
\Vertex[x=2,y=0.5]{M22}
\Vertex[x=2,y=-0.5]{M23}
\Vertex[x=2,y=-1.5]{M24}

\Vertex[x=3.5,y=1.5]{H11}
\Vertex[x=3.5,y=0.5]{H12}
\Vertex[x=3.5,y=-0.5]{H13}
\Vertex[x=3.5,y=-1.5]{H14}
\Vertex[x=4.5,y=1]{H21}
\Vertex[x=4.5,y=-1]{H22}
\Vertex[x=5.5,y=0]{E}

\Vertex[x=6.5,y=1]{AR11}
\Vertex[x=6.5,y=-1]{AR12}
\Vertex[x=7.5,y=1.5]{AR21}
\Vertex[x=7.5,y=0.5]{AR22}
\Vertex[x=7.5,y=-0.5]{AR23}
\Vertex[x=7.5,y=-1.5]{AR24}

\Edges(O,M11,M21,H13,H22,E)
\Edges(O,M11,M22,H11,H21,E)
\Edges(O,M12,M23,H14,H22,E)
\Edges(O,M12,M24,H12,H21,E)

\Edges(M21, H11)
\Edges(M22, H14)
\Edges(M23, H12)
\Edges(M24, H13)

\Edges(T11, O)
\Edges(T12, O)
\Edges(T11, T21)
\Edges(T11, T22)
\Edges(T12, T23)
\Edges(T12, T24)
\Edges(T11, T21)
\Edges(T11, T21)

\Edges(E, AR11)
\Edges(E, AR12)
\Edges(AR11, AR21)
\Edges(AR11, AR22)
\Edges(AR12, AR23)
\Edges(AR12, AR24)

\doublebowtie{(-8,-6)}{1.2}
\node[text width=5cm, align=center] at (-2.5,-6) {$\boldsymbol{=}$};
\node[text width=5cm, align=center, rotate=90] at (-5.2,-6) {Body};
\node[text width=5cm, align=center, rotate=90] at (-6,-6) {Body};
\node[text width=2cm, align=center, rotate = 90] at (-7.8,-6) {Antenna};
\node[text width=2cm, align=center, rotate = 90] at (-3.5,-6) {Antenna};

\Vertex[x=-1,y=-5]{TT11}
\Vertex[x=-1,y=-7]{TT12}
\Vertex[x=-2,y=-4.5]{TT21}
\Vertex[x=-2,y=-5.5]{TT22}
\Vertex[x=-2,y=-6.5]{TT23}
\Vertex[x=-2,y=-7.5]{TT24}

\Vertex[x=0,y=-6]{OO}

\Vertex[x=1,y=-5]{MM11}
\Vertex[x=1,y=-7]{MM12}
\Vertex[x=2,y=-4.5]{MM21}
\Vertex[x=2,y=-5.5]{MM22}
\Vertex[x=2,y=-6.5]{MM23}
\Vertex[x=2,y=-7.5]{MM24}

\Vertex[x=3.5,y=-4.5]{HH11}
\Vertex[x=3.5,y=-5.5]{HH12}
\Vertex[x=3.5,y=-6.5]{HH13}
\Vertex[x=3.5,y=-7.5]{HH14}
\Vertex[x=4.5,y=-5]{HH21}
\Vertex[x=4.5,y=-7]{HH22}
\Vertex[x=5.5,y=-6]{EE}

\Vertex[x=6.5,y=-5]{ARAR11}
\Vertex[x=6.5,y=-7]{ARAR12}
\Vertex[x=7.5,y=-4.5]{ARAR21}
\Vertex[x=7.5,y=-5.5]{ARAR22}
\Vertex[x=7.5,y=-6.5]{ARAR23}
\Vertex[x=7.5,y=-7.5]{ARAR24}

\Edges(OO, MM11)
\Edges(MM11,MM21)
\Edges(MM11,MM22)
\Edges(OO,MM12)
\Edges(MM12,MM23)
\Edges(MM12,MM24)

\Edges(EE,HH21)
\Edges(EE,HH22)
\Edges(HH21,HH11)
\Edges(HH21,HH12)
\Edges(HH22,HH13)
\Edges(HH22,HH14)

\Edges(TT11, OO)
\Edges(TT12, OO)
\Edges(TT11, TT21)
\Edges(TT11, TT22)
\Edges(TT12, TT23)
\Edges(TT12, TT24)
\Edges(TT11, TT21)
\Edges(TT11, TT21)

\Edges(EE, ARAR11)
\Edges(EE, ARAR12)
\Edges(ARAR11, ARAR21)
\Edges(ARAR11, ARAR22)
\Edges(ARAR12, ARAR23)
\Edges(ARAR12, ARAR24)

\draw [line width=0.25mm] (MM21) to [out=-60,in=60] (MM24);
\draw [line width=0.25mm] (MM24) to [out=120,in=240] (MM22);
\draw [line width=0.25mm] (MM22) to [out=-60,in=60] (MM23);
\draw [line width=0.25mm] (MM23) to [out=120,in=240] (MM21);

\draw [line width=0.25mm] (HH11) to [out=-60,in=60] (HH12);
\draw [line width=0.25mm] (HH14) to [out=120,in=240] (HH12);
\draw [line width=0.25mm] (HH13) to [out=-60,in=60] (HH14);
\draw [line width=0.25mm] (HH13) to [out=120,in=240] (HH11);

\node[text width=3cm, align=center] at (5.5,2) {\rooot};
\node[text width=3cm, align=center] at (0,2) {\rooot};

\draw [->, line width=1pt] (0,1.5) -- (0,.25);
\draw [->, line width=1pt] (5.5,1.5) -- (5.5,.25);

\node[text width=3cm, align=center] at (5.5,-8) {\rooot};
\node[text width=3cm, align=center] at (0,-8) {\rooot};

\draw [->, line width=1pt] (0,-7.5) -- (0,-6.25);
\draw [->, line width=1pt] (5.5,-7.5) -- (5.5,-6.25);

\draw[<->, line width=1pt] (-2,-3) -- (-0.25,-3);

\node[text width=1.5cm, align=center] at (-1.125,-3.5) {$k$};
\node[text width=1.5cm, align=center] at (-1.125,-2.5) {\small Antenna};

\draw[<->, line width=1pt] (0.25,-3) -- (1.5,-3);
\node[text width=1.5cm, align=center] at (0.875,-3.5) {$k-1$};
\node[text width=1.5cm, align=center] at (0.875,-2.25) {\small Body interior};

\draw[<->, line width=1pt] (1.5,-3) -- (4,-3);
\node[text width=1.5cm, align=center] at (2.75,-3.5) {$2$};
\node[text width=1.5cm, align=center] at (2.725,-2.25) {\small Body weld};

\draw[<->, line width=1pt] (4,-3) -- (5.25,-3);

\node[text width=1.5cm, align=center] at (4.625,-3.5) {$k-1$};
\node[text width=1.5cm, align=center] at (4.625,-2.25) {\small Body interior};
\draw[<->, line width=1pt] (5.75,-3) -- (7.5,-3);

\node[text width=1.5cm, align=center] at (6.625,-3.5) {$k$};
\node[text width=1.5cm, align=center] at (6.625,-2.5) {\small Antenna};
\end{tikzpicture}
\caption{An illustration of a candy graph in the yes instance (top right) and our shorthand symbol for it (top left), and a double-bow-tie graph which appears in the particular no instances considered for our classical lower bound proof (bottom right) and our shorthand symbol for it (bottom left).}
\label{fig:modified_glued_tree}
\end{figure}
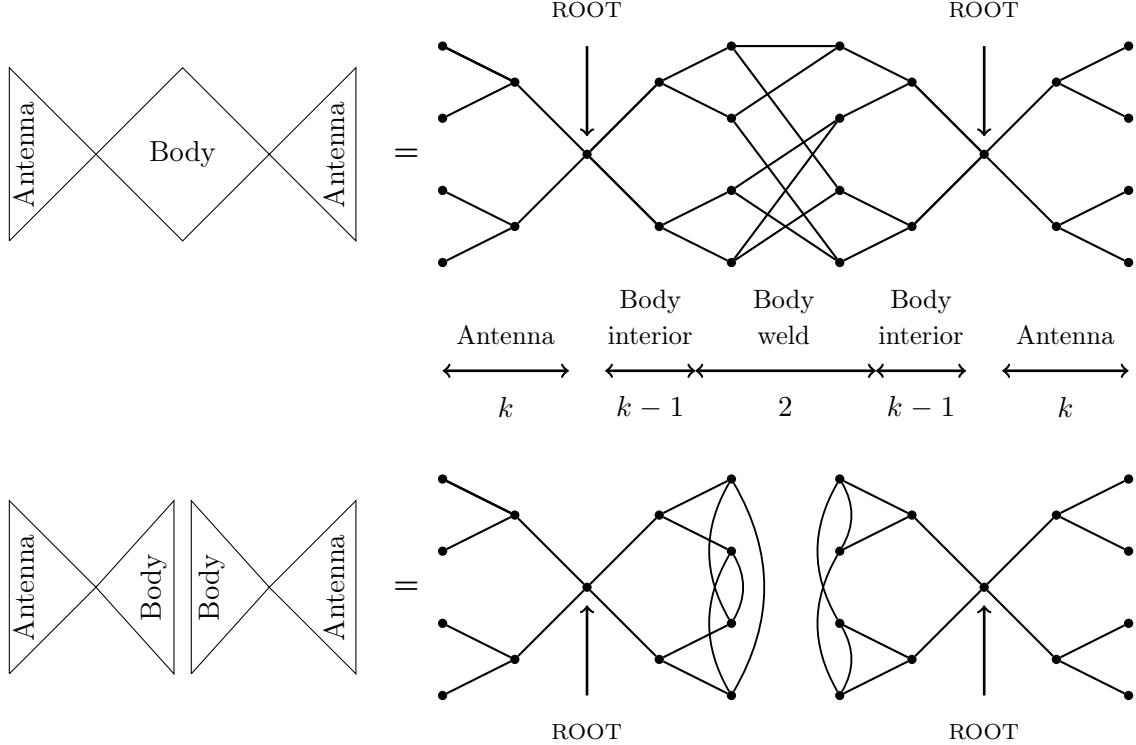

The graph property $\mathcal{P}_k$ that we want to test is the following. A graph has the property $\mathcal{P}_k$ if its single-edges form a graph that contains $2j(2^k-1)$ disconnected instances of ``candy'' graphs for some $j\in \mathbb{N}$, as shown in \fig{modified_glued_tree}. A candy graph is obtained from $4$ binary trees of depth $k$: two ``antenna'' trees $A_1,A_2$ and two ``body'' trees $B_1,B_2$. The candy graph is formed by first connecting the leaves of $B_1$ and $B_2$ by a collection\footnote{In the lower bound proof we assume that there is a single long cycle, but this is hard to test, so for the definition of the property we allow smaller cycles as well.} of alternating cycles containing all leaves of the body trees, and then merging the roots of $A_1-B_1$ and $A_2-B_2$. Additionally, in the full graph, roughly half of the candy (sub)graphs get a self-loop on exactly one of the roots, and the other candy graphs have either no self-loops or both roots get a self-loop.\footnote{Strictly speaking, a $2^{k}/(2^{k+1}-2)$ fraction of candy graphs should get exactly one self-loop and the remaining fraction should be split between getting two or zero self-loops, because there are $2\cdot 2^k$ weld vertices while only $2\cdot (2^k-2)$ interior vertices in each candy graph, cf.~the following main text. With a slight modification to the construction this could be balanced out, for example by replacing the roots by a path of length 5, and treating the middle vertex of this length-5 path as the new root.} There is also a double edge attached to every non-root vertex of the candy graphs, namely every body vertex is connected to a distinct antenna vertex by a double edge in the following way:
if the body vertex is a ``weld'' vertex (i.e., it was a leaf of a body tree), then it gets attached to an antenna vertex inside a candy graph that has exactly one root with self-loop, while every interior vertex (i.e., non-weld and non-root) gets attached to an antenna vertex in a candy graph where the parity of the number of self-loops is even.

\begin{figure}[ht]
\centering
\begin{tikzpicture}[circ/.style={shape=circle, inner sep=3pt, draw, node contents=}]
\SetGraphUnit{1.5}
\GraphInit[vstyle=Simple]
\tikzset{VertexStyle/.style = {shape = circle,fill = black,minimum size = 5pt,inner sep=1pt}}
\candy{(0,0)}{1}
\candy{(0,2.5)}{1}
\candy{(0,5)}{1}
\candy{(5,0)}{1}
\candy{(5,2.5)}{1}
\candy{(5,5)}{1}

\Vertex[x=5.5,y=2.5] {1};
\Vertex[x=2.5,y=2.5] {2};
\Vertex[x=3.5,y=5] {3};
\Vertex[x=6.5,y=2.5] {4};
\Vertex[x=7,y=2.5] {5};
\Vertex[x=8.5,y=5] {6};
\Vertex[x=0.5,y=0] {7};
\Vertex[x=2,y=2.5] {8};
\Vertex[x=5.5,y=0] {9};
\Vertex[x=7.5,y=0] {10};
\draw [-, double] (1) to [out=150,in=30] (2);
\draw [-, double] (3) to [out=-5,in=100] (4);
\draw [-, double] (5) to [out=80,in=-140] (6);
\draw [-, double] (7) to [out=80,in=-140] (8);
\draw [-, double] (9) to [out=45,in=135] (10);

\draw node () at (1, 5) [circ];
\draw node () at (1, 2.5) [circ];
\draw node () at (3, 0) [circ];
\draw node () at (3, 5) [circ];
\draw node () at (6, 0) [circ];
\draw node () at (6, 5) [circ];
\draw node () at (8, 0) [circ];

\end{tikzpicture}
\caption{Multiple candy (sub)graphs which together form a yes-instance graph with property $\mathcal{P}_k$. In the particular no instances we consider, the candy graphs are replaced by double-bow-tie graphs. Here, we only show six of them and only five of the many ``advice edges'' (indicated by double lines) that connect each body vertex to a distinct antenna vertex. The circles in the figure represent self-loops at the roots of the candy graphs, which provide advice about whether a body vertex is in the interior or the weld. An even parity of circles indicates interior, while an odd parity indicates weld.}
\label{fig:schematic_full_graph}
\end{figure}
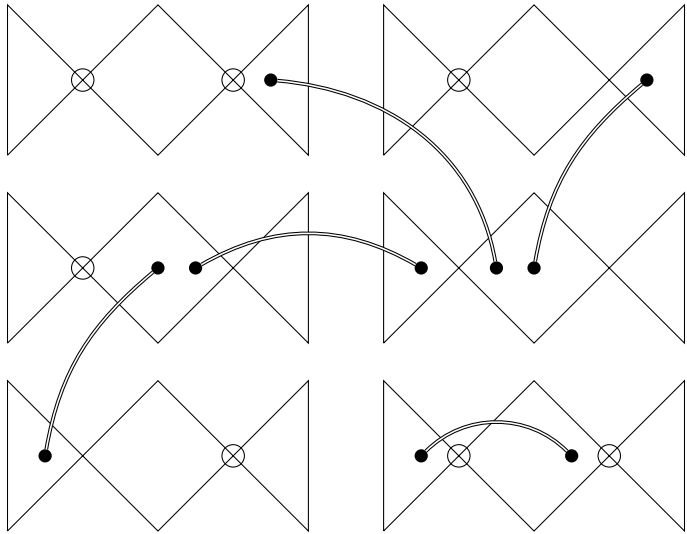

While we do not claim that testing the above graph property is particularly useful, we can prove that a quantum computer has an exponential advantage in testing this property. The intuition is the following: this graph property is exponentially difficult to test on a classical computer because the welded trees locally look like exponentially deep trees, unless the weld vertices can be distinguished. This is where quantum computers get their advantage: a quantum computer can ``read'' the advice hidden in the structure of the double edges very efficiently using the quantum walk algorithm of Childs et al.~\cite{CCD+03}. Once we can tell apart the weld vertices from the other vertices, testing the structure could be performed even on a classical computer.

We prove the following theorem in this section:

\begin{theorem}\label{thm:adjacency_list}
	In the adjacency list model, there exists a constant $\epsilon$ such that testing whether a bounded-degree graph has property $\mathcal{P}_k$ or is $\epsilon$-far from having it can be done by a quantum algorithm using $Q(\mathcal{P}_k) = \poly(k)$ queries, with constant success probability and perfect completeness. On the other hand, any classical randomized algorithm that can test the property with bounded two-sided error needs $R(\mathcal{P}_k) = \exp(\Omega(k))$ queries.
\end{theorem}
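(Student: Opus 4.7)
The plan is to prove the two complexity bounds separately, since they require different techniques.

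For the quantum upper bound, I would use the welded-trees quantum walk algorithm of Childs et al.~\cite{CCD+03} as the key subroutine. The tester samples $O(1)$ vertices uniformly at random and on each performs the following local-plus-global verification: first, inspect a small neighborhood to identify the role of the vertex (antenna leaf, antenna interior, antenna root, body root, body interior, or body weld); then run the quantum walk to traverse the body from one root $r$ to the paired root $r'$ in $\poly(k)$ queries. Having a handle on both endpoints lets us verify that (a) each antenna is a complete depth-$k$ binary tree, (b) the two body halves are depth-$k$ binary trees whose leaves form alternating cycles, and (c) the parity of self-loops on $\{r,r'\}$ matches the declared role (weld vs.\ interior) of the starting vertex via its advice edge. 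Any constant-fraction defect against $\mathcal{P}_k$ is then detected with constant probability by random sampling, yielding $Q(\mathcal{P}_k)=\poly(k)$ queries and perfect completeness.

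For the classical lower bound, I would consider two distributions: $D_{\text{yes}}$, a uniformly random yes instance (with the prescribed self-loop and advice structure), and $D_{\text{no}}$, obtained by replacing each candy with a double-bow-tie (two self-welded depth-$k$ trees, each closed off by a single random alternating cycle on its own leaves) and assigning the self-loops and advice edges arbitrarily. I would then establish two things. First, $D_{\text{no}}$ is supported on graphs that are $\Omega(1)$-far from $\mathcal{P}_k$: every yes instance is bipartite (a welded-trees body is two binary trees joined by an alternating cycle on the leaves), whereas a random alternating cycle on the leaves of a \emph{single} depth-$k$ binary tree produces odd cycles with overwhelming probability, so at least a constant fraction of edges must be altered to restore bipartiteness.

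Second, I would show classical indistinguishability between $D_{\text{yes}}$ and $D_{\text{no}}$ below $2^{\Omega(k)}$ queries by adapting the classical lower bound of \cite{CCD+03}. Without loss of generality the algorithm explores by following edges from previously seen vertices, and I would couple its view under the two distributions via a deferred-decisions argument: as long as the algorithm has not discovered a cycle in any body tree, its local view is a depth-$k$ rooted binary tree with antennas/advice edges leading to still-unrevealed candies, and this view is identically distributed under $D_{\text{yes}}$ and $D_{\text{no}}$. By the standard random-walk argument of \cite{CCD+03}, discovering any such cycle requires $\exp(\Omega(k))$ queries. Combined with the $\Omega(1)$-farness, Yao's principle then gives $R(\mathcal{P}_k)=\exp(\Omega(k))$.

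The main obstacle will be the coupling step: one must confirm that the advice edges cannot act as shortcuts revealing global structure. The key point to make rigorous is that each advice edge only points into a single antenna leaf, so reading the parity it encodes still requires walking up a full depth-$k$ antenna to reach a root-pair, and the distribution of what one sees along the way agrees between $D_{\text{yes}}$ and $D_{\text{no}}$ until a cycle in a body is closed. Formalizing this via a careful query-by-query coupling on the joint exploration process, while tracking the status of each candy (whether any of its body vertices has been visited, and whether its advice pointer has been traversed), is the technical heart of the lower bound.
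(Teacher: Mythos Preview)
Your overall architecture matches the paper: quantum walk of \cite{CCD+03} for the upper bound, and a Yao-style indistinguishability argument between random candy graphs and random double-bow-tie graphs (plus farness via non-bipartiteness) for the lower bound.

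For the quantum upper bound, however, there is a genuine gap in your decomposition. You write that you first ``inspect a small neighborhood to identify the role of the vertex \dots\ body interior, or body weld'' and then run the quantum walk. But a body interior vertex and a body weld vertex are locally indistinguishable: both have three single-edge body neighbors and one double advice edge. The paper resolves this by inverting your order: the quantum walk is used precisely to \emph{compute} the weld/interior marking (follow the advice edge to an antenna vertex, climb the antenna to a root $r$, quantum-walk through the body to the paired root $r'$, read the self-loop parity), and only \emph{then}, with the marking in hand, is the structure tested. More importantly, your items (a) and (b) are where the real work is: once weld vertices are marked, one deletes all edges between marked vertices, reducing each candy body to two disjoint depth-$k$ binary trees, and then runs a non-trivial classical tester (the paper's \textsc{FindParent}/\textsc{FindRootPath} plus consistency, weld-consistency, completeness, and advice tests). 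Without the marking you cannot separate the body into two trees, and ``having a handle on both endpoints'' does not by itself let you verify that the interior is a pair of binary trees. The paper's clean modularization---quantum computes advice, classical tests with advice---is the idea you are missing.

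For the classical lower bound, your plan is essentially the paper's. Two refinements are worth noting. First, you say ``each advice edge only points into a single antenna leaf, so reading the parity it encodes still requires walking up a full depth-$k$ antenna.'' Climbing the antenna to one root is actually easy classically; the obstruction is that to read the parity you must also reach the \emph{paired} root, which requires traversing the body welded tree, and that is where the $\exp(\Omega(k))$ hardness of \cite{CCD+03} kicks in. Second, you cannot assume without loss of generality that the algorithm only follows edges: it may query fresh labels. The paper handles this by showing that with $2^k-1$ candies present, a fresh query lands in an already-tapped candy with probability $O(t/2^k)$, so after $t$ queries this contributes $O(t^2/2^k)$ to the distinguishing advantage; you should include this ``untapped'' argument explicitly in your coupling.
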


\begin{proof}[Proof outline]
	First we prove in \cref{subsec:effAdviceTest} that the property $\mathcal{P}_k$ can be tested with perfect completeness and constant success probability with $\poly(k)$ queries, given the proper advice. Then in \cref{subsec:MarkWeld} we show that for graphs having property $\mathcal{P}_k$ the advice can be computed on a quantum computer with probability $1$. This completes the proof of our efficient quantum tester.

	Finally, in \cref{subsec:HardToDistinguish} we show that there are two particular distributions of yes and no instances that are exponentially difficult to distinguish on a classical computer, where the considered no instances are constant-far from having property $\mathcal{P}_k$ with high probability. This rules out any sub-exponential-time classical property tester of $\mathcal{P}_k$ that succeeds with constant probability.
\end{proof}

\subsection{Efficient classical testability with advice} \label{subsec:effAdviceTest}

In this subsection, we describe how to test the property $\mathcal{P}_k$ efficiently with advice. First, we rigorously define this notion, which has a similar flavor to the complexity classes $\mathsf{NP}$ and $\mathsf{MA}$. Then, we proceed by showing an efficient tester with advice.

The required advice is precisely what our quantum algorithm can provide as per the above subsection: marking the weld vertices.
Formally, the advice is a bit associated with every vertex, and for yes instances the advice is supposed to mark weld vertices.

There are two major parts of our analysis. First, we show that the test always accepts yes instances if the advice is correctly provided. Then we show that if the test accepts a graph with high probability, then the graph must actually be close to a yes instance. The latter implies that no instances that are far from complying with the property must actually be rejected with high probability.

\subsubsection{Property testing with advice}

We say that a property $\mathcal{P}$ can be tested with advice\footnote{Note that this terminology differs from the standard use of ``advice'' in complexity theory, which typically refers to a string that can only depend on the input length.} in $Q$ queries if there is a tester $\mathcal{T}$ that is
\begin{itemize}[labelwidth=5.5em,leftmargin=!]
	\item[Complete.] For every instance $x\in \mathcal{P}$ there is a witness $w\in\{0,1\}^*$, such that $\mathcal{T}$ accepts $(x,w)$ with probability at least $2/3$ and makes at most $Q$ queries to $x$ and $w$.

	\item[Sound.] For every instance $x$ that is at least $\eps$-far from $\mathcal{P}$, and for any witness $w\in\{0,1\}^*$, $\mathcal{T}$ accepts $(x,w)$ with probability at most $1/3$.
\end{itemize}
Additionally, if for every yes instance there is a witness that makes the tester accept with certainty, we say that the tester has perfect completeness.

\subsubsection{Testing the binary trees, the weld, and the advice}

In this subsection, we describe our classical tester using the advice. We build increasingly more complex tests, such that passing them enforces more and more structure on the graph.
Our final tester ensures that any graph passing the test with high probability must be $\eps$-close to a yes instance. The query and time complexity of our final tester is $\poly(k/\eps)$.

As a first step we test the binary tree structures that are supposed to remain after removing every double edge, self-loop, and edge between marked (therefore supposedly weld) vertices.
Therefore, while testing the binary tree structures---i.e., in \alg{FindParent}, \alg{FindRootPath}, and the consistency check below---we ignore those edges (whenever we say that we ignore some edges we also exclude them from degree counts).
Also, if we encounter any vertex that has more than $4$ single-edge neighbors (excluding self-loops), then we immediately reject the graph, so we can also assume without loss of generality that the modified graph has degree at most $4$ in \alg{FindParent} and \alg{FindRootPath}.
Similarly, if \alg{FindParent} or \alg{FindRootPath} returns REJECT, we immediately reject the instance.

The basic primitive of our tester is a binary-tree tester described in \alg{FindParent}, based on non-backtracking walks.
The main idea of the tester is that for any non-root vertex $v$ at level $\ell$ (i.e., distance $\ell$ from the root) in a binary tree of depth $k$, any non-backtracking walk starting towards the parent of $v$ can continue for at least $k-\ell+2$ steps, while any other non-backtracking walk must terminate after $k-\ell$ steps. \linebreak[2]
Based on this observation we can find the parent of $v$ efficiently. \alg{FindParent} summarizes this procedure for finding the parent vertex, and adds some consistency checks that are helpful for catching potential corruptions of the structure.

\alg{FindRootPath} recursively applies \alg{FindParent} to find a path to the root.
Although this recursive structure may not be the most efficient way of finding such a root-path, it allows us to argue that a root-path found for a vertex should match the root-path of its parent, even in slightly corrupted trees.

\begin{algorithm}[ht]
	\caption{FindParent(v)}\label{alg:FindParent}
	\begin{algorithmic}[1]
		\STATEx {\bf input:} vertex $v$ (in a graph with maximum degree at most $5$)
		\STATEx {\bf output:} parent vertex $u$ (if the graph is a binary tree of depth $k$)
		\STATE {\bf if} $\deg(v)=4$ {\bf then return} $\mathsf{ROOT}$ \hfill($\star$ in a binary tree only the root has degree 4 $\star$)
		\STATE {\bf if} $\deg(v)\notin\{1,3,4\}$ {\bf then} $\mathsf{REJECT}$ \label{step:BannedDegrees} \hfill($\star$ every vertex is expected to have $1$, $3$, or $4$ neighbors $\star$)
		\STATE $r\leftarrow \emptyset$ \hfill($\star$ stores a root candidate that is any degree-$4$ vertex encountered $\star$)
		\STATE {\bf for each} $w \in \mathrm{adj}(v)$ {\bf do} \hfill($\star$ if we get here we know that the degree of $v$ is 1 or 3 $\star$)
		\STATE ~~~ $p_0^{(w)}\leftarrow v$ \hfill($\star$ the starting vertex of the non-backtracking walk $\star$)
		\STATE ~~~ $p_1^{(w)}\leftarrow w$ \hfill($\star$ the first vertex on the path in the direction towards $w$ $\star$)
		\STATE ~~~ $\ell^{(w)}\leftarrow 2k$ \hfill($\star$ upper bound on the length of a non-backtracking walk $\star$)
		\STATE ~~~ {\bf for each} $t=1,\ldots, 2k-1$ {\bf do} \hfill($\star$ the steps of the non-backtracking walk $\star$)
		\STATE ~~~~~~ {\bf if} $\deg(p_t^{(w)})=1$ {\bf then} \hfill($\star$ a leaf is found $\star$)
		\STATE ~~~~~~~~~ $\ell^{(w)}\leftarrow t$ \hfill($\star$ store the length of the path $\star$)
		\STATE ~~~~~~~~~ {\bf quit loop and goto line} \ref{line:endfor2} \hfill($\star$ quit the loop and end the walk $\star$)
		\STATE ~~~~~~ {\bf if} $\deg(p_t^{(w)})=4$ {\bf then} \hfill($\star$ a root candidate is found $\star$)
		\STATE ~~~~~~~~~ {\bf if} $r=\emptyset$ {\bf then} $r\leftarrow p_t^{(w)}$ \hfill($\star$ update root candidate $\star$)
		\STATE ~~~~~~~~~ {\bf else} $\mathsf{REJECT}$ \hfill($\star$ there is a unique root $\star$)
		\STATE ~~~~~~ $p_{t+1}^{(w)}\leftarrow$ a uniformly random vertex in $\mathrm{adj}(p_{t}^{(w)})\setminus\{p_{t-1}^{(w)}\}$ \hfill($\star$ make a random step $\star$)
		\STATE ~~~ {\bf end for}
		\STATE ~~~ {\bf if} $\deg(p_{\ell^{(w)}}^{(w)})>1$ {\bf then} $\mathsf{REJECT}$ \label{line:endfor2} \hfill($\star$ verify that a leaf is found $\star$)
		\STATE {\bf end for}
		\STATE $u\leftarrow \argmax_{w}(\ell^{(w)})$ \hfill($\star$ the longest path must go through the parent $\star$)
		\STATE {\bf if} $\exists w_1, w_2\in \mathrm{adj}(v)\setminus\{u\}$ such that $\ell^{(w_1)}\neq \ell^{(w_2)} $ {\bf then} $\mathsf{REJECT}$ \label{step:LengthCheck} \hfill($\star$ consistency check $\star$)
		\STATE {\bf if} $r\neq  \emptyset \wedge r\notin \{p_{1}^{(u)},p_{2}^{(u)},\ldots, p_{\ell^{u}}^{(u)}\} $ {\bf then} $\mathsf{REJECT}$ \label{step:NoRootSkip} \hfill($\star$ consistency check $\star$)
		\STATE {\bf return} $u$
	\end{algorithmic}
\end{algorithm}

\begin{algorithm}[ht]
	\caption{FindRootPath(v)}\label{alg:FindRootPath}
	\begin{algorithmic}[1]
		\STATEx {\bf input:} vertex $v$ (in a graph with maximum degree at most $5$)
		\STATEx{\bf output:} a path to the root (if the graph is a binary tree of depth $k$)
		\STATE $p_0\leftarrow v$ \hfill($\star$ $v$ is the first vertex on the path to the root $\star$)
		\STATE $\ell \leftarrow k$ \hfill($\star$ the path to the root cannot be longer than $k$ $\star$)
		\STATE{\bf for each} $t=1,\ldots, k$ {\bf do} \hfill($\star$ find a path to the root step-by-step $\star$)
		\STATE ~~~ $p_{t}\leftarrow \mathrm{FindParent}(p_{t-1})$ \hfill($\star$ find the parent of the current vertex $\star$)
		\STATE ~~~ {\bf if} $p_{t}=\mathsf{ROOT}$ {\bf then} \hfill($\star$ the root is found $\star$)
		\STATE ~~~~~~~~~ $\ell\leftarrow t$  \hfill($\star$ store the length of the path $\star$)
		\STATE ~~~~~~~~~ {\bf quit loop and goto} \ref{line:endfor2x} \hfill($\star$ store the length of the path and stop $\star$)
		\STATE {\bf end for}
		\STATE {\bf if} $\deg(p_{\ell})\neq 4$ {\bf then} $\mathsf{REJECT}$ \label{line:endfor2x} \hfill($\star$ verify that a root is found $\star$)
	    \STATE {\bf return}	$(p_0,p_1,\ldots,p_\ell)$
	\end{algorithmic}
\end{algorithm}

\paragraph{Consistency test.} During this test we ignore every double edge and self-loop, and also any edge between marked vertices.
Given vertex $v$, reject if $v$ is marked and has degree greater than $1$. Run \cref{alg:FindRootPath} to find a path to the root (degree-4 vertex).
If the root-path is longer than $k$, or any vertex (other than $v$) is marked on the root-path, then reject.
Also reject if $v$ is marked and the path is shorter than $k$.
Repeat this procedure $100$ times and reject unless always the same path is found.
If this consistency check or any individual run fails, then reject, otherwise accept.

\begin{definition}
	A vertex $v$ is \emph{consistent} if it passes the consistency test with probability at least $1/2$.
	For a consistent vertex, there is a \emph{consistent root} and a \emph{consistent root-path} to the consistent root that \alg{FindRootPath} finds with probability at least $0.99$.
\end{definition}

\begin{lemma}
	For any edge $e$ between two neighboring consistent vertices $u$ and $v$, one of the consistent root-paths, say $(u,p_1,\ldots,p_\ell)$, must be a concatenation of the edge $e$ and the other consistent root-path, i.e., $p_1=v$.
\end{lemma}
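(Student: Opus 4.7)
The plan is to argue by contradiction, assuming that both $p_1^v \neq u$ and $p_1^u \neq v$. The main case to handle is when both $u$ and $v$ have degree $3$; the degree-$1$ case is trivial (the unique neighbor must be $p_1$), and the degree-$4$ case (consistent root) is handled analogously by showing that walks from a degree-$4$ vertex into its subtrees have typical length differing from walks downward from an internal vertex, so the length-equality check in step~\ref{step:LengthCheck} of \alg{FindParent} would fail for $v$ unless $p_1^v = u$. For the main case, let $c_u$ and $c_v$ denote the third neighbors of $u$ and $v$, so that $\mathrm{adj}(u) = \{v, p_1^u, c_u\}$ and $\mathrm{adj}(v) = \{u, p_1^v, c_v\}$.

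Since $v$ is consistent, \alg{FindRootPath} on $v$ outputs the consistent root-path with probability $\ge 0.99$, so by a union bound over the $\le k$ internal calls to \alg{FindParent}, the initial call $\mathrm{FindParent}(v)$ returns $p_1^v$ with probability $\ge 1 - \delta$ for some small $\delta = O(1/k)$. In a successful execution, step~\ref{step:LengthCheck} forces the two independent non-backtracking walk lengths $W(v,u)$ and $W(v,c_v)$ to be equal. A short anti-concentration argument --- if two independent random variables agree with probability $\ge 1 - \delta$, then each must be concentrated on a common value with probability $\ge 1 - O(\delta)$, via Cauchy--Schwarz applied to the collision probability $\sum_c p_c q_c$ --- implies that $W(v,u)$ and $W(v,c_v)$ each concentrate on a common value $L_v$ with probability $\ge 1 - O(\delta)$. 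Applying the same reasoning to $u$, the walks $W(u,v)$ and $W(u,c_u)$ concentrate on some value $L_u$, while $W(u,p_1^u) > L_u$ with probability $\ge 1 - O(\delta)$.

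The contradiction comes from decomposing $W(v,u) = 1 + W(u,w_1)$, where $w_1$ is chosen uniformly from $\mathrm{adj}(u)\setminus\{v\} = \{p_1^u, c_u\}$ and the continuation is a fresh non-backtracking walk from $u$. Conditioning on $w_1 = c_u$ (probability $1/2$) gives $W(v,u) = 1 + L_u$ with probability $\ge 1 - O(\delta)$, while conditioning on $w_1 = p_1^u$ (probability $1/2$) gives $W(v,u) > 1 + L_u$ with probability $\ge 1 - O(\delta)$. Thus $W(v,u)$ takes the value $1 + L_u$ with probability at most $\tfrac{1}{2} + O(\delta)$ and strictly exceeds $1 + L_u$ with probability at least $\tfrac{1}{2} - O(\delta)$. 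This bimodal behavior contradicts the concentration of $W(v,u)$ on a single value $L_v$ with probability $\ge 1 - O(\delta)$, provided $\delta$ is small enough, which it is since $\delta = O(1/k)$.

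The main obstacle I anticipate is making the anti-concentration step fully rigorous, since the walks' distributions are not assumed deterministic a priori; this reduces to the elementary fact that for independent random variables $X_1,X_2$, the collision probability $\Pr[X_1 = X_2] = \sum_c \Pr[X_1=c]\Pr[X_2=c]$ upper bounds the maximum point-mass of each. Once that lemma is in hand, the bimodal structure of $W(v,u)$ yields the contradiction immediately, and the same template handles the degree-$4$ boundary cases --- the role of $u$'s parent direction is replaced by the observation that walks from $u$ as a root into its subtrees have concentrated length which, by step~\ref{step:LengthCheck} applied at $v$, cannot match $W(v,c_v)$ unless $v$ is itself a depth-$1$ child of $u$, forcing $p_1^v = u$.
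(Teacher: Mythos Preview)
Your outline matches the paper's --- split by degree, and in the main degree-$3$ case use the walk-length concentration forced by step~\ref{step:LengthCheck} to derive a contradiction --- but two points need fixing.

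In the degree-$3$ case, your bimodality argument requires $W(u,p_1^u) > L_u$ \emph{strictly} with high probability, yet the argmax in FindParent only guarantees $W(u,p_1^u) \ge L_u$; equality is possible depending on how ties are broken, so this step is not justified as stated. The paper avoids the issue by using only the $c_u$ branch together with symmetry: since the walk $W(v,u)$ continues from $u$ toward $c_u$ with probability at least $1/3$, one gets $W(v,u) = 1+L_u$ with probability well above $0.3$, and combined with $W(v,u) = L_v$ with probability $\ge 0.98$ this forces $L_v = 1+L_u > L_u$; swapping the roles of $u$ and $v$ then gives $L_u > L_v$, a contradiction. (Separately, your claim that $\delta = O(1/k)$ via a ``union bound'' is backwards --- you only get $\delta \le 0.01$ from the fact that FindRootPath succeeds with probability $\ge 0.99$ --- but a constant $\delta$ already suffices for either version of the argument.)

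For the degree-$4$ case, your sketch via step~\ref{step:LengthCheck} presupposes enough tree structure to compare walk lengths, which is exactly what the lemma is meant to establish. The paper instead invokes step~\ref{step:NoRootSkip}: if (say) $v$ has degree $4$, then in FindParent$(u)$ the walk toward $v$ encounters $v$ and sets the root candidate $r \leftarrow v$; since any second encounter with a degree-$4$ vertex causes rejection, and step~\ref{step:NoRootSkip} requires $r$ to lie on the walk toward the returned parent, the returned parent must be $v$ itself.
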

\begin{proof}
	First note that due to \cref{step:BannedDegrees}, any consistent vertex $u,v$ must have degree $1$, $3$, or $4$.
	If $v$ has degree $4$, then for any neighbor $u$ the consistent root-path must be $(u,v)$ due to \cref{step:NoRootSkip}.
	Also note that the statement is trivial if either $u$ or $v$ has degree $1$.

	Finally, we treat the case when both $u$ and $v$ have degree $3$.
	Towards a contradiction let us assume that the statement of the lemma does not hold.
	Let $v_c\neq u$ denote the neighbor of $v$ which is not its immediate parent on the consistent root-path, and similarly let $u_c\neq v$ denote the neighbor of $u$ which is not its immediate parent on the consistent root-path.
	Due to \cref{step:LengthCheck}, a non-backtracking walk from $u$ in the direction $u_c$ has typically (with probability at least $0.99$) the same length as a non-backtracking walk in the direction $v$.
	This also means that there is a fixed length that has high probability ($> 0.98$), which we denote by $\ell_u$.
	We denote by $\ell_u$ the analogous typical length starting from $u$ towards either $u_c$ or $v$.
	When we start a non-backtracking walk in the direction of $u$ we continue with a non-backtracking walk towards $u_c$ with probability at least $1/3$.
	This implies that $\ell_v>\ell_u$.
	Since the situation is symmetric we can also deduce that $\ell_v<\ell_u$ which is a contradiction.
\end{proof}

It is easy to see that for any consistent vertex $v$ all vertices on its consistent root-path are also consistent vertices. This is due to the recursive structure of \alg{FindRootPath}.

Consequently we can conclude that the graph $G_C$ induced by consistent vertices is a (sub-)binary forest. More precisely, every connected component can be injectively mapped into a graph obtained by merging the roots of two binary trees of depth $k$.
The mapping is defined via the root-paths, and also ensures that marked consistent vertices get mapped to leaves.
Moreover, for any $\delta>0$, if the consistency test is passed with probability $1-\delta$ for a uniformly random vertex $v\in V$, then there are at least $(1-2\delta)|V|$ consistent vertices, since by definition non-consistent vertices fail to pass the consistency check with probability at least $1/2$.

\paragraph{Weld-consistency test.} During this test we ignore every double edge and self-loop (additionally, when we call the consistency test as a subroutine, we also ignore edges between marked vertices).
Run the consistency test on $v$ and denote by $r$ the root found.
If the length of the root-path is less than $k$ (i.e., $v$ is not a ``leaf''), then accept.
Let $e$ denote the last edge of the root-path from $v$ to $r$.
\begin{itemize}
	\item If $v$ is not marked, then go to the root $r$ and perform $100$ non-backtracking random walks of length $k$ along $e$.
	If any walk terminates before $k$ steps or any marked vertex is found, then reject, otherwise accept.
	\item If $v$ is marked, then reject unless exactly $2$ of its neighbors are also marked.
	Run the consistency test on both marked neighbors, and denote the roots $r'$ and $r''$. Reject if $r'\neq r''$ or $r=r'$.
	Perform a non-backtracking random walk from $r$ along all 4 edges of length $k$; if any walk stops before $k$ steps or not exactly $2$ out of the $4$ end-points is marked, then reject.
	Also reject if the walk along $e$ does not end in a marked vertex. Reject unless both marked end-points have exactly two marked neighbors; traverse to a random marked neighbor of both marked end-points, run a consistency check, and report its root.
	If the consistency check fails or either of the found roots does not match $r'$, then reject.
	Repeat the process of this paragraph $100$ times.
\end{itemize}

\begin{definition}
	A vertex $v$ is \emph{weld-consistent} if it passes the above test with probability at least $1/2$.
\end{definition}
We already established that consistent vertices form a (sub-)binary forest, where some leaves at level $k$ may be marked.
The weld-consistency test might remove some of the marked vertices, but ensures that the (sub-)binary trees are paired up via marked-marked vertex edges.

\begin{lemma}\label{lem:weld-pairs}
	All weld-consistent marked vertices in a (sub-)binary tree $t_1$ are connected to at most 2 consistent marked vertices in another (sub-)binary tree $t_2$ of consistent vertices.
\end{lemma}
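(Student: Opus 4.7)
The plan is to fix a single weld-consistent marked vertex $v\in t_1$ with consistent root $r$, and trace carefully through what the weld-consistency test forces on its neighborhood once we know that $v$ passes with probability at least $1/2$. First I would invoke the immediate rejection clause: unless $v$ has exactly two marked neighbors, every trial of the test rejects, so any weld-consistent marked $v$ has exactly two marked neighbors $u_1,u_2$. This already yields the ``at most $2$'' half of the claim.

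Next I would show that $u_1,u_2$ are themselves consistent and share a consistent root distinct from $r$. Each of the $100$ trials of the weld-consistency test invokes the full consistency test on both $u_1$ and $u_2$; a non-consistent vertex fails its consistency test with probability strictly greater than $1/2$, so if either $u_i$ were not consistent, a Chernoff bound over the $100$ independent trials would force the overall weld-consistency test to reject with probability arbitrarily close to $1$, contradicting weld-consistency of $v$. Similarly, every accepting trial enforces $r'=r''$ and $r\neq r'$, where $r',r''$ are the roots returned by \alg{FindRootPath} for $u_1,u_2$ in that trial; since $v,u_1,u_2$ are all consistent, each call to \alg{FindRootPath} returns the canonical consistent root with probability at least $0.99$, so a union bound across the three vertices and the trials forces the canonical consistent roots of $u_1$ and $u_2$ to coincide and to differ from $r$.

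Because the sub-binary tree containing a consistent vertex is determined (via the injective mapping into bowties of two merged depth-$k$ binary trees described immediately after the consistency test) by its consistent root-path, and in particular by its consistent root, $u_1$ and $u_2$ both lie in a single other sub-binary tree $t_2\neq t_1$, completing the claim. The main obstacle I anticipate is not conceptual but book-keeping: the Chernoff and union-bound arithmetic has to thread the $1/2$-thresholds in the definitions of ``consistent'' and ``weld-consistent'' through the $0.99$-reliability of \alg{FindRootPath} on consistent vertices, across $100$ repetitions, without the slack collapsing. If the slack turns out to be too tight, one can simply increase the number of repetitions in either definition without affecting the rest of the argument.
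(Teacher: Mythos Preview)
Your argument establishes only a per-vertex statement: for each weld-consistent marked $v\in t_1$, its two marked neighbors are consistent and lie in a common tree $t_2(v)\neq t_1$. But the lemma, as the surrounding text makes explicit (``ensures that the (sub-)binary trees are paired up''), is a \emph{global} statement: a single $t_2$ must work simultaneously for all weld-consistent marked vertices in $t_1$. Your proof gives no reason why $t_2(v)=t_2(v')$ for two different weld-consistent marked $v,v'\in t_1$; the direct consistency check on $u_1,u_2$ that you exploit says nothing linking the neighbor-roots arising from different $v$'s sharing the same root $r$.

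The paper closes exactly this gap using the second half of the weld-consistency test, which you did not touch: the length-$k$ non-backtracking walks from $r$ along all four edges, followed by traversing a marked neighbor of each marked endpoint and checking that the resulting root matches $r'$. Because these walks start from $r$ and do not depend on $v$, the root they typically find is a function of $r$ alone; the paper calls it the \emph{consistent-root-pair} of $r$. The test then forces the $r'$ obtained from the direct check on $u_1,u_2$ to coincide with this consistent-root-pair. Hence $r'$ depends only on $r$, which is what yields a uniform $t_2$ for all weld-consistent marked $v\in t_1$. Without invoking this part of the test, the pairing of trees does not follow.

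A minor side remark: you do not need Chernoff to argue that a non-consistent $u_i$ forces rejection. The test rejects if \emph{any} of the $100$ trials rejects, so the acceptance probability is at most $(1/2)^{100}$ directly.
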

\begin{proof}
Indeed, for any weld-consistent vertex $v$, its consistent root $r$ has the property that a non-backtracking walk of length $k$ along the $4$ outgoing edges after traversing the $2$ marked leaves finds the same root $r'$ with high probability.
We call $r'$ the \emph{consistent-root-pair} of $r$. It is easy to see that all consistent marked neighbors of $v$ must have the same consistent root as the consistent-root-pair of $r$.
\end{proof}

By \cref{lem:weld-pairs} we have established that weld-consistent vertices form pairs of (sub-)binary trees, that are connected by edges between marked vertices at level $k$, with each marked vertex having at most $2$ marked neighbors. Moreover, these (sub-)binary candy graphs are isomorphic to an induced subgraph of an actual candy graph. Also at any root, at most two of the branches contain marked leaves. We call such a graph a (sub-)binary candy ensemble.

Similarly as before, for any $\delta>0$, if the weld-consistency test is passed with probability $1-\delta$ for a uniformly random vertex $v\in V$, then there are at least $(1-2\delta)|V|$ weld-consistent vertices. This way we can ensure that there are many weld-consistent vertices and that they form a (sub-)binary candy ensemble. By the following test we also ensure that a typical connected component is close to being a complete candy graph.

\paragraph{Completeness test.} During this test we still ignore every double edge and self-loop.
Given vertex $v$, run the weld-consistency test, and then go to the root $r$.
From $r$, along each edge perform a non-backtracking walk. Reject if any walk fails to find a leaf (degree-1 vertex) or a marked vertex (with 2 marked neighbors) after exactly $k$ steps, or any non-degree-$3$ vertex is encountered other than a leaf.
Also reject if there are not exactly $2$ marked vertices among the $4$ reached end-points.
Traverse to a random marked neighbor of a randomly chosen marked end-point, find its root-path, and let $r'$ denote the root that is found.
Also run the weld-consistency check for every vertex on every sampled path and reject if any of the vertices fail the weld-consistency check.
Go to $r'$ and run the same test, except for traversing a marked-marked edge at the end.
Repeat the whole process described in this paragraph $\Theta(k/\eps)$ times.

\begin{lemma}
	If $v$ is a weld-consistent vertex within a (sub-)binary candy graph induced by fewer than $(1-\eps) (2^{k+3}-6)$ weld-consistent vertices, then the completeness test rejects with high probability.
\end{lemma}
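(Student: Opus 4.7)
The plan is to embed the component of $v$ in $C$ as an induced subgraph of a complete candy graph $C^*$ (possible by \cref{lem:weld-pairs} and the surrounding discussion) and argue that the many vertices in $V(C^*)\setminus V(C)$ force each iteration of the inner loop to reject with probability $\Omega(\eps)$. Once this per-iteration bound is in hand, the $\Theta(k/\eps)$ independent repetitions give an overall failure probability at most $(1-\Omega(\eps))^{\Theta(k/\eps)}\leq e^{-\Omega(k)}$, as required.

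Let $M:=V(C^*)\setminus V(C)$; by hypothesis $|M|\geq \eps(2^{k+3}-6)$. For each $m\in M$, let $a(m)\in V(C)$ be the nearest ancestor of $m$ in $C^*$ that still lies in $V(C)$. Then $a(m)$ is \emph{structurally deficient} in $C$: passing from $C^*$ to $C$ deleted one of its children, so its degree in the reduced graph (self-loops, double edges, and marked--marked edges removed) is strictly less than $3$. Any visit to such an $a(m)$ by a walk of the completeness test triggers the ``non-degree-$3$ vertex'' rejection (or an earlier rejection if both children are missing and $a(m)$ appears as a leaf before step $k$). Letting $b_j$ denote the total number of missing child-slots at depth-$j$ vertices of $C$, each slot sits above a $C^*$-subtree of at most $2^{k-j}-1$ vertices, hence
\[
\eps(2^{k+3}-6) \;\leq\; |M| \;\leq\; \sum_{j} b_j\,(2^{k-j}-1) \;\leq\; 2^k\sum_{j} b_j/2^j,
\]
which gives $\sum_j b_j/2^j \geq 4\eps$ for sufficiently large $k$ (small $k$ is absorbed into constants).

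One inner-loop iteration starts at the root $r$ of $C$ and runs four non-backtracking random walks of length $k$, one along each edge at $r$. Within its starting subtree a walk is a uniformly random root-to-leaf path and therefore visits any fixed depth-$j$ vertex with probability $1/2^{j-1}$; in particular, for the four walks the expected number $X$ of structurally deficient visits satisfies $\E[X]=\sum_j b_j/2^{j-1}\geq 8\eps$. The main obstacle will be converting this first-moment bound into a lower bound of $\Omega(\eps)$ on $\Pr[X\geq 1]$. The plan is a standard shallow/deep split: fix $J=\lceil\log_2(16/\eps)\rceil$, and either (i) the shallow part $\sum_{j\leq J}b_j/2^j$ is at least $2\eps$, in which case some single deficient vertex at depth $\leq J$ is hit with probability $2^{-(j-1)}=\Omega(\eps)$; or (ii) the deep part $\sum_{j>J}b_j/2^j$ exceeds $2\eps$, in which case there are many deep deficient vertices, each visited with probability at most $2^{-J}\leq \eps/16$, and a Paley--Zygmund second-moment argument on $X$---using independence of the four walks across the disjoint subtrees of $r$ and the fact that within a subtree the walk visits exactly one vertex per depth---gives $\Pr[X\geq 1]\geq \E[X]^2/\E[X^2]=\Omega(\eps)$.

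A minor wrinkle is that the above uses only walks from $r$, whereas the test also walks from the paired root $r'$; this is harmless, since by pigeonhole at least half of $M$ lies on one side and the same analysis applies there. Edge cases where the root of $C^*$ itself is missing, or where the component of $v$ has depth strictly less than $k$, are handled separately: in both cases the root-finding/weld-consistency subroutine or the ``walks of length exactly $k$'' check fails deterministically, so the completeness test rejects outright.
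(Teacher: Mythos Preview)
The paper's proof is much shorter than yours: it simply observes that since an $\eps$-fraction of the candy's vertices are missing, pigeonhole over the $O(k)$ levels gives one level on which an $\Omega(\eps/k)$-fraction is missing; a single non-backtracking walk from the root samples a uniformly random vertex at that level, so each of the $\Theta(k/\eps)$ runs rejects with probability $\Omega(\eps/k)$, which is enough for ``high probability''.

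Your stronger target of $\Omega(\eps)$ per iteration is in fact correct, but the Paley--Zygmund step in your case~(ii) is not justified by what you wrote. Independence across the four subtrees and ``one vertex per depth'' do not by themselves control $\E[X^2]$: ancestor--descendant pairs of deficient vertices contribute $2\sum_{a'}d(a')\Pr[I_{a'}]$, where $d(a')$ is the number of deficient ancestors of $a'$, and nothing in your ``deep'' hypothesis bounds $d(a')$ better than $k$, so the argument as stated only yields $\Pr[X\ge 1]\ge\Omega(\eps/k)$. What actually makes the second moment small is a fact you did not invoke: at every deficient vertex the walk has probability $1/2$ of stepping into $M$, after which (since $M$ is downward-closed on each side) it never meets another deficient vertex; a two-line recursion then gives $\E[X^2]\le 3\E[X]+\E[X]^2$, and Paley--Zygmund delivers $\Omega(\eps)$. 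There is also a minor slip: merely \emph{visiting} $a(m)$ does not trigger rejection when $a(m)$ still has degree $3$ in the actual graph; rejection only occurs when the walk steps toward the non-$C$ neighbour, which happens with probability $1/2$.

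If you want the sharp $\Omega(\eps)$ bound, a cleaner route avoids second moments and the shallow/deep split entirely. Let $f_\ell$ be the fraction of level-$\ell$ vertices lying in $M$ (on one root's side). A walk visits a uniformly random level-$\ell$ vertex, so it enters $M$ with probability at least $\max_\ell f_\ell$; but the weighted average of the $f_\ell$ (weights proportional to level sizes) equals $|M|/|V(C^*)|\ge \eps$, hence $\max_\ell f_\ell\ge\eps$. This is just the paper's pigeonhole done with the right weighting.
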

\begin{proof}
	Let us consider the connected component of $v$ formed by weld-consistent vertices. We already established that this connected component can be embedded into a candy graph. If at least an $\eps$-fraction of the vertices of the candy graph are missing from the connected component of $v$, then we must encounter a non-weld consistent vertex or a degree-deficient vertex (i.e., less than degree-3 interior or weld vertex) with probability $\Omega(\eps/k)$ in each of the $\Theta(k/\eps)$ runs (since by the pigeonhole principle there is a level set of the candy graph, where at least an $\Omega(\eps/k)$-fraction of the vertices are missing). This implies the statement of the lemma.
\end{proof}

If there are more than $\eps |V|$ consistent vertices that are in a sub-binary tree of weld-consistent vertices of size at most $(1-\eps)(2^{k+1}-1)$, then the completeness check fails with probability at least $\eps$. Therefore, we can see that if the graph passes the completeness check with probability at least $1-\eps$, then the graph is $O(\eps)$-close to the desired structure of having a disjoint union of welded trees, when disregarding self-loops and double edges.
(One can actually form the desired structure by removing non-consistent vertices, and then completing the potentially incomplete trees missing $O(\eps)$ edges.)

\paragraph{Advice test.} Given vertex $v$, run the completeness test, and reject unless it has precisely one double edge or has $4$ single-edge neighbors, with the potential addition of a self-loop (root vertex).
If $v$ has a double edge neighbor $u$, then run the completeness test on $u$.
Reject unless precisely one of $u$ and $v$ appear to be in a branch with non-marked leaves in the completeness test.
Let $a$ denote the one in the non-marked branch, and $b$ the other.
Check the parity of the number of loops of $a$'s root $r$ and the root-pair $r'$ of $r$ found during the completeness test.
Reject if the parity does not match the marking of $b$.

\begin{definition}
	A vertex $v$ is \emph{advice-consistent} if it passes the advice test with probability at least $1/2$.
\end{definition}
We can see that if the graph passes the advice test with probability at least $1-\eps$, then at most an $O(\eps)$-fraction of the weld-consistent vertices have an incorrect advice edge.
Thus, if the number of vertices is a multiple of $2 \cdot (2^k-1)\cdot(2^{k+3}-6)$, then we can just take the induced subgraph of the advice-consistent vertices and complete it using $O(\eps)$ edges to get to a yes instance with property~$\mathcal{P}_k$.\footnote{Note that it is possible, that one also needs to remove an $O(\eps)$-fraction of the candy-components, if there is a surplus of candies with even/odd number of marker loops.}

\paragraph{The final test.} Verify that the number of vertices is a multiple of $2 \cdot (2^k-1)\cdot(2^{k+3}-6)$, then run the advice test $O(1/\eps)$ times.
If we have a yes instance with the correct advice, then the test always accepts, so the tester has perfect completeness.
On the other hand, if the test accepts with probability at least $1/3$, then we know that the graph must be $\eps$-close to a yes instance. Therefore, any $\eps$-far no instance will be rejected with probability at least $2/3$.

\subsection{Marking the weld vertices using a quantum computer} \label{subsec:MarkWeld}

In this subsection, we describe how to efficiently mark weld vertices in a graph $G\in\mathcal{P}_k$. More precisely, given a vertex $v$ in $G$ the task is to tell whether it is a weld vertex or not, in query and time complexity $\poly(k)$. If the graph $G$ does not have property $\mathcal{P}_k$ we do not place any restriction on the output, so we will always assume that $G\in\mathcal{P}_k$. The procedure that we describe below has success probability $1$.

Let us denote by $W$ the set of weld vertices in a graph $G$ with property $\mathcal{P}_k$. Given a vertex $v$, we decide whether or not $v\in W$ as follows:
\begin{enumerate}[label=\arabic*.)]
	\item Along every non-double edge adjacent to $v$, start a non-backtracking (random) walk, which only traverses single edges. If a degree-$3$ vertex (i.e., a leaf with a double edge) is encountered within the first $k$ steps of any of these walks (including $v$ itself), then conclude $v\notin W$.
	\item Now we know that $v$ is a non-root body vertex. We traverse the double edge to the antenna vertex $a$ in antenna $A$. From now on we completely ignore double edges (and even exclude them from degree counts), and use \alg{FindRootPath} to find the root $r_a$ of the antenna $A$.
	\item From the root $r_a$ we run the quantum walk algorithm of Childs et al.~\cite{CCD+03} to find the other root $r_b$ in the candy graph. We conclude $v\in W$ if exactly one of $r_a,r_b$ has a self-loop, otherwise we conclude $v\not\in W$.
\end{enumerate}

Now we briefly prove the correctness of the above procedure. For the first step, note that the degree-$3$ vertices are precisely the leaves of the antennas. Moreover, if a non-backtracking walk is started from an antenna vertex towards the leaves, then it will always find a leaf within $k$ steps. On the other hand, any body vertex is at least $k+1$ steps from an antenna leaf, if only single-edges can be traversed.

The second step relies on the correctness of \alg{FindRootPath}, which we already analyzed in the previous section.

The third step deserves a bit more explanation. We would like to use the quantum algorithm of Childs et al.~\cite{CCD+03} on the body part of the candy graph. As we discuss in the next paragraph, for this it suffices to construct an adjacency list ``oracle'' for a graph, where all vertices have bounded degree (say, $3$) and the body welded tree structure containing $r_a$ is one of the connected components. We construct this ``oracle'' starting from the input oracle and removing double edges. Then we disconnect the antennae by treating the roots (vertices with at least $4$ non-double edges) in a special way. From a root we start a non-backtracking (random) walk of length $k$ along each edge. If we find a leaf after $k$ steps, then we know that the initial edge leads to the antenna vertex and therefore we remove the corresponding edge adjacent to the root.\footnote{Strictly speaking this results in a corrupted adjacency list oracle, because we did not delete the edge from the neighbor list of the vertex in the antenna connected to the root. This could be easily fixed with a bit of caution, but we note that the edge also gets automatically deleted if we use the block-encoding framework~\cite{GSLW18}.}

Once we have the adjacency list oracle for the modified graph, we can construct a block-encoding of the modified adjacency matrix $A$ divided by the maximum degree ($=3$) using standard techniques (see for example~\cite{GSLW18}). Once we have a block-encoding we can also perform Hamiltonian simulation~\cite{BCK15,LC19} efficiently (or implement polynomials of $A$~\cite{GSLW18}), which in turn enables implementing the algorithm of~\cite{CCD+03}. Since the quantum walk algorithm of~\cite{CCD+03} finds the other root in the tree with probability $\Omega(1/\poly(k))$, and the success probability can also be computed,
we can make the algorithm succeed with probability $1$ using amplitude amplification.

Once we have found the other root in the candy graph we can easily compute the parity of the number of self-loops, which in turn correctly identifies the weld vertices, since $G\in\mathcal{P}_k$. As all steps of the above procedure succeed with certainty, we get the following.

\begin{lemma}\label{lem:Mark}
	The above quantum algorithm given any vertex $v$ in a graph $G\in\mathcal{P}_k$ outputs a bit which is $1$ iff $v$ is a weld vertex. Moreover, the algorithm has query and time complexity $\poly(k)$.
\end{lemma}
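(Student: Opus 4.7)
The plan is to verify correctness of each of the three phases of the algorithm separately, then aggregate the complexities. Throughout we may assume $G \in \mathcal{P}_k$, so the graph has the promised candy-graph structure.

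First I would verify Phase 1 (distinguishing body vs antenna vertices). The key observation is that after including double edges in the degree count, the leaves of the antennae are exactly the vertices of degree $3$ (they have one single edge to their parent plus one double edge contributing $2$ to the degree count), while every body vertex lies at graph distance at least $k+1$ from any antenna leaf along single edges. Thus any non-backtracking single-edge walk of length at most $k$ started at an antenna vertex must hit a degree-$3$ leaf, while no such walk started at a body vertex can reach one. This step is obviously $O(k)$ queries.

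Next, for Phase 2, after traversing the double edge to the antenna vertex $a$, the antenna $A$ is a depth-$k$ binary tree once self-loops and double edges are ignored. The correctness of \alg{FindRootPath} on a clean binary tree was already established in \cref{subsec:effAdviceTest}: the non-backtracking walk from the parent direction can continue $k-\ell+2$ steps, while any other direction terminates in $k-\ell$ steps, so the procedure identifies the parent deterministically. Iterating this up to $k$ times locates the antenna root $r_a$ with certainty in $\poly(k)$ queries.

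The main obstacle is Phase 3. I would proceed in two sub-steps. First, construct an adjacency-list oracle for a modified graph $G'$ in which the connected component containing $r_a$ is exactly the body welded-tree between the two candy roots. This is done by (i) filtering out double edges and self-loops at query time, and (ii) at any root-candidate vertex (degree $\ge 4$ after filtering), launching a non-backtracking walk of length $k$ along each incident edge and cutting the edge whose walk reaches a leaf (this identifies the antenna side). Correctness follows because antenna branches from a root are depth-$k$ trees and reach a leaf in exactly $k$ non-backtracking steps, whereas body branches lead into the welded-trees structure, which has no leaves reachable in $k$ steps. Each query to $G'$ costs $\poly(k)$ queries to $G$. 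Second, I would feed $G'$ into the welded-trees quantum walk algorithm of Childs et al.~\cite{CCD+03}, either via direct simulation or by packaging $G'$ into a block-encoding of the normalized adjacency matrix and using Hamiltonian simulation \cite{BCK15,LC19,GSLW18} as per the standard recipe. The algorithm of \cite{CCD+03} outputs the paired root $r_b$ with probability $1/\poly(k)$ in $\poly(k)$ queries; since the success probability is explicitly known, one round of (exact or fixed-point) amplitude amplification boosts the success probability to $1$ at the cost of a $\poly(k)$ blowup.

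Once $r_a$ and $r_b$ are in hand, querying the degrees of $r_a$ and $r_b$ in $G$ (ignoring double edges) reveals whether each has a self-loop, and the parity of the number of self-loops correctly identifies $v$ as a weld or interior vertex by the definition of $\mathcal{P}_k$. Combining the three phases, each of which uses $\poly(k)$ queries to $G$ and succeeds with probability $1$, the overall algorithm has query and time complexity $\poly(k)$ and outputs the correct bit with certainty, giving the lemma.
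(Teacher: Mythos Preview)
Your proposal is correct and follows essentially the same approach as the paper: verify the three phases (body/antenna discrimination via non-backtracking walks, antenna root-finding via \texttt{FindRootPath}, and root-pair finding via the Childs et al.\ welded-trees walk on a filtered oracle with amplitude amplification), then read off the self-loop parity. One minor imprecision: in Phase~1 you write that \emph{any} non-backtracking single-edge walk from an antenna vertex must hit a degree-$3$ leaf within $k$ steps, but this is only true for walks directed away from the root; since the algorithm launches a walk along \emph{every} incident single edge, at least one such walk hits a leaf, which is all that is needed.
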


\subsection{Classical lower bound} \label{subsec:HardToDistinguish}

We show that our problem is classically hard by showing that it is hard to distinguish between a graph chosen randomly from $\mathcal{G}_1$---a particular ``difficult'' subset of yes instances in $\mathcal{P}_k$---and a graph chosen randomly from $\mathcal{G}_2$---a particular ``difficult'' subset of no instances. $\mathcal{G}_1$ is the set of yes instances containing $2^k-1$ candy subgraphs, all of which are welded along a single cycle, such that exactly half of the candy subgraphs with an even number of self-loops have no self-loops. $\mathcal{G}_2$ contains the same graphs as $\mathcal{G}_1$ except that each candy graph is modified to take on the shape of a ``double-bow-tie'': the weld vertices within a ``bow-tie'' are connected by an arbitrary (single) cycle, and exactly half of the roots have a self-loop.

We prove that classically at least $2^{\Omega(k)}$ queries are needed to distinguish between random instances from $\mathcal{G}_1$ and $\mathcal{G}_2$. With a slight abuse of notation, we also use $\mathcal{G}_i$ to mean a random graph sampled from the set $\mathcal{G}_i$. More precisely, we define the sampling procedures for $\mathcal{G}_i$ as follows:
\begin{enumerate}[label=\textbf{\alph{*}.},ref=\textbf{\alph{*}}]
    \item Call a graph a \emph{depth-$k$ root-joined binary tree} if it is obtained by identifying the root vertices of two binary  trees of depth $k$, one of which is called antenna and the other body. Create $4(2^k-1)$ depth-$k$ root-joined binary trees, and arrange them in $2(2^k-1)$ pairs.
    \item For the first $2^{k}$ pairs mark one of the roots with a self-loop, and for the last $2^{k-1}-1$ pairs mark both roots with a self-loop. \label{it:self-loops}
    \item Pick a random bijection between body interior vertices and antenna vertices that are in a pair with exactly one self-loop,
    and pick another random bijection between body weld vertices and antenna vertices that are in a pair with an even number of self-loops. Connect vertices along the bijections with a double edge. \label{it:double-edges}
    \item
    \begin{enumerate}[label=$G_\arabic{*}$:]
    	\item In each pair, join the body weld vertices in the two root-joined binary trees of the pair by a single random cycle that alternates between vertices in the two trees.
    	\item In each pair, join the body weld vertices within both root-joined binary trees by a single random cycle.
    \end{enumerate}
\end{enumerate}
When an adjacency-list oracle is sampled for the above graphs, then vertex labels are randomly permuted, as well as the adjacency lists of every vertex.\footnote{Note that the above procedure samples from graph isomorphism classes non-uniformly; indeed, classes with richer automorphism groups tend to be sampled more often.}

First, we verify that a random no instance in $\mathcal{G}_2$ is indeed far from the yes instances in expectation. Let us call a graph in $\mathcal{G}_i$ \emph{reduced} if all its advice edges and self-loops have been removed (i.e., in the above procedure we ignore steps \ref{it:self-loops}\ and \ref{it:double-edges}). Observe that reduced graphs in $\mathcal{G}_1$ are bipartite (simply alternately color each layer in each welded tree). On the other hand, we prove that reduced graphs in $\mathcal{G}_2$ are typically far from even being bipartite.

Second, we argue that welded and self-welded trees are hard to distinguish by a classical randomized algorithm that makes few queries. Then, it follows that graphs from $\mathcal{G}_1$ and $\mathcal{G}_2$ are themselves hard to distinguish since they are based on the welded and self-welded trees, respectively, but are otherwise the same. We argue this formally by defining a fixed simulator $\mathcal{G}_s$ which we show behaves like both $\mathcal{G}_1$ and $\mathcal{G}_2$ when only sub-exponentially many queries are used.

\subsubsection{A random graph in \texorpdfstring{$\mathcal{G}_2$}{G2} is typically far from any yes instance}

We show that the reduced version of $\mathcal{G}_2$ is typically constant-far from reduced graphs in $\mathcal{G}_1$. It is easy to see that this also implies that (non-reduced) $\mathcal{G}_2$ is typically constant-far from $\mathcal{G}_1$.

\begin{lemma}~\label{lem:far_from_bipartite}
With probability at least $1-\exp(-\Omega(2^{k}))$, we need to remove at least $2^{k}\cdot(2^{k}-1)/16$ edges from a random reduced graph in $\mathcal{G}_2$ to make it bipartite.
\end{lemma}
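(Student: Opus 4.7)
The plan is to reduce to a per-body analysis, show that each body's bipartite-edit distance is $\Omega(2^k)$ with high probability, and conclude via Hoeffding over the $4(2^k-1)$ independent bodies. The reduced graph in $\mathcal{G}_2$ consists of $4(2^k-1)$ independent connected components, one per root-joined binary tree; the antenna portion of each component is itself a tree and so is bipartite, while all non-bipartiteness resides in the body---a depth-$k$ binary tree with a uniformly random cycle on its $2^k$ leaves. It therefore suffices to prove that the bipartite-edit distance $\phi$ of a single body satisfies $\phi \geq c\cdot 2^k$ for an absolute constant $c > 1/64$ with probability $1-\exp(-\Omega(2^k))$.

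To analyze one body, I will exploit the fact that every leaf sits at depth $k$ and hence shares the same color under the natural tree-bipartition $c_0$. Parameterizing any $2$-coloring as $c = c_0 \oplus F$ for a flip set $F \subseteq V$, the monochromatic-edge count is $|\partial_T F| + 2^k - |\partial_C F_L|$, where $F_L := F \cap L$ denotes the restriction of $F$ to the leaves. Defining $\alpha(F_L) := \min\{|\partial_T F| : F \cap L = F_L\}$---the minimum number of tree edges one must cut so that each resulting component is leaf-monochromatic---this identity rewrites as $\phi = 2^k - \max_{F_L \subseteq L}\bigl(|\partial_C F_L| - \alpha(F_L)\bigr)$. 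The aim is then to show $\max_{F_L}\bigl(|\partial_C F_L| - \alpha(F_L)\bigr) \leq (1-c) \cdot 2^k$ with high probability over the random cycle.

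The central technical ingredient is a concentration bound on $|\partial_C F_L|$. For each fixed $F_L$, viewing the random cycle as a uniform random cyclic permutation of $L$ and noting that transposing two elements changes $|\partial_C F_L|$ by at most $4$, McDiarmid's inequality for permutations gives $\Pr[|\partial_C F_L| > \bE[|\partial_C F_L|] + s] \leq \exp(-\Omega(s^2/2^k))$; meanwhile $\bE[|\partial_C F_L|] = \tfrac{2|F_L|(2^k - |F_L|)}{2^k-1} \leq 2^{k-1} + O(1)$ uniformly in $F_L$. The delicate step is the union bound over the $2^{2^k}$ possible $F_L$, which I will handle by stratifying on $\alpha$: the set $\{F_L : \alpha(F_L) \leq t\}$ is parameterized by a choice of at most $t$ tree edges to delete plus a $2$-coloring of the resulting $t+1$ components, so its cardinality is at most $2\sum_{i=0}^{t}\binom{|E(T)|}{i} = \exp\bigl(O(t\log(2^k/t))\bigr)$. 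For an appropriate threshold $T = \Theta(2^k)$ with a small constant, the entropy of the low-$\alpha$ stratum is dominated by the McDiarmid exponent when $s$ is chosen as a suitable constant fraction of $2^k$; for $F_L$ with $\alpha(F_L) > T$, the bound $|\partial_C F_L| - \alpha(F_L) \leq 2^k - T \leq (1-c)\cdot 2^k$ holds automatically.

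Having established $\phi(\mathrm{body}_i) \geq c\cdot 2^k$ with probability $1-\exp(-\Omega(2^k))$ per body, and noting the deterministic upper bound $\phi(\mathrm{body}_i) \leq 2^{k-1}$ from the alternating-leaf flip strategy, I will apply Hoeffding's inequality to the sum $\sum_i \phi(\mathrm{body}_i)$ of $4(2^k-1)$ independent $[0, 2^{k-1}]$-valued random variables to conclude $\sum_i \phi(\mathrm{body}_i) \geq 2^k(2^k-1)/16$ with probability $1-\exp(-\Omega(2^k))$. The chief obstacle will be making the stratified union bound quantitative: the McDiarmid tail exponent and the log-cardinality of each $\alpha$-stratum must be balanced carefully to yield a per-body constant strictly larger than $1/64$; if the naive balancing with McDiarmid's constants is too loose, sharper tools---either a tighter permutation-concentration bound exploiting the run-counting structure of $|\partial_C F_L|$, or the spectral inequality $\phi(G) \geq \tfrac{|E(G)|}{2} + \tfrac{n}{4}\,\lambda_{\min}(A_G)$ combined with a lower bound on $\lambda_{\min}$ for a typical body---can be substituted.
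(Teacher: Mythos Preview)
Your approach is sound and genuinely different from the paper's. The paper does not work with the full body; it passes immediately to the two-layer induced subgraph $B$ on the $2^k$ weld vertices and their $2^{k-1}$ tree-parents, and then \emph{resamples}: instead of fixing the cherries and drawing a random cycle on the leaves, it fixes the cycle of length $c=2^k$ and draws a uniformly random matching of the cycle vertices into length-$2$ paths through the parent layer. For any bipartition $(U_1,U_2)$ of the cycle vertices, either $\ge c/64$ cycle edges are already monochromatic, or the partition is nearly alternating along the cycle (forcing $|U_1|\ge 3c/8$), in which case each matching step creates a monochromatic path-edge with probability $\ge 1/2$. The union bound then runs only over partitions with fewer than $c/64$ violating cycle edges---at most $\sum_{i<c/64}\binom{c}{i}\le 2^{0.09c}$ of them---which is what yields the explicit per-body bound $m/96=2^k/64$ and hence the stated $2^k(2^k-1)/16$ after a plain union bound over the $4(2^k-1)$ bodies (no Hoeffding is used or needed).

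Your flip-set identity $\phi=2^k-\max_{F_L}(|\partial_C F_L|-\alpha(F_L))$ is correct, and McDiarmid plus your $\alpha$-stratified union bound does give $\phi\ge c'\cdot 2^k$ per body for \emph{some} absolute $c'>0$, which is all the paper actually needs downstream. But the balancing does not reach $c'\ge 1/64$ with the Lipschitz-$4$ McDiarmid exponent: at threshold $t=2^k/64$ the stratum log-cardinality is of order $2^{k+1}H(1/128)\approx 0.13\cdot 2^k$ (note also that your count $2\sum_{i\le t}\binom{|E(T)|}{i}$ is missing a factor $2^t$ from the per-component color choice, though this does not change the asymptotics), forcing a deviation $s\gtrsim 0.9\cdot 2^k$ where you need $s\le (31/64)\,2^k$. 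Your proposed fixes are plausible but speculative; the cleanest repair is to restrict to the two-layer subgraph first, after which $\alpha(F_L)$ becomes simply the number of mixed cherries and the strata shrink dramatically. The final Hoeffding step is superfluous and cannot recover a larger constant, since the deviation needed for failure probability $\exp(-\Omega(2^k))$ with $4(2^k-1)$ variables of range $2^{k-1}$ is itself $\Theta(2^{2k})$.
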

\begin{proof}
Let us consider one self-welded component in a reduced graph $\mathcal{G}_2$, and in particular its induced subgraph $B$, spanned by the vertices in the weld layer and the layer adjacent to it, so that $m = 3\cdot 2^{k-1}$ is the total number of vertices in $B$. We show the following:

\begin{claim}
With probability at least $1-\exp(-\Omega(m))$, we need to remove at least $m/96$ edges from $B$ to make it bipartite.
\end{claim}

This implies the statement of the lemma because there are $4\cdot(2^{k}-1)$ copies of $B$ in any graph in $\mathcal{G}_2$. If all of them are far from bipartite, in the sense of needing to remove at least $m/96$ edges, then the total number of edges that need to be removed is at least $2^{k}\cdot(2^{k}-1)/16$. On the other hand, the probability of any of them not being far from bipartite is at most $4\cdot(2^{k}-1) \cdot \exp(-\Omega(2^{k})) = \exp(-\Omega(2^k))$.

\begin{proof}
This can be argued following the strategy of~\cite[Lemma 7.4]{GR97}.
The sampling procedure in terms of $B$ can be described as follows: take $2^{k-1}$ disjoint paths of length $2$, enumerate all paths and give labels $1,2,\ldots, 2^k$ to the endpoints of the paths, then sample a random permutation $\pi$ of $[2^k]$, and connect $\pi^{-1}(1)$ to $\pi^{-1}(2)$, $\pi^{-1}(2)$ to $\pi^{-1}(3)$, $\ldots$, and $\pi^{-1}(2^k)$ to $\pi^{-1}(1)$.
Observe that $B$ can be alternatively sampled by fixing a cycle of length $2^k$, and randomly matching its vertices as follows:
sample a random permutation $\pi$ of $[2^k]$, then connect with a length-2 path $\pi^{-1}(1)$ to $\pi^{-1}(2)$, $\pi^{-1}(2)$ to $\pi^{-1}(3)$, $\ldots$, and $\pi^{-1}(2^k)$ to $\pi^{-1}(1)$. It is easy to see that for a given $\pi$ the graphs sampled in the two different ways are isomorphic (via the permutation of the cycle vertices $\pi$).

It is more convenient to work with the second sampling procedure.
Consider a particular partition $(U_1,U_2)$ of the cycle vertices in $B$ and any partition $(V_1,V_2)$ of all vertices in $B$ that extends $(U_1,U_2)$ in the sense that $U_i\subset V_i$. We say an edge violates $(V_1,V_2)$ if it connects two vertices from the same $V_i$. Let $c := 2m/3$ denote the number of cycle vertices. We separate the analysis into two cases:

\begin{enumerate}[labelwidth=4em,leftmargin=!]
    \item[Case 1.] There are at least $c/64$ violating cycle edges.

    \item[Case 2.] There are fewer than $c/64$ violating cycle edges. Then assume without loss of generality that $|U_1|\leq |U_2|$. If $|U_1| \leq c/2 - c/128$, then there are at most $c-c/64$ non-violating cycle edges, so there are at least $c/64$ violating cycle edges, which contradicts our assumption. Hence $|U_1| \geq c/2 - c/128\geq 3c/8$.

    To randomly match vertices in $U_1$, we can take an arbitrary unmatched vertex $u$ in $U_1$ and connect it randomly to a cycle vertex $w$ by a path of length $2$. Note that if $w$ is in $U_2$, then we must have one violating edge on the length-2 path with respect to any $(V_1,V_2)$ partition. We continue until all the vertices in $U_1$ are matched.

    At each step, the probability of $w$ being in $U_2$ is at least $1/2$, so the probability that fewer than $c/64$ violating edges are created in a total of at least $3c/16$ steps is at most
    \begin{equation}
        \sum_{i=0}^{c/64-1} \binom{3c/16}{i} \cdot  2^{i-3c/16}
        \leq 2^{-11c/64}   \sum_{i=0}^{c/64} \binom{3c/16}{i}
        \leq 2^{\frac{3c}{16}H(16/3/64)-\frac{11c}{64}}
        \leq 2^{-0.11c},
    \end{equation}
    where $H(p) \coloneqq -p\log(p) - (1-p)\log(1-p)$ is the binary entropy function, and the second inequality follows from \cite[Corollary 22.9]{jukna2011ExtremalCombi2} stating
    \begin{align}\label{eq:entropyBinom}
    \forall\, 0 < h \leq n/2 :\,\, \sum_{j=0}^h\binom{n}{j} &\leq 2^{n\cdot H(h/n)}.
    \end{align}

    Note that this bounds the probability that \textit{any} partition $(V_1,V_2)$ extending $(U_1,U_2)$ has fewer than $c/64$ violating edges.
\end{enumerate}

We call a partition $(V_1,V_2)$ of all vertices in $B_1$ ``bad'' if it has fewer than $c/64$ violating edges. Then by the union bound
\begin{equation}
\begin{aligned}
    \Pr(\exists \text{ bad partition } (V_1,V_2)) &\leq \sum_{(U_1,U_2)} \Pr(\exists \text{ bad partition extending } (U_1,U_2))\\
    &\leq \#\{(U_1,U_2): \text{Case }2\}\times 2^{-0.11c},
\end{aligned}
\end{equation}
where $(U_1,U_2)$ denotes a partition of the cycle edges and $\#\{(U_1,U_2): \text{Case }2\}$ is the number of such partitions with fewer than $c/64$ violating cycle edges. For each $i<c/64$, each partition which has $i$ violating cycle edges is determined by the choice of those edges. Therefore
\begin{equation}
    \#\{(U_1,U_2): \text{Case }2\} = \sum_{i=0}^{c/64-1}\binom{c}{i} \leq 2^{0.09c}.
\end{equation}
Therefore $\Pr(\exists \text{ bad partition } (V_1,V_2)) \leq 2^{-0.02c} = \exp(-\Omega(c)) = \exp(-\Omega(m))$ as desired.
\end{proof}
This completes the proof of the lemma.
\end{proof}

\subsubsection{Hardness of distinguishing welded and self-welded trees}

Let $\mathcal{B}$ be a classical randomized algorithm. We consider the difficulty of $\mathcal{B}$ winning four different games by querying an oracle that provides black-box access to either a self-welded or welded tree. To be clear, in this subsection, by a ``self-welded tree'' we mean two binary trees of depth $k$ where the weld vertices in each are connected to themselves by a single cycle; by a ``welded tree'' we mean two binary trees of depth $k$ where the weld vertices in each are connected to weld vertices in the other by a single cycle. A self-welded tree looks like a candy graph without the antenna and a welded tree looks like a double-bow-tie graph without the antenna (cf.~\fig{modified_glued_tree}).

The difficulty of winning can be quantified by bounding the winning probability by a function of $t$, the number of queries that $\mathcal{B}$ makes. Throughout, the winning probability is over three or four sources of randomness: the internal randomness of $\mathcal{B}$, the cycle forming the (self-)weld, the vertex labelings, and also the random choice of the starting vertex in two of the games. These four games are described in the four lemmas below. They relate to the difficulty of distinguishing welded from self-welded trees because, unless they are won, the welded and self-welded trees both look just like a large (effectively infinite) binary tree to $\mathcal{B}$.

We can, without loss of generality, assume that $\mathcal{B}$ does not query labels which have already been queried and that each query to a vertex label returns the labels of \textit{all} neighbors of that vertex. For notational convenience, we shall use the concept of the ``knowledge graph'' of $\mathcal{B}$, as introduced in \cite[Section 7]{GR97}. The knowledge graph of $\mathcal{B}$ is the graph that it sees, which consists of all vertices whose labels it has either queried or has been returned by the oracle in answer to those queries, as well as the edges between them. The knowledge graph can change every time $\mathcal{B}$ makes a query. In the following, we make the further assumption that $\mathcal{B}$ can only query labels in its knowledge graph. This assumption shall later be justified by \lem{random_goes_to_untapped}.

\begin{lemma}\label{lem:self_weld_entrance}
Let Game A be that of finding a cycle in a self-welded tree given the label of \rooot. If $\mathcal{B}$ uses $t\leq 2^{k-1}$ queries, then its probability of winning Game A is at most $O(t^2 \cdot 2^{-k/2})$.
\end{lemma}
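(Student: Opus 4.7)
The plan is to apply the principle of deferred decisions to $\mathcal{B}$'s interaction with the self-welded oracle, and then to bound the probability that any of the $t$ queries produces a cycle in $\mathcal{B}$'s knowledge graph. As in the standing conventions of this subsection, I would fix $\mathcal{B}$'s internal randomness and assume each visit to a new vertex reveals its entire neighborhood, so the knowledge graph has $O(t)$ vertices after $t$ queries. Starting from \rooot, $\mathcal{B}$ can only explore one of the two trees, and the weld cycle of that tree has length $2^k$.

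First I would observe that $\mathcal{B}$ wins Game A if and only if its knowledge graph contains a cycle. Because the binary-tree skeleton of the self-welded tree is acyclic, any such cycle must involve at least one weld-cycle edge, and therefore must be closed by a query that asks for an unrevealed cycle neighbor of a weld vertex and receives back a label already in the knowledge graph. Under a deferred-decisions simulation in which vertex labels (and the Hamiltonian cycle on the $2^k$ weld vertices) are sampled on the fly---only committing enough to stay consistent with the revealed transcript---tree-edge queries always return fresh labels, so collisions can only come from cycle-neighbor queries.

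Then I would bound the per-query collision probability. Just before the $i$-th query the knowledge graph contains $O(i)$ weld vertices and $O(i)$ pinned cycle positions, so by symmetry a freshly returned cycle-neighbor label is uniformly distributed over a pool of at least $2^k - O(i)$ labels, of which at most $O(i)$ match an already-seen weld label. For $i \le t \le 2^{k-1}$ this per-query collision probability is $O(i\cdot 2^{-k})$; summing via the union bound gives
\[
\Pr[\mathcal{B}\text{ wins Game A}] \;\le\; \sum_{i=1}^{t} O\bigl(i\cdot 2^{-k}\bigr) \;=\; O\bigl(t^2\cdot 2^{-k}\bigr) \;\le\; O\bigl(t^2\cdot 2^{-k/2}\bigr).
\]

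The main obstacle is making the deferred-decisions coupling precise against an adaptive $\mathcal{B}$: one must verify that at every step $\mathcal{B}$'s next query depends only on what has been revealed so far, so that the unrevealed portion of the labeling and of the cycle remains exchangeable over unused labels and unfilled cycle positions. This follows because the joint prior on labelings and uniform Hamiltonian cycles is invariant under permutations of the still-unseen labels, an argument of the same flavor as the classical lower-bound analyses in \cite{CCD+03} and \cite{GR97}.
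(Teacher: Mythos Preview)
Your argument has a real gap in the claim that ``tree-edge queries always return fresh labels, so collisions can only come from cycle-neighbor queries.'' This is false once the walk has crossed the weld cycle. Suppose the exploration goes from \rooot\ down to a leaf $\ell_1$, along a cycle edge to a (random) leaf $\ell_2$, and then follows the tree edge from $\ell_2$ up to its parent, grandparent, and so on. After climbing $j$ levels you reach the depth-$(k-j)$ ancestor $a$ of $\ell_2$, and this \emph{tree-edge} step collides with the already-explored root-to-$\ell_1$ path whenever $a$ is also an ancestor of $\ell_1$. That event has probability about $2^{j}/2^{k}$, not $O(i/2^{k})$; in particular, for $j$ of order $k/2$ it is $\Theta(2^{-k/2})$ even though only $O(k)$ vertices have been seen. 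So your per-query bound $O(i\cdot 2^{-k})$ does not hold uniformly, and the union-bound sum as written is not justified. The exchangeability you invoke concerns the random labels and the random Hamiltonian cycle, but the binary tree itself is fixed, so the identity of a tree-edge neighbor is deterministic given the leaf you are standing at---it is not a fresh uniform draw.

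This is exactly the difficulty the paper's proof is designed to handle. Following \cite{CCD+03}, it works in the random-embedding framework and splits the analysis in two: first it bounds by $O(t^2\cdot 2^{-k/2})$ the probability that the embedding ever climbs back above depth $k/2$ after first reaching the weld; then, conditioned on staying below depth $k/2$, it partitions the bottom half of the tree into $2^{k/2}$ subtrees $S_j$ and argues that any collision forces some cycle hop to land in a prescribed $S_j$, which costs another $O(2^{-k/2})$ per pair. Your deferred-decisions idea is fine in spirit, but to make it go through you would need to control these ``climb-back-up'' tree collisions---e.g., by an analogous depth-threshold argument---rather than asserting they cannot occur.
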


\begin{proof}
As explained in \cite{CCD+03}, the winning probability can be expressed as the probability that a random embedding $\pi$ of a random rooted binary tree $T$,
with $t$ vertices\footnote{Strictly speaking, $T$ has $O(t)$ vertices as we are assuming the oracle answers each query to a vertex label with all labels of its neighbors. But, because the lemma is only up to big-$O$, this is unimportant.}
and root at \rooot, into a random self-welded tree $G$ contains a cycle, where the randomness in the embedding corresponds to the random vertex labelings, and the randomness in the rooted binary tree corresponds to the internal randomness of $\mathcal{B}$. The terms ``random embedding'' and ``rooted binary tree'' are defined in \cite[Sec.~IV]{CCD+03}.

Therefore, it suffices to show that for an arbitrary fixed $T$, the probability that its random embedding $\pi(T)$ in a random $G$ contains a cycle is at most $O(t^2 \cdot 2^{-k/2})$. We denote this probability by $\Ex_{G}[P^{G}(T)]$ to match the notation of \cite{CCD+03}. Our proof is very similar to that of \cite[Lemma 8]{CCD+03}.

The probability that $\pi(T)$ contains a cycle is the same as the probability that there exist vertices $a\neq b$ in $T$ such that $\pi(a)=\pi(b)$.

First note that the probability that $\pi(T)$ ever goes to a depth smaller than $k/2$ after reaching a weld vertex is at most $t^2\cdot 2^{-k/2-1}$, as $\pi$ must embed $T$ to the left $k/2+1$ times starting from some vertex in $T$ and there are at most $t$ paths from root to leaf in $T$, each containing at most $t$ vertices.

Now fix any pair of vertices $a\neq b$ in $T$. We shall bound the probability that $\pi(a)=\pi(b)$. Let $P$ be the path in $T$ from $a$ to $b$, and $c$ the vertex in $T$ closest to the root on $P$. Let $P_1$ and $P_2$ be the paths in $T$ from $c$ to $a$ and $b$ respectively. The vertices at depths $k/2+1, \dots, k$ divide into $2^{k/2}$ complete binary trees of depth $k/2$ which we denote $S_{1},\dots,S_{2^{k/2}}$.

We have already considered the case that $\pi(P_i)$ ever goes to depth smaller than $k/2$ after reaching a weld vertex, so we can assume it does not. In this case, $\pi(a)=\pi(b)$ must lie in one of the subtrees $S_j$ and one of $\pi(P_1)$ or $\pi(P_2)$ must go from a weld vertex to a vertex in $S_j$. The probability of the latter occurring is at most $2^{k/2}/(2^{k}-t) \leq 2\cdot 2^{-k/2}$ for $t\leq 2^{k-1}$ due to the randomness of the self-weld. There are $\binom{t}{2}$ choices of $a,b$, so the probability of finding a cycle is at most $t^2\cdot 2^{-k/2}$.

Overall, we have shown
\begin{equation}
    \Ex_{G}[P^G(T)] \leq t^2 \cdot 2^{-k/2}+t^2\cdot 2^{-k/2-1} = O(t^2\cdot 2^{-k/2})
\end{equation}
as claimed.
\end{proof}

\begin{lemma}\label{lem:self_weld_random}
Let Game B be that of finding \rooot\ or a cycle in a self-welded tree given the label of a uniformly random starting vertex. If $\mathcal{B}$ uses $t\leq 2^{k-1}$ queries, then its probability of winning is at most $O(t^2\cdot 2^{-k/4})$.
\end{lemma}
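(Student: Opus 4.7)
Following the strategy of \lem{self_weld_entrance}, the plan is to express $\mathcal{B}$'s exploration as a random embedding $\pi\colon T \to G$ of a fixed rooted binary tree $T$ with $O(t)$ vertices, where $\pi$ sends the root of $T$ to the uniformly random starting vertex $v \in V(G)$. Winning Game~B corresponds to the event that $\pi(T)$ contains a root of $G$, or that $\pi$ is non-injective. Since the desired bound must hold for every $T$ that $\mathcal{B}$ could explore, it suffices to bound, for an arbitrary fixed $T$, the probability of this combined event by $O(t^2 \cdot 2^{-k/4})$.

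The first step will be to condition on $v$ being ``deep enough'': with probability at least $1 - O(2^{-k/4})$ over the uniformly random choice of $v$, the vertex $v$ lies at distance at least $3k/4$ from both roots of $G$, because the ball of radius $3k/4$ around each root contains only $O(2^{3k/4})$ of the $\Theta(2^k)$ vertices of $G$. Under this conditioning, any $u \in T$ with $\pi(u)$ equal to a root must satisfy $\mathrm{dist}_T(u, \mathrm{root}(T)) \ge 3k/4$, so the embedded walk realizing $\pi(u)$ has length at least $3k/4$. I would then bound, using the randomness of the vertex labeling and the self-weld cycle, the probability that such a long walk lands on one of the two fixed roots; summing over the at most $t$ choices of $u$ and over the two roots gives a root-finding contribution of $O(t \cdot 2^{-k/4})$.

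For the cycle-finding event I would adapt the proof of \lem{self_weld_entrance}. Recall that there, the argument decomposes any pair of colliding embedded walks through the weld into a ``descent'' part (paying $2^{-k/2-1}$ per required $k/2$-step descent from a known root) and a ``weld-collision'' part (paying $O(2^{-k/2})$ per pair). The new difficulty is that $v$ may sit close to the weld, so embedded walks no longer automatically start by descending from depth $0$; however, under the same conditioning that $v$ is at distance $\ge 3k/4$ from both roots, each such walk still contains an $\Omega(k)$-length segment whose left/right choices at weld vertices must be correct, giving a per-pair collision probability of $O(2^{-k/4})$. Summing over the $\binom{t}{2}$ pairs $a \neq b$ of vertices of $T$ yields a cycle-finding contribution of $O(t^2 \cdot 2^{-k/4})$, and combining with the $O(2^{-k/4})$ conditioning loss gives the advertised bound. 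I expect the main obstacle to be formalizing this cycle analysis when $v$ lies close to the weld, because the clean ``all walks start at a known root'' picture from \lem{self_weld_entrance} no longer applies: one has to split each embedded walk into its above-weld and below-weld segments and track the self-weld randomness along each segment independently, while verifying that the polynomial prefactor does not blow up with the number of such segments.
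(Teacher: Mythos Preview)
Your overall strategy matches the paper's: reduce to a random embedding $\pi$ of a fixed tree $T$, condition on the starting vertex lying at depth $\ge 3k/4$ (losing $O(2^{-k/4})$), and then control root-finding and cycle-finding. So the approach is correct in outline.

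However, you miss a simplification that dissolves precisely the ``main obstacle'' you flag. Instead of bounding separately (i) the probability that some $\pi(u)$ equals a root and (ii) the probability of a collision via a new near-the-weld analysis, the paper bounds a single stronger event: the probability that $\pi(T)$ \emph{ever reaches depth $<k/2$}. Starting from depth $\ge 3k/4$, this requires $k/4+1$ consecutive ``towards-root'' choices along some root-to-leaf path of $T$ from some vertex of $T$, so by the same counting as in \lem{self_weld_entrance} the probability is at most $t^2\cdot 2^{-k/4-1}$. This one bound already covers root-finding (the root is at depth $0<k/2$), and---crucially---once you condition on $\pi(T)$ staying at depth $\ge k/2$, the cycle analysis of \lem{self_weld_entrance} applies \emph{verbatim}: the subtrees $S_1,\ldots,S_{2^{k/2}}$ are the same, every embedded vertex lies in one of them, and the collision bound $t^2\cdot 2^{-k/2}$ follows from the self-weld randomness exactly as before. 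No new ``above-weld/below-weld'' decomposition is needed.

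Two smaller remarks. First, your root-finding bound $O(t\cdot 2^{-k/4})$ is asserted via ``the probability that such a long walk lands on one of the two fixed roots,'' but you do not actually justify a per-vertex bound of $2^{-k/4}$; the clean argument is the consecutive-upstep one above (and it naturally gives a $t^2$ prefactor, not $t$). Second, your description of the collision mechanism as ``left/right choices at weld vertices must be correct'' is not quite the source of the bound in \lem{self_weld_entrance}: there the $2^{-k/2}$ comes from the randomness of the weld cycle (the chance a weld crossing lands in a prescribed $S_j$), not from step-by-step left/right choices. Reusing that argument, rather than redoing it, is the point of conditioning on depth $\ge k/2$.
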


\begin{proof}
Again, we can bound this probability by bounding the probability that a random embedding $\pi$ of an arbitrary fixed rooted binary tree $T$ contains \rooot\ or a cycle. The only difference in setup is that the root is now at the random vertex in the self-welded tree and the probability is also over this randomness.
First, the probability of the random vertex being at depth less than $3k/4$ is at most $2^{3k/4}\cdot 2^{-k} = 2^{-k/4}$. We may suppose that $\mathcal{B}$ wins in this case.

Therefore, consider when the random vertex is at depth more than $3k/4$. Then, similarly to the proof of the Lemma 6, we can bound the probability that $\pi(T)$ ever goes to depth smaller than $k/2$ by $t^2\cdot 2^{-k/4-1}$.

The probability of finding a cycle is again $t^2\cdot 2^{-k/2}$ by the same argument as before. Therefore, the overall probability of $\mathcal{B}$ finding the \rooot\ or a cycle starting from a random vertex is $O(t^2\cdot 2^{-k/4})$.
\end{proof}

\begin{lemma}\label{lem:normal_weld_entrance}
Given a \rooot, let Game C be that of finding the other \rooot\ or a cycle in a welded tree. If $\mathcal{B}$ uses $t\leq 2^{k-1}$ queries, then its probability of winning is at most $O(t^2\cdot 2^{-k/2})$.
\end{lemma}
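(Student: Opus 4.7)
The plan is to mirror the argument of \lem{self_weld_entrance} closely, since welded and self-welded trees differ only in \emph{where} a weld edge takes you (across to the other tree vs.\ back within the same tree), while preserving the crucial fact that the cycle connecting weld vertices is a \emph{uniformly random} matching. As before, I follow the embedding framework of \cite{CCD+03}: the winning probability of $\mathcal{B}$ equals $\E_G[P^G(T)]$, where $T$ is a worst-case fixed rooted binary tree with $t$ vertices (with root embedded at the given \rooot), and $P^G(T)$ is the probability that a uniformly random embedding $\pi$ of $T$ into a random welded tree $G$ contains either the other \rooot\ or a cycle. It then suffices to bound $\E_G[P^G(T)]$ for each fixed $T$.

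The first step is to bound the probability that $\pi(T)$ ever reaches a vertex at depth less than $k/2$ (from either root) after crossing a weld edge. Each such event requires at least $k/2+1$ consecutive ``towards-the-parent'' steps in whichever binary tree is currently being traversed; with $t$ vertices in $T$ giving at most $t$ root-to-leaf paths and at most $t$ positions along each at which a weld crossing could occur, the standard calculation of \lem{self_weld_entrance} gives a bound of $t^2 \cdot 2^{-k/2-1}$. Crucially, since the exit \rooot\ lies at depth $0$ of the second tree and is only reachable via at least one weld crossing, this one event already subsumes all executions in which $\pi(T)$ discovers the exit \rooot.

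The second step handles cycles. Conditioning on the above bad event not occurring, suppose $\pi(a) = \pi(b)$ for distinct $a, b \in T$; let $c$ be their lowest common ancestor in $T$ and $P_1, P_2$ the $T$-paths from $c$ to $a, b$. Then $\pi(a)=\pi(b)$ must lie in some depth-$k/2$ subtree $S_j$ (in either binary tree), since after any weld crossing the embedding stays at depth $\ge k/2$. If neither of $\pi(P_1), \pi(P_2)$ crosses a weld, both stay inside a single binary tree, and since a tree has no cycles this is impossible. Otherwise at least one of $\pi(P_1), \pi(P_2)$ must traverse a weld edge landing in $S_j$; by the uniform randomness of the alternating cycle joining $W_1$ and $W_2$, the probability that a particular weld edge's endpoint falls inside a prescribed depth-$k/2$ subtree of the opposite tree is at most $2^{k/2}/(2^k - t) \le 2 \cdot 2^{-k/2}$ for $t \le 2^{k-1}$. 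Union-bounding over the $\binom{t}{2}$ choices of $(a,b)$ yields a cycle-formation probability at most $t^2 \cdot 2^{-k/2}$, and combining with the first step gives the claimed $O(t^2 \cdot 2^{-k/2})$.

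The main obstacle I expect is verifying that the ``depth $<k/2$ after a weld'' bad event truly dominates finding the exit \rooot\ and simultaneously gives a clean setup for the cycle step, given that the welded tree has two separate trees (so $S_j$ subtrees live on both sides and paths can jump back and forth across multiple welds). The argument should still go through because each additional weld crossing only introduces another factor of $2 \cdot 2^{-k/2}$ for landing in any specified subtree, which only strengthens the bound; but care is needed to state the induction over weld crossings cleanly and to ensure the ``not yet visited'' adjustment of $2^k - t$ in the denominator remains valid as more weld edges are revealed during the embedding.
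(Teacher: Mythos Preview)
Your proposal is correct and takes essentially the same approach as the paper, which simply states that the proof is given by \cite[Lemma~8]{CCD+03} and is very similar to that of \lem{self_weld_entrance}. You have spelled out the adaptation in detail---bounding the probability of reaching depth $<k/2$ after a weld crossing (which subsumes discovering the other \rooot) and then handling cycles via the subtree argument---and this is precisely the argument the paper defers to.
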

\begin{proof}
The proof is given by~\cite[Lemma 8]{CCD+03} and is very similar to that of \lem{self_weld_entrance}.
\end{proof}

\begin{lemma}\label{lem:normal_weld_random}
Let Game D be that of finding a \rooot\ or a cycle in a welded tree given the label of a uniformly random starting vertex. If $\mathcal{B}$ uses $t\leq 2^{k-1}$ queries, then its probability of winning is at most $O(t^2\cdot 2^{-k/4})$.
\end{lemma}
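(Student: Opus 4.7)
The plan is to follow the template of \lem{self_weld_random} almost verbatim, substituting the welded-tree structure for the self-welded one and using \lem{normal_weld_entrance} in place of \lem{self_weld_entrance} wherever cycles are bounded. As usual, the winning probability can be re-expressed as the probability that a random embedding $\pi$ of an arbitrary fixed rooted binary tree $T$ of size $t$, rooted at a uniformly random vertex of a random welded tree $G$, contains either a root or a cycle, where the randomness in $\pi$ comes from the random relabelings and the randomness in the starting vertex.

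First, I would count vertices: the welded tree has $\Theta(2^k)$ vertices in total, of which at most $2\cdot 2^{3k/4+1}=O(2^{3k/4})$ lie within distance $3k/4$ of either of the two roots. Hence the uniformly random starting vertex lies within distance $3k/4$ of some root with probability at most $O(2^{-k/4})$, and in this case we conservatively concede the game to $\mathcal{B}$. Conditioning on the starting vertex being at distance $\geq 3k/4$ from both roots, I would fix an arbitrary rooted query tree $T$ of size $t$ and bound two bad events. For event (i), that $\pi(T)$ reaches a vertex at depth less than $k/2$ from some root, any such path in $\pi(T)$ must traverse at least $k/4$ net upward steps. Repeating the counting argument of \lem{self_weld_entrance} ($\leq t$ root-to-leaf paths in $T$, each of length $\leq t$, and each upward step ``correct'' with probability at most $1/2$) gives a bound of $t^2 \cdot 2^{-k/4-1}$. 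For event (ii), that $\pi(T)$ contains a cycle, the same argument as in \lem{normal_weld_entrance} yields $O(t^2 \cdot 2^{-k/2})$.

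Summing the three contributions produces a total winning probability of at most $O(2^{-k/4} + t^2 \cdot 2^{-k/4} + t^2 \cdot 2^{-k/2}) = O(t^2 \cdot 2^{-k/4})$, as claimed. The subtlest step, and the one I would verify most carefully, is confirming that walking through the weld does not shortcut the climb to a root relative to the self-welded case: a crossing edge only moves the walk from depth $k$ on one side to depth $k$ on the other side, so reaching depth $<k/2$ of either root still requires at least $k/4$ upward edges along the embedded path, and the counting argument of \lem{self_weld_entrance}/\lem{normal_weld_entrance} therefore applies unchanged.
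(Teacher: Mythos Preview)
Your proposal is correct and follows essentially the same approach as the paper: the paper's own proof is the single sentence ``The proof is essentially the same as that of \lem{self_weld_random} and \lem{normal_weld_entrance},'' and your write-up is a faithful expansion of that outline (random-start layer bound, then the depth-$k/2$ escape bound, then the cycle bound). One minor wording point: the relevant event is that some path in $T$ makes at least $k/4$ \emph{consecutive} upward embeddings (not merely ``net'' upward steps), since non-backtracking forces upward moves to occur in a single block between weld crossings; your bound $t^2\cdot 2^{-k/4-1}$ is nonetheless correct.
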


\begin{proof}
The proof is essentially the same as that of \lem{self_weld_random} and \lem{normal_weld_entrance}.
\end{proof}

\subsubsection{Hardness of distinguishing \texorpdfstring{$\mathcal{G}_1$}{G1} and \texorpdfstring{$\mathcal{G}_2$}{G2}}

Let $\mathcal{A}$ be a classical randomized algorithm that distinguishes $\mathcal{G}_1$ from $\mathcal{G}_2$. Our strategy for proving the classical lower bound is to consider how the ``knowledge graph'' of $\mathcal{A}$ changes as it interacts with oracles that provide black-box descriptions of a random graph in $\mathcal{G}_1$ or $\mathcal{G}_2$. Recall that the knowledge graph of $\mathcal{A}$ is the graph that it sees. In the following, we refer to vertices in the knowledge graph at each step as ``known'' and vertices outside it as ``unknown''.

We shall show that, regardless of whether we are given a graph from $\mathcal{G}_1$ or $\mathcal{G}_2$, the knowledge graph of $\mathcal{A}$ looks the same at each step with high probability if only a small number of queries have been made. To be more precise, we shall define a graph simulator $\mathcal{G}_s$ such that the knowledge graph of $\mathcal{A}$ when it interacts with either $\mathcal{G}_1$ or $\mathcal{G}_2$ looks the same as when it interacts with $\mathcal{G}_s$ with all but exponentially small probability, if only a polynomial number of queries are made.

To the benefit of $\mathcal{A}$, we assume that $\mathcal{A}$ has knowledge of which labels are antenna, \rooot, or body labels. We also suppose that all reached \rooot\ labels have been queried for free so that they and their neighbors all belong to the knowledge graph of $\mathcal{A}$ at the outset.
We can assume that $\mathcal{A}$ does not query labels that have already been queried and that each query to a vertex label returns the labels of all neighbors of that vertex. Then, at each step, $\mathcal{A}$ must choose to perform one of the following Actions:

\begin{enumerate}[label=\Roman*.]
    \item Query an unknown body label.

    \item Query an unknown antenna label.

    \item Query a known body label.

    \item Query a known antenna label.
\end{enumerate}

For ease of reference, we call the candy subgraphs of $\mathcal{G}_1$ and double-bow-tie subgraphs of $\mathcal{G}_2$ their  ``base graphs''. At each step of $\mathcal{A}$, we say that a base graph has been \textit{tapped} if any one of its vertices, other than a \rooot, is already in the knowledge graph of $\mathcal{A}$.

At each step of $\mathcal{A}$, we assume to its benefit that:
\begin{enumerate}
    \item[A1.] $\mathcal{A}$ wins if, upon taking Action I or Action II, the label of a vertex in a tapped base graph is revealed.

    \item[A2.] $\mathcal{A}$ wins if, upon taking Actions III or IV, the vertex whose label is queried is connected by an advice edge to a vertex in a tapped base graph.

    \item[A3.] When $\mathcal{A}$ performs Action II, the entire half-antenna in which the queried antenna label resides is revealed (i.e., all vertex labels within the antenna as well as edges between them become known).

    \item[A4.] When $\mathcal{A}$ performs Action IV, the entire half-antenna in which the queried antenna label resides is revealed.
\end{enumerate}

The point of making these assumptions is simply to make it easier to analyze how the knowledge graph of $\mathcal{A}$ changes. We consider A1 and A2 because when they do not occur, $\mathcal{A}$ is essentially restricted to local traversal in each base graph. We consider A3 and A4 so that we do not need to unduly worry about how $\mathcal{A}$ traverses the antenna, which is irrelevant for distinguishing $\mathcal{G}_1$ from $\mathcal{G}_2$.

\begin{lemma}\label{lem:random_goes_to_untapped}
If $\mathcal{A}$ uses $t$ queries, then its  probability of winning via either A1 or A2 above is at most $O(t^2\cdot 2^{-k})$.
\end{lemma}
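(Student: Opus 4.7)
The plan is to bound the per-step conditional probability of winning via A1 or A2, given the algorithm's history, and then union-bound over the $t$ steps. The key structural fact is that after $i$ queries at most $O(i)$ base graphs are tapped: Actions I, III, and IV each reveal $O(1)$ new vertices (a freshly queried vertex together with at most one advice-edge endpoint), while Action II (by assumption A3) reveals an entire half-antenna contained within a single base graph. Since each base graph contains $\Theta(2^k)$ vertices while the instance contains $\Theta(4^k)$ vertices overall, the union of tapped base graphs constitutes an $O(i \cdot 2^{-k})$ fraction of the whole graph whenever $i = o(2^k)$.

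The per-step probability bound is then obtained via two applications of the principle of deferred decisions. First, for Actions I and II, the random vertex labeling ensures that conditioned on the history, the newly revealed body (resp.\ antenna) vertex is uniformly distributed over the unknown body (resp.\ antenna) vertices --- the algorithm can distinguish these label classes, but within each class the correspondence with vertices remains uniformly random. Hence the probability it lies in a tapped base graph is $O(i \cdot 2^{-k})$, and a symmetric count handles the advice-edge endpoint also exposed by Action I. Second, for Actions III and IV the relevant object is the advice edge attached to the queried vertex; using the randomness of the bijections defined in step 3 of the sampling procedure and conditioning on the portion of the bijection already exposed, the endpoint is uniformly distributed over the remaining unpaired vertices of the appropriate type, again yielding probability $O(i \cdot 2^{-k})$.

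Summing the per-step probabilities gives $\sum_{i=1}^{t} O(i \cdot 2^{-k}) = O(t^2 \cdot 2^{-k})$, which is the desired bound. The main subtlety to handle carefully will be the uniform-over-unknown-vertices claim for the advice bijections, where the two bijections are constrained by the self-loop markings (body-weld vertices pair with antenna vertices in single-self-loop candies, and body-interior vertices pair with antenna vertices in even-self-loop candies). However, both bijections have domain and codomain of size $\Theta(4^k)$, so the unknown-and-unpaired populations of each type remain $\Theta(4^k)$ throughout the first $t = o(2^k)$ queries, and these constraints only affect constants in the final bound.
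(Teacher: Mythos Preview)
Your proposal is correct and takes essentially the same approach as the paper's proof. The paper's own argument is extremely terse: it simply notes that at most $2t$ base graphs are tapped in $t$ queries, that there are $\Theta(2^k)$ base graphs total, and that therefore each step wins via A1 with probability $O(t/2^k)$ (and similarly for A2 by randomness of the advice bijections), yielding $O(t^2/2^k)$ overall. Your write-up spells out the deferred-decisions reasoning and the vertex-level counting that the paper leaves implicit, and your handling of the two separate advice bijections is a detail the paper omits entirely, but the underlying argument is identical.
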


\begin{proof}
In $t$ queries, at most $2t$ base graphs can be tapped. However, there are $2^k$ base graphs in $\mathcal{G}_1$ and $\mathcal{G}_2$. Therefore, the probability of winning via A1 at each step is at most $O(t/2^{k})$, so the probability of winning via A1 at any step is $O(t^2/2^k)$. The same bound holds for winning via A2 due to the randomness in the connection of advice edges.
\end{proof}

Let us consider a graph simulator $\mathcal{G}_s$ that can respond to the different Actions of $\mathcal{A}$. As mentioned previously, we shall argue that its responses are similar to the responses of a random graph from either $\mathcal{G}_1$ and $\mathcal{G}_2$ when the number of queries is not too large. The responses of $\mathcal{G}_s$ depend on the Actions of $\mathcal{A}$ as follows:
\begin{enumerate}[label=\Roman*.]
    \item $\mathcal{G}_s$ returns a label chosen uniformly at random from antenna labels not currently in the knowledge graph. $\mathcal{G}_s$ also returns three labels chosen uniformly at random from body labels not currently in the knowledge graph.

    \item $\mathcal{G}_s$ returns an entire half-antenna with labels chosen uniformly at random from antenna labels not currently in the knowledge graph. $\mathcal{G}_s$ also returns one body label chosen uniformly at random from body labels not currently in the knowledge graph.

    \item $\mathcal{G}_s$ does one of the following two case-dependent actions.

    \begin{enumerate}
        \item If the queried body label is known because it is connected to a known antenna vertex, $\mathcal{G}_s$ returns three labels chosen uniformly at random from body labels not currently in the knowledge graph.

        \item If the queried body label is known because it is connected to a known body vertex, $\mathcal{G}_s$ returns two labels chosen uniformly at random from body labels not currently in the knowledge graph as well as a label chosen uniformly at random from antenna labels not currently in the knowledge graph.
    \end{enumerate}

    \item $\mathcal{G}_s$ does one of the following two case-dependent actions.

    \begin{enumerate}
        \item If the queried antenna label is known because it is connected to a known body vertex, $\mathcal{G}_s$ returns an entire half-antenna with labels chosen uniformly at random from antenna labels not currently in the knowledge graph.

        \item If the queried antenna label is known because it is connected to a known antenna vertex,
        $\mathcal{G}_s$ returns a label chosen uniformly at random from body labels not currently in the knowledge graph.
    \end{enumerate}
\end{enumerate}
Note that in cases III and IV, $\mathcal{G}_s$ also returns vertex labels that are consistent with its knowledge graph when querying a known label, but we omitted describing this for convenience. In particular, in case IV(b), $\mathcal{G}_s$ returns antenna labels consistent with the relevant half-antenna which must be entirely known by assumptions A3 and A4.

Now, suppose that $\mathcal{A}$ makes at most $t \leq 2^{k-1}$ queries to either $\mathcal{G}_1$ or $\mathcal{G}_2$. Then \lem{random_goes_to_untapped} says that the probability of it winning, at any step, via A1 or A2 is at most $O(t^2\cdot 2^{-k})$. We henceforth assume that neither A1 nor A2 ever occurs. In this case, we explain why $\mathcal{G}_s$ responds like $\mathcal{G}_1$ and $\mathcal{G}_2$ when $\mathcal{A}$ performs each of the following Actions:
\begin{enumerate}[label=\Roman*.]
    \item When $\mathcal{A}$ queries unknown body label, because A1 does not occur, the corresponding body vertex as well as its neighbors lie in two untapped base graphs. Therefore, the labels returned are outside the knowledge graph of $\mathcal{A}$ and so are distributed at random among those not in its knowledge graph, just like those returned by $\mathcal{G}_s$ in response to Action I.

    \item When $\mathcal{A}$ queries an unknown antenna label, because A1 does not occur, the corresponding entire half-antenna (cf.~A3) and the neighboring body vertex lie in two untapped base graphs. Therefore, the labels returned are outside the knowledge graph of $\mathcal{A}$ and so are distributed at random among those not in its knowledge graph, just like those returned by $\mathcal{G}_s$ in response to Action II.

    \item  When $\mathcal{A}$ queries a known body label, there are two cases, depending on how the body vertex is known.
    \begin{enumerate}
        \item If it is known because it is connected to a known antenna vertex by an advice edge, then it must be in an untapped base graph because A2 does not occur. Hence $\mathcal{G}_s$ behaves the same as $\mathcal{G}_i$.
        \item If it is known because it is connected to a known body vertex, then \lem{self_weld_entrance} and \lem{self_weld_random} (respectively \lem{normal_weld_entrance} and \lem{normal_weld_random}) imply that $\mathcal{G}_s$ behaves the same as $\mathcal{G}_2$ (respectively $\mathcal{G}_1$), except with probability at most $O(t^2\cdot 2^{-k/4})$. \lem{self_weld_entrance} and \lem{normal_weld_entrance} address the case when the queried body vertex is connected to the root of a base graph. \lem{self_weld_random} and \lem{normal_weld_random} address the case when it is connected to a random vertex in the body of a base graph.
    \end{enumerate}

    \item When $\mathcal{A}$ queries a known antenna label, there are two cases, depending on how the antenna vertex is known.

    \begin{enumerate}
        \item If it is known because it is connected to a known body vertex by an advice edge, then it must be in an untapped base graph because A2 does not occur. Hence $\mathcal{G}_s$ behaves the same as $\mathcal{G}_i$ (cf.~A4).

        \item If it is known because it is connected to a known antenna vertex, then the entire half-antenna must already be known (cf.~A4), and the body vertex connected to it must be in an untapped base graph because A2 does not occur. Hence $\mathcal{G}_s$ behaves the same as $\mathcal{G}_i$.
    \end{enumerate}
\end{enumerate}

Therefore, the probability of $\mathcal{A}$ distinguishing $\mathcal{G}_i$ from $\mathcal{G}_s$ is at most $O(t^2\cdot 2^{-k/4})$ for each $i=1, 2$. Therefore, the probability of distinguishing $\mathcal{G}_1$ from $\mathcal{G}_2$ is also at most $O(t^2\cdot 2^{-k/4})$. Now, \lem{far_from_bipartite} implies that there exists a constant $\epsilon$ such that for all $k$ sufficiently large, the probability that $\mathcal{G}_2$ is  $\epsilon$-\textit{close} to $\mathcal{G}_1$ is at most $0.1$. Let $\mathcal{G}_2'$ be a random graph from those in $\mathcal{G}_2$ that are $\epsilon$-far from $\mathcal{G}_1$. Then the probability of distinguishing $\mathcal{G}_2'$ from $\mathcal{G}_2$ is at most $0.1$. Therefore, the probability of distinguishing $\mathcal{G}_2'$ from $\mathcal{G}_1$ is at most $0.1+O(t^2\cdot 2^{-k/4})$. Hence, to distinguish $\mathcal{G}_1$ from $\mathcal{G}_2'$ with probability at least $2/3$ requires $t=\exp(\Omega(k))$ queries. This establishes the desired classical lower bound in \thm{adjacency_list}.

\section{Open problems}
\label{sec:open}

We conclude by discussing a few open problems.

We proved that $p$-uniform hypergraph properties have at most a power $3p$ separation between quantum and randomized query complexity. However, we do not know if this is tight.

\begin{open}
What is the largest possible separation between $Q(f)$ and $R(f)$ for $p$-uniform hypergraph properties $f$? That is, what is the largest $k$ for which there exists such an $f$ with $R(f) = \Omega(Q(f)^{k})$?
\end{open}

We remark that the problem is open even for the case $p = 1$ of fully symmetric functions, where the best upper bound is $k \le 3$ due to Chailloux \cite{Cha18}, and the best lower bound is $k \ge 2$ for the $\OR$ function \cite{Gro96}. For larger $p$, it is at least possible to exhibit functions with a power separation of $k \ge p / 2$ by appealing to \prop{small_base}, choosing the function $f$ in \prop{small_base} to be Forrelation \cite{AA15}, and using an upper bound of $b(G) = O(m)$ for $G$ the action of $S_m$ on $\binom{m}{p}$ points (Theorem 3.2 of \cite{Hal12}). Thus there must be some dependence on $p$.

In \cor{imprimitive_classification}, we showed that well-shuffling permutation groups must be constructed out of a constant number of primitive groups with sufficiently large minimal base size. We conjecture that a converse holds, which would imply a complete dichotomy regarding which permutation groups allow super-polynomial quantum speedups and which do not:

\begin{conjecture}\label{conj:shuffling_dichotomy}
Let $\mathcal{G}$ be a collection of permutation groups that satisfies conditions (i)--(iv) of \cor{imprimitive_classification}. Then $\mathcal{G}$ is well-shuffling, and thus $R(f) = Q(f)^{O(1)}$ for every $f \in F(\mathcal{G})$.
\end{conjecture}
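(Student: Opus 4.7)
The proof of \conj{shuffling_dichotomy} reduces, via induction on the wreath-product decomposition depth (which is $O(1)$ by condition (iii)) and an orbit-decomposition step (handling condition (i)), to one key structural lemma: \emph{if $G$ and $H$ are well-shuffling permutation groups, then the imprimitive action of $G \wr H$ is also well-shuffling with some polynomial power.} Given this lemma, together with \thm{group-product}, \thm{classes}, \lem{cutting_into_orbits}, \lem{extension}, and \lem{embedding}, every $G \in \mathcal{G}$ satisfying conditions (i)--(iv) is built from well-shuffling primitive factors by a constant number of cost-preserving operations, and the conclusion $R(f) = Q(f)^{O(1)}$ follows from \thm{shuffle}. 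Condition (ii) is automatic, since restriction to a union of orbits can only increase the cost by \lem{cutting_into_orbits}, so well-shuffling passes to such restrictions.

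For the orbit step, the plan is to generalize \thm{group-product} to groups with $k = O(1)$ orbits by a $k$-step hybrid argument that replaces the uniform distribution on each $G|_{O_i}$ by its hard distribution one orbit at a time; each swap loses at most $\cost(G|_{O_i}, r)$ queries, and the constant number of orbits absorbs the polynomial blowup in range from $r$ to $r^k$.

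For the wreath-product lemma, the plan is to build a hard distribution for $G \wr H$ via a two-level construction and establish indistinguishability from the uniform distribution on $G \wr H$ via a hybrid argument. Let $\mu_G$ be a hard distribution on $D_{m, r_G}$ for $G$ and $\mu_H$ a hard distribution on $D_{n, r_H}$ for $H$. Starting from the uniform distribution on $G \wr H$, the first hybrid replaces the outer permutation $\sigma \in G$ by $\alpha \sim \mu_G$ while keeping the inner $\tau_i$'s iid uniform in $H$; indistinguishability follows from $\cost(G, r_G) \ge s_G$ because the inner $\tau_i$'s can be simulated using purely classical randomness and do not require queries. The second hybrid replaces the inner $\tau_i$'s by small-range $\beta_i$'s from $\mu_H$, yielding the candidate hard distribution $f(i,j) = (\alpha(i), \beta_i(j))$.

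The hard part will be the second hybrid: in the uniform distribution on $G \wr H$ the $\tau_i$'s are independent across $i$, but if we also sample the $\beta_i$'s independently then the image of $f$ can be as large as $r_G \cdot (m / r_G) \cdot r_H = m \, r_H$, violating the small-range requirement; conversely, tying $\beta_i = \beta_{\alpha(i)}$ within each $\alpha$-fiber keeps the range at $r_G r_H$ but introduces fiber-level correlations that are absent from the uniform distribution. My proposed resolution is to exploit the collision lower bound \cite{AS04} applied at the outer level: any quantum algorithm making $T \ll r_G^{1/3}$ queries finds two inputs in the same $\alpha$-fiber with probability $O(T^2 / r_G^{2/3})$, and conditioned on no such collision the shared-$\beta_y$ and independent-$\beta_i$ distributions are information-theoretically identical from the algorithm's viewpoint. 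A careful union bound then controls the distinguishing advantage by $O(T^2/r_G^{2/3}) + \eps_G + \eps_H$, giving a cost lower bound of order $\min\{r_G^{1/3}, \cost(G, r_G), \cost(H, r_H)\}$, which is polynomial in $r_G r_H$. A polynomial-method refinement of \thm{transitivity_transformation} adapted to wreath-product block structure may yield a cleaner proof and is the most promising alternative; I expect either route to force a modest degradation of the well-shuffling power at each level, which is tolerable because condition (iii) bounds the depth by $O(1)$.
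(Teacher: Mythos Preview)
The statement is a \emph{conjecture} in the paper's open-problems section; the paper does not prove it. It only offers the intuition that a proof should go ``essentially in the same way we classified the well-shuffling primitive groups, via reduction to the symmetric and alternating groups,'' i.e., by locating well-shuffling \emph{subgroups} inside $G$ and invoking \lem{extension}.

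Your proposal has a genuine structural gap that is independent of whether the wreath-product lemma can be proved. You reduce everything to ``$G, H$ well-shuffling $\Rightarrow$ $G \wr H$ well-shuffling,'' but this lemma, even granted, does not close the argument. By \lem{embedding}, a transitive $K$ is only a \emph{subgroup} of the wreath product of its primitive factors, and \lem{extension} points the wrong way: $\cost$ is monotone \emph{upward} under group inclusion, so well-shuffling passes to supergroups, not subgroups. Showing that $H_1 \wr \cdots \wr H_d$ is well-shuffling gives no lower bound on $\cost(K, r)$ for $K \subsetneq H_1 \wr \cdots \wr H_d$. The same issue bites at the orbit step: a group with $k$ orbits is only a subdirect product of its restrictions $G|_{O_i}$, i.e., a possibly proper subgroup of $\prod_i G|_{O_i}$ in its disjoint-union action on $\bigsqcup_i O_i$, and \thm{group-product} is about the product action on $[n_1 n_2]$, not this disjoint-union action. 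None of the tools you list (\thm{classes}, \lem{cutting_into_orbits}, \lem{extension}) lower-bounds $\cost(G, r)$ in terms of the factors; each of them bounds the cost of something \emph{derived from} $G$ in terms of $\cost(G,r)$, which is the reverse of what the induction needs. To make your outline work you would need an additional lemma explaining why every subgroup of $H_1 \wr \cdots \wr H_d$ satisfying (i)--(iv) is itself well-shuffling, and that is essentially the full conjecture restated. The paper's suggested route sidesteps this by exploiting the concrete structure from condition (iv) and Liebeck's theorem (each primitive factor contains a large $A_m^{\times \ell}$) to exhibit well-shuffling subgroups \emph{of $G$ itself}, so that \lem{extension} applies in the correct direction.
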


\conj{shuffling_dichotomy} is based on our intuition that it is challenging to find an interesting family of permutation groups that satisfies these conditions, but that cannot be constructed out of well-shuffling primitive groups using the transformations that preserve the well-shuffling property. For example, even small subgroups of a wreath product of well-shuffling groups can remain well-shuffling: the product of of two well-shuffling permutation groups $G_1 \times G_2$ (as defined in \defn{group_product}) is well-shuffling, as shown in \thm{group-product}. But $G_1 \times G_2$ can be viewed as a subgroup of $G_1 \wr G_2$ in imprimitive action, and is potentially much smaller than the full wreath product. Based on this intuition, we expect that it is possible to prove \conj{shuffling_dichotomy} essentially in the same way we classified the well-shuffling primitive groups, via reduction to the symmetric and alternating groups.

In this form, \conj{shuffling_dichotomy} would also imply that the well-shuffling property completely characterizes whether a collection of permutation groups allows super-polynomial quantum speedups or not. It would be interesting to prove such a characterization directly, without appealing to the structure of such groups.

In \prop{small_base}, we showed how to construct super-polynomial quantum speedups out of any sufficiently small permutation group. However, the construction requires large alphabets (indeed, even larger than the number of input symbols). Might it be possible to construct such functions over a smaller alphabet, or even a Boolean alphabet? Note that some permutation groups cannot be realized as the group of symmetries of a function with Boolean inputs \cite{Kis98}.

Finally, while we constructed a separation between classical and quantum graph property testing in the adjacency list model, it remains unclear whether such a separation exists for graph properties of practical interest. For example, it remains open whether there could be an exponential quantum speedup for testing bipartiteness \cite{ACL11}. Going beyond that particular example, it would be interesting to investigate the possibility of a super-polynomial quantum speedup for testing \emph{monotone} graph properties (i.e., those that remain satisfied when adding edges to a yes instance).

\section*{Acknowledgements}

We thank Scott Aaronson and Carl Miller for many helpful discussions. SP also thanks Zak Webb for many related discussions. Part of this work was done while visiting the Simons Institute for the Theory of Computing. We gratefully acknowledge the Institute's hospitality.

AMC and DW acknowledge support from the Army Research Office (grant W911NF-20-1-0015); the Department of Energy, Office of Science, Office of Advanced Scientific Computing Research, Quantum Algorithms Teams and Accelerated Research in Quantum Computing programs; and the National Science Foundation (grant CCF-1813814). AG acknowledges funding provided by Samsung Electronics Co., Ltd., for the project ``The Computational Power of Sampling on Quantum Computers''. Additional support was provided by the Institute for Quantum Information and Matter, an NSF Physics Frontiers Center (NSF Grant PHY-1733907). WK acknowledges support from a Vannevar Bush Fellowship from the US Department of Defense.

\appendix

\section{Proof of the quantum minimax lemma}
\label{app:minimax}

We prove \lem{minimax}, which we restate below.

\minimax*

\begin{proof}
By \cite{BSS03}, there is a finite bound $B$ expressible in terms of
$n$ and $|\Sigma|$ on the necessary size of the work space register for a quantum
algorithm computing $f$ with error at most $\epsilon$. This means the quantum query
algorithms we deal with can be assumed without loss
of generality to have work space size $B$. A quantum algorithm making $T$ queries
can be represented as a sequence of $T$ unitary matrices of size
upper bounded by $B$; this can be arranged as a finite vector of complex numbers.
It is not hard to see that the set of all
such valid quantum algorithms is a compact set.

For a quantum algorithm $Q$, let $\err(Q,x)$ denote the error $Q$ makes
when run on input $x\in\Dom(x)$; this is $\Pr[Q(x)\ne f(x)]$, where
$Q(x)$ is the random variable for the measured output of $Q$ when run on $x$.
We note that $\err(Q,x)$ is a continuous function of $Q$.
Let $v_Q$ be the vector in $\bR^{|\Dom(f)|}$ defined by
$v_Q[x]\coloneqq\err(Q,x)$. Then $v_Q$ is a continuous function of $Q$.
Further, let $V$ be the set of all such vectors $v_Q$ for valid quantum
algorithms $Q$ which make at most $\Q_\epsilon(f)-1$ queries.
Since the set of such valid quantum algorithms is compact and since $v_Q$
is continuous in $Q$, we conclude that $V$ is compact.
Furthermore, we claim that $V$ is convex: this is because for any two
quantum algorithms $Q$ and $Q'$, there is a quantum algorithm $Q''$ that
behaves like their mixture (in terms of its error on each input $x$).

Next, let $\Delta\subseteq\bR^{|\Dom(f)|}$ be the set of all probability
distributions over $\Dom(f)$. Then $\Delta$ is also convex and compact.
Finally, define $\alpha\colon V\times\Delta\to\bR$ by
$\alpha(v,\mu)\coloneqq\bE_{x\leftarrow\mu}v[x]=\sum_{x\in\Dom(f)}\mu[x]v[x]$.
Then $\alpha$ is continuous in each coordinate, and is \emph{saddle}:
that is, $\alpha(\cdot,\mu)$ is convex for each $\mu\in\Delta$
(indeed, it is linear),
and $\alpha(v,\cdot)$ is concave for each $v\in V$ (indeed, it is also linear).
A standard minimax theorem (e.g., \cite{Sio58}) then gives us
\[\min_{v\in V}\max_{\mu\in\Delta}\alpha(v,\mu)
=\max_{\mu\in\Delta}\min_{v\in V}\alpha(v,\mu).\]
For the left-hand side, it is clear that the maximum over $\mu$ (once the vector $v$
has been chosen) is the same as the maximum over $x\in\Dom(f)$ of $v[x]$.
This makes the left-hand side the minimum over $v\in V$ of $\|v\|_{\infty}$,
or equivalently, the minimum worst-case error of quantum algorithms making at most
$\Q_\epsilon(f)-1$ queries. By the definition of $\Q_\epsilon(f)$, this minimum
must be strictly greater than $\epsilon$ (or else $\Q_\epsilon(f)$ would be smaller).
Hence the left-hand side is strictly greater than $\epsilon$.

Looking at the right-hand side, we get a single distribution
$\mu$ such that every quantum algorithm $Q$ making at most $\Q_\epsilon(f)-1$
queries must make error greater than $\epsilon$ against $\mu$, as desired.
\end{proof}

\section{Quantum speedup for deep wreath product symmetries}
\label{app:wreath_tree}

Here, we complete the proof of \thm{deep_wreath_speedup}. We define $\kFaultDirectTrees$ almost exactly as in Definition 2 of \cite{Kim12}, with the addition that we allow the Boolean evaluation tree to be an arbitrary block composition $f_1 \circ f_2 \circ \cdots \circ f_d$ of (possibly different) direct Boolean functions $f_i$, where $f_i$ has $n_i$ inputs. As long as each $f_i$ is a symmetric function, then $\kFaultDirectTrees$ is symmetric under the iterated wreath product $S_{n_1} \wr S_{n_2} \wr \cdots \wr S_{n_d}$ of symmetric groups because for any vertex $v$, the quantity $\kappa(v)$ defined in Definition 2 of \cite{Kim12} does not depend on the ordering of the children of $v$.

Recall that we choose $f_i\colon S \to \{0,1\}$ where $S \subseteq \{0,1\}^{n_i}$ consists of the $n_i$-bit strings that have Hamming weight in $\{0,\lfloor n_i/2 \rfloor, \lceil n_i/2 \rceil, n_i\}$, and $f_i(x) = 0$ if and only if $x = 1^{n_i}$. The function $f_i$ is defined to be trivial on $x$ if $x = 0^{n_i}$ or $1^{n_i}$; otherwise $f_i$ is defined to be fault on $x$. If $f_i(x)$ is trivial on $x$, then all input bits of $x$ are defined to be strong. Otherwise, if $f_i(x)$ is fault on $x$, the bits $j$ such that $x_j = 1$ are defined to be weak, and the bits $j$ such that $x_j = 0$ are defined to be strong.

Using the notation from Definition 2.1 of \cite{ZKH12}, we define a span program $P_i$ for $f$ by $v_j = \frac{1}{\sqrt{n_i}} \in \mathbb{R}^1$ and $\chi_j = \bar{x}_j$ for all $j \in [n_i]$. We now verify that $P_i$ satisfies the key properties that are needed to prove the quantum query upper bound in \cite{Kim12}. We have $r_0 = \frac{1}{\sqrt{n_i}}(1, 1, \ldots, 1) \in \mathbb{C}^{n_i}$ and $C = 1$ as in Definition 2.2 of \cite{ZKH12}, which gives the following:
\begin{itemize}
    \item $\textsc{wsize}_1(P_i, x) = 1$ if input $x$ makes $f_i$ trivial. If $x = 0^n$, choosing $\vec{w} = r_0$ suffices to show this. Otherwise, if $x = 1^n$, then we are forced to choose $\vec{w} = r_0$.
    \item $\textsc{wsize}_1(P_i, x) \le 3$ if input $x$ makes $f_i$ fault. Suppose $x = 0^{\lfloor n_i / 2 \rfloor}1^{\lceil n_i / 2 \rceil}$. Choosing $w_j = \frac{\sqrt{n_i}}{\lfloor n_i / 2 \rfloor}$ for $j \in [\lfloor n_i / 2 \rfloor]$ gives witness size $\frac{n_i}{\lfloor n_i / 2 \rfloor} \le 3$ (tight for $n_i = 3$). Similarly, for $x = 0^{\lceil n_i / 2 \rceil}1^{\lfloor n_i / 2 \rfloor}$, we take $w_j = \frac{\sqrt{n_i}}{\lceil n_i / 2 \rceil}$ for $j \in [\lceil n_i / 2 \rceil]$ to get witness size $\frac{n_i}{\lceil n_i / 2 \rceil} \le 2$.
    \item For $\textsc{wsize}_s(P_i, x)$, $s_j$ do not affect the witness size, where the $j$th input bit is weak. This is because the only weak input bits are those where $x_j = 1$ and $f_i$ is fault on $x$. Then $f_i(x) = 1$, and we have $\chi_j = 0$, so we are forced to take $w_j = 0$.
\end{itemize}

Now that these three conditions are satisfied, the proof of Theorem 2 of \cite{Kim12} goes through without modifications, and shows that this version of $\kFaultDirectTrees$ has quantum query complexity $O(1)$.

It remains to lower bound the randomized query complexity of $\kFaultDirectTrees$ as we have defined it. Suppose we further promise that at every vertex in the tree corresponding to $f_i$, the child subtrees are partitioned into two sets, one of size $\lfloor n_i / 2 \rfloor$ and one of size $\lceil n_i / 2 \rceil$, such that all of the subtrees in one part have the same values at corresponding leaves. Then with this additional promise, the problem remains at least as hard as the $1\textsc{-Fault Nand Trees}$ problem: if we always take the partition to be the $\lfloor n_i / 2 \rfloor$ left vertices and the $\lceil n_i / 2 \rceil$ right vertices, then it is in fact equivalent to $1\textsc{-Fault Nand Trees}$ where some inputs are repeated. \cite{ZKH12} proved that depth-$d$ $1\textsc{-Fault Nand Trees}$ have randomized query complexity at least $\Omega(\log d)$, which completes our proof.

\phantomsection\addcontentsline{toc}{section}{References}
\renewcommand{\UrlFont}{\ttfamily\small}
\let\oldpath\path
\renewcommand{\path}[1]{\small\oldpath{#1}}
\newcommand{\etalchar}[1]{$^{#1}$}
\newcommand{\lName}{1}\newcommand{\arxiv}[1]{
  \href{https://arxiv.org/abs/#1}{\ttfamily{arXiv:#1}}\?}\newcommand{\arXiv}[1]{
  \href{https://arxiv.org/abs/#1}{\ttfamily{arXiv:#1}}\?}\def\?#1{\if.#1{}\else#1\fi}\providecommand{\multiletter}[1]{#1}\renewcommand{\multiletter}[1]{#1}\DeclareRobustCommand{\dutchPrefix}[2]{#2}\providecommand{\dutchPrefix}[2]{#2}\renewcommand{\dutchPrefix}[2]{#2}\newcommand{\skp}[3]{#2}\newcommand{\focs
  }[1]{\if\lName1\skp{ }{Proceedings of the #1 {IEEE} Symposium on Foundations
  of Computer Science ({FOCS})}{ }\else{FOCS}\fi}\newcommand{\stoc
  }[1]{\if\lName1\skp{ }{Proceedings of the #1 {ACM} Symposium on the Theory of
  Computing ({STOC})}{ }\else{STOC}\fi}\newcommand{\soda }[1]{\if\lName1\skp{
  }{Proceedings of the #1 {ACM-SIAM} Symposium on Discrete Algorithms
  ({SODA})}{ }\else{SODA}\fi}\newcommand{\stacs }[1]{\if\lName1\skp{
  }{Proceedings of the #1 Symposium on Theoretical Aspects of Computer Science
  ({STACS})}{ }\else{STACS}\fi}\newcommand{\itcs }[1]{\if\lName1\skp{
  }{Proceedings of the #1 Innovations in Theoretical Computer Science
  Conference (ITCS)}{ }\else{ITCS}\fi}\newcommand{\fsttcs }[1]{\if\lName1\skp{
  }{Proceedings of the #1 International Conference on Foundations of Software
  Technology and Theoretical Computer Science (FSTTCS)}{
  }\else{FSTTCS}\fi}\newcommand{\ccc }[1]{\if\lName1\skp{ }{Proceedings of the
  #1 {IEEE} Conference on Computational Complexity ({CCC})}{
  }\else{CCC}\fi}\newcommand{\isit }[1]{\if\lName1\skp{ }{Proceedings of the #1
  {IEEE} International Symposium on Information Theory ({ISIT})}{
  }\else{ISIT}\fi}\newcommand{\colt }[1]{\if\lName1\skp{ }{Proceedings of the
  #1 Conference On Learning Theory (COLT)}{ }\else{COLT}\fi}\newcommand{\nips
  }[1]{\if\lName1\skp{ }{Advances in Neural Information Processing Systems #1
  ({NIPS})}{ }\else{NIPS}\fi}\newcommand{\aistats }[1]{\if\lName1\skp{
  }{Proceedings of the #1 International Conference on Artificial Intelligence
  and Statistics ({AISTATS})}{ }\else{AISTATS}\fi}\newcommand{\icml
  }[1]{\if\lName1\skp{ }{Proceedings of the #1 International Conference on
  Machine Learning (ICML)}{ }\else{ICML}\fi}\newcommand{\icalp
  }[1]{\if\lName1\skp{ }{Proceedings of the #1 International Colloquium on
  Automata, Languages, and Programming (ICALP)}{
  }\else{ICALP}\fi}\newcommand{\esa }[1]{\if\lName1\skp{ }{Proceedings of the
  #1 Annual European Symposium on Algorithms (ESA)}{
  }\else{ESA}\fi}\newcommand{\tqc }[1]{\if\lName1\skp{ }{Proceedings of the #1
  Conference on the Theory of Quantum Computation, Communication, and
  Cryptography (TQC)}{}\else{TQC}\fi}\newcommand{\jacm }{\if\lName1\skp{
  }{Journal of the ACM}{ }\else{J. ACM}\fi}\newcommand{\acmta }{\if\lName1\skp{
  }{ACM Transactions on Algorithms}{ }\else{{ACM} Tr.
  Alg}\fi}\newcommand{\acmtct }{\if\lName1\skp{ }{ACM Transactions on
  Computation Theory}{ }\else{ACM Tr. Comp. Th.}\fi}\newcommand{\jams
  }{\if\lName1\skp{ }{Journal of the AMS}{ }\else{J. AMS}\fi}\newcommand{\pams
  }{\if\lName1\skp{ }{Proceedings of the AMS}{ }\else{Proc.
  AMS}\fi}\newcommand{\linalgappl }{\if\lName1\skp{ }{Linear Algebra and its
  Applications}{ }\else{Lin. Alg. \& App.}\fi}\newcommand{\jalgo
  }{\if\lName1\skp{ }{Journal of Algorithms}{ }\else{J.
  Alg.}\fi}\newcommand{\jcss }{\if\lName1\skp{ }{Journal of Computer and System
  Sciences}{ }\else{J. Comp. Sys. Sci.}\fi}\newcommand{\cc }{\if\lName1\skp{
  }{Computational Complexity}{ }\else{Comp. Comp.}\fi}\newcommand{\algor
  }{\if\lName1\skp{ }{Algorithmica}{ }\else{Alg.}\fi}\newcommand{\comb
  }{\if\lName1\skp{ }{Combinatorica}{ }\else{Comb.}\fi}\newcommand{\cacm
  }{\if\lName1\skp{ }{Communications of the ACM}{ }\else{Comm.
  ACM}\fi}\newcommand{\sigart }{\if\lName1\skp{ }{SIGART Bulletin}{
  }\else{SIGART Bull.}\fi}\newcommand{\sigactn }{\if\lName1\skp{ }{SIGACT
  News}{ }\else{SIGACT News}\fi}\newcommand{\eatcsbul }{\if\lName1\skp{
  }{Bulletin of the {EATCS}}{ }\else{Bull. {EATCS}}\fi}\newcommand{\siamrev
  }{\if\lName1\skp{ }{SIAM Review}{ }\else{SIAM Rev.}\fi}\newcommand{\siamjc
  }{\if\lName1\skp{ }{SIAM Journal on Computing}{ }\else{SIAM J.
  Comp.}\fi}\newcommand{\siamjo }{\if\lName1\skp{ }{SIAM Journal on
  Optimization}{ }\else{SIAM J. Opt.}\fi}\newcommand{\siamjdm }{\if\lName1\skp{
  }{SIAM Journal on Discrete Mathematics}{ }\else{SIAM J. Disc.
  Math.}\fi}\newcommand{\siamjnum }{\if\lName1\skp{ }{SIAM Journal on Numerical
  Analysis}{ }\else{SIAM J. Num. Anal.}\fi}\newcommand{\siamjmathanal
  }{\if\lName1\skp{ }{SIAM Journal on Mathematical Analysis}{ }\else{SIAM J.
  Math. Anal.}\fi}\newcommand{\discmath }{\if\lName1\skp{ }{Discrete
  Mathematics}{ }\else{Disc. Math.}\fi}\newcommand{\das }{\if\lName1\skp{
  }{Discrete Applied Mathematics}{ }\else{Disc. App.
  Math.}\fi}\newcommand{\amatstat }{\if\lName1\skp{ }{Annals of Mathematical
  Statistics}{ }\else{Ann. Math. Stat.}\fi}\newcommand{\rms }{\if\lName1\skp{
  }{Russian Mathematical Surveys}{ }\else{Russ. Math.
  Surv.}\fi}\newcommand{\invmath }{\if\lName1\skp{ }{Inventiones Mathematicae}{
  }\else{Inv. Math.}\fi}\newcommand{\jnumber }{\if\lName1\skp{ }{Journal of
  Number Theory}{ }\else{J. Num. Th.}\fi}\newcommand{\toc }{\if\lName1\skp{
  }{Theory of Computing}{ }\else{Th. Comp.}\fi}\newcommand{\cjtcs
  }{\if\lName1\skp{ }{Chicago Journal of Theoretical Computer
  Science}{}\else{Chic. J. Th. Comp. Sci.}\fi}\newcommand{\quantum
  }{\if\lName1\skp{ }{{Quantum}}{ }\else{Quant.}\fi}\newcommand{\cmp
  }{\if\lName1\skp{ }{Communications in Mathematical Physics}{ }\else{Comm.
  Math. Phys.}\fi}\newcommand{\jmp }{\if\lName1\skp{ }{Journal of Mathematical
  Physics}{ }\else{J. Math. Phys.}\fi}\newcommand{\rspa }{\if\lName1\skp{
  }{Proceedings of the Royal Society A}{ }\else{Proc. Roy. Soc.
  A}\fi}\newcommand{\qic }{\if\lName1\skp{ }{Quantum Information and
  Computation}{ }\else{Quant. Inf. \& Comp.}\fi}\newcommand{\physrev
  }{\if\lName1\skp{ }{Physical Review}{ }\else{Phys. Rev.}\fi}\newcommand{\pra
  }{\if\lName1\skp{ }{Physical Review A}{ }\else{Phys. Rev.
  A}\fi}\newcommand{\prb }{\if\lName1\skp{ }{Physical Review B}{ }\else{Phys.
  Rev. B}\fi}\newcommand{\pre }{\if\lName1\skp{ }{Physical Review E}{
  }\else{Phys. Rev. E}\fi}\newcommand{\prx }{\if\lName1\skp{ }{Physical Review
  X}{ }\else{Phys. Rev. X}\fi}\newcommand{\prl }{\if\lName1\skp{ }{Physical
  Review Letters}{ }\else{Phys. Rev. Lett.}\fi}\newcommand{\njp
  }{\if\lName1\skp{ }{New Journal of Physics}{ }\else{New J.
  Phys.}\fi}\newcommand{\prapp }{\if\lName1\skp{ }{Physical Review Applied}{
  }\else{Phys. Rev. Appl.}\fi}\newcommand{\physrep }{\if\lName1\skp{ }{Physics
  Reports}{ }\else{Phys. Rep.}\fi}\newcommand{\rmp }{\if\lName1\skp{ }{Reviews
  of Modern Physics}{ }\else{Rev. Mod. Phys. }\fi}\newcommand{\phystoday
  }{\if\lName1\skp{ }{Physics Today}{ }\else{Phys.
  Today}\fi}\newcommand{\physics }{\if\lName1\skp{ }{Physics}{
  }\else{Phys.}\fi}\newcommand{\nature }{\if\lName1\skp{ }{Nature}{
  }\else{Nat.}\fi}\newcommand{\natcomm }{\if\lName1\skp{ }{Nature
  Communications}{ }\else{Nat. Comm.}\fi}\newcommand{\natphys }{\if\lName1\skp{
  }{Nature Physics}{ }\else{Nat. Phys.}\fi}\newcommand{\npjqi }{\if\lName1\skp{
  }{npj Quantum Information}{ }\else{npj Quant. Inf.}\fi}\newcommand{\scirep
  }{\if\lName1\skp{ }{Scientific Reports}{ }\else{Sci.
  Rep.}\fi}\newcommand{\science }{\if\lName1\skp{ }{Science}{
  }\else{Sci.}\fi}\newcommand{\jpa }{\if\lName1\skp{ }{Journal of Physics A:
  Mathematical and Theoretical}{ }\else{J. Phys. A}\fi}\newcommand{\ijtp
  }{\if\lName1\skp{ }{International Journal of Theoretical Physics}{
  }\else{Int. J. Th. Phys.}\fi}\newcommand{\jmo }{\if\lName1\skp{ }{Journal of
  Modern Optics}{ }\else{J. Mod. Opt.}\fi}\newcommand{\jstatph
  }{\if\lName1\skp{ }{Journal of Statistical Physics}{ }\else{J. Stat.
  Phys.}\fi}\newcommand{\pnas }{\if\lName1\skp{ }{Proceedings of the National
  Academy of Sciences}{ }\else{PNAS}\fi}\newcommand{\lncs }{\if\lName1\skp{
  }{Lecture Notes in Computer Science}{ }\else{L. Notes Comp.
  Sci.}\fi}\newcommand{\lnai }{\if\lName1\skp{ }{Lecture Notes in Artificial
  Intelligence}{ }\else{L. Notes Art. Int.}\fi}\newcommand{\lnm
  }{\if\lName1\skp{ }{Lecture Notes in Mathematics}{ }\else{L. Notes
  Math.}\fi}\newcommand{\tams }{\if\lName1\skp{ }{Transactions of the American
  Mathematical Society}{ }\else{Trans. AMS}\fi}\newcommand{\ieeetit
  }{\if\lName1\skp{ }{{IEEE} Transactions on Information Theory}{ }\else{{IEEE}
  Trans. Inf. Th.}\fi}\newcommand{\iscs }{\if\lName1\skp{ }{International
  Series in Computer Science}{ }\else{Int. Ser. Comp.
  Sci.}\fi}\newcommand{\tocl }{\if\lName1\skp{ }{Theory of Computing Library}{
  }\else{Th. Comp. Lib.}\fi}

\end{document}